\def\poly{\mathrm{poly}}
\newcommand{\tarkhide}[1]{#1}
\newcommand{\tarkhideadd}[2]{#1}
\newcommand{\comment}[1]{}
\newcommand{\OMIT}[1]{} %
\def\literalqed{{\ \nolinebreak\hfill\mbox{\qedblob\quad}}}
\newcommand{\hugeDebug}{false}
\newcommand{\singlespacing}{\let\CS=
\@currsize\renewcommand{\baselinestretch}{1}\tiny\CS}
\newcommand{\singlespacingplus}{\let\CS=
\@currsize\renewcommand{\baselinestretch}{1.25}\tiny\CS}
\newcommand{\doublespacing}{\let\CS=
\@currsize\renewcommand{\baselinestretch}{1.75}\tiny\CS}
\newcommand{\extradoublespacing}{\let\CS=
\@currsize\renewcommand{\baselinestretch}{1.9}\tiny\CS}
\newcommand{\draftspacing}{\let\CS=
\@currsize\renewcommand{\baselinestretch}{2.0}\tiny\CS}
\newcommand{\hugedraftspacing}{\let\CS=
\@currsize\renewcommand{\baselinestretch}{2.4}\tiny\CS}
\newcommand{\niceonespacing}{\let\CS=\@currsize\renewcommand{\baselinestretch}{1.1}\tiny\CS}
\newcommand{\nicetwospacing}{\let\CS=\@currsize\renewcommand{\baselinestretch}{1.2}\tiny\CS}
\newcommand{\nicethreespacing}{\let\CS=\@currsize\renewcommand{\baselinestretch}{1.3}\tiny\CS}
\newcommand{\nicehackspacing}{\let\CS=\@currsize\renewcommand{\baselinestretch}{1.334}\tiny\CS}
\newcommand{\singlespacingplusplus}{\let\CS=\@currsize\renewcommand{\baselinestretch}{1.35}\tiny\CS}
\newcommand{\nicefourspacing}{\let\CS=\@currsize\renewcommand{\baselinestretch}{1.4}\tiny\CS}
\newcommand{\nicefivespacing}{\let\CS=\@currsize\renewcommand{\baselinestretch}{1.5}\tiny\CS}
\newcommand{\nicesixspacing}{\let\CS=\@currsize\renewcommand{\baselinestretch}{1.6}\tiny\CS}
\newcommand{\nicesevenspacing}{\let\CS=\@currsize\renewcommand{\baselinestretch}{1.7}\tiny\CS}
\newcommand{\niceeightspacing}{\let\CS=\@currsize\renewcommand{\baselinestretch}{1.8}\tiny\CS}
\newcommand{\niceninespacing}{\let\CS=\@currsize\renewcommand{\baselinestretch}{1.9}\tiny\CS}
\def\mmmddyyyy{\ifcase\month\or Jan\or Feb\or Mar\or Apr\or May\or Jun\or Jul\or
  Aug\or Sep\or Oct\or Nov\or Dec\fi \space\number\day, \number\year}
\def\hhmm{\ifnum\hour<10 0\fi\number\hour :%
  \ifnum\minutes<10 0\fi\number\minutes}
\def\@cite#1#2{[#1\if@tempswa , #2\fi]}
\def\@citex[#1]#2{\if@filesw\immediate\write\@auxout{\string\citation{#2}}\fi
  \def\@citea{}\@cite{\@for\@citeb:=#2\do
    {\@citea\def\@citea{,\linebreak[0]}\@ifundefined
       {b@\@citeb}{{\bf ?}\@warning
       {Citation `\@citeb' on page \thepage \space undefined}}%
\hbox{\csname b@\@citeb\endcsname}}}{#1}}
\def\@cite#1#2{[#1\if@tempswa , #2\fi]}
\def\@citex[#1]#2{\if@filesw\immediate\write\@auxout{\string\citation{#2}}\fi
  \def\@citea{}\@cite{\@for\@citeb:=#2\do
    {\@citea\def\@citea{,\kern1pt\linebreak[0]}\@ifundefined
       {b@\@citeb}{{\bf ?}\@warning
       {Citation `\@citeb' on page \thepage \space undefined}}%
\hbox{\csname b@\@citeb\endcsname}}}{#1}}
\newcommand\qedblob{\mbox{$\Box$}}
\newcommand{\bigo}{{\protect\cal O}}
\newcommand{\naturals}{\mathbb{N}}
\newcommand{\score}{\mathrm{score}}
\newcommand{\calS}{{\cal S}}
\newtheorem{theorem}{Theorem}[section]
\newtheorem{corollary}[theorem]{Corollary}
\newtheorem{definition}[theorem]{Definition}
\newtheorem{lemma}[theorem]{Lemma}
\newtheorem{observation}[theorem]{Observation}
\newcommand{\p}{\ensuremath{\mathrm{P}}}
\newcommand{\np}{\ensuremath{\mathrm{NP}}}
\newcommand{\ccwm}{\ensuremath{\mathrm{CCWM}}}
\newcommand{\LEFT}{{{\mathrm{lt}}}}
\newcommand{\RIGHT}{{{\mathrm{rt}}}}
\DeclareMathOperator{\sord}{\mathit{L}}
\begin{document}

\title{The Complexity of Manipulative Attacks in Nearly Single-Peaked Electorates\thanks{Also appears as URCS-TR-2011-968.}}
\author{Piotr Faliszewski\\
        Department of Computer Science\\
        AGH Univ.\ of Science and Technology\\
        Krak\'ow, Poland
\and
        Edith Hemaspaandra\\
        Department of Computer Science\\
        Rochester Inst.\ of Technology\\
        Rochester, NY, USA
\and
        Lane A. Hemaspaandra\\
        Department of Computer Science\\
        Univ.\ of Rochester\\
        Rochester, NY, USA
}
\date{May 25, 2011, revised July 6, 2012}

\newcommand{\subjectcategories}{}

\newcommand{\whereareproofs}{
This paper touches on bribery, control, and manipulation, discusses
various election systems and notions of nearness to single-peaked, and
gives both polynomial-time attack results and NP-hardness results.  It
thus is not surprising that the proofs vary broadly in their
techniques and approaches; we have no single approach that covers this
entire range of cases. Almost all of our proofs are relegated
to the appendix.}

\newcommand{\whynodefinitions}{In this section 
we give intuitive descriptions of 
the problems that we study.  
More detailed coverage, and discussion of the
motivations and limitations of the models, can be found in the various
bibliography entries, 
including, for example, 
\citet{fal-hem-hem-rot:b-too-short:richer,fal-hem-hem:j:cacm-survey}.
We have also included formal definitions in the appendix.
As is standard, throughout this paper the 
terms ``NP-hard''/``NP-hardness'' will 
refer to polynomial-time many-one 
``NP-hard''/``NP-hardness.''}

\newcommand{\fixformissingproof}{In this version of this paper, the
  proof of Theorem~\ref{t:stcp} is located
  in Appendix~\ref{app:stcp}, and we assume that the
  interested reader will now read it (doing so is not required and
  probably not even a good idea on a first reading, but such reading
  will make clearer the next paragraph, which briefly refers to the
  algorithm within that proof). }

\newcommand{\acknowledgements}{%
\paragraph*{Acknowledgments}
We thank the anonymous referees for their comments.
Piotr Faliszewski was supported in
part by AGH University of Technology grant 11.11.120.865, by
Polish Ministry of Science and Higher Education grant N-N206-378637,
and by the Foundation for Polish Science's Homing/Powroty program.  Edith
Hemaspaandra was supported in part by grant NSF-IIS-0713061 and a
Friedrich Wilhelm Bessel Research Award.  Lane A. Hemaspaandra was
supported in part by grants NSF-CCF-0915792 and ARC-DP110101792, and a
Friedrich Wilhelm Bessel Research Award.  }

\maketitle

\begin{abstract}
  Many electoral bribery, control, and manipulation problems (which we
  will refer to in general as ``manipulative actions'' problems) are
  $\np$-hard in the general case.  It has recently been noted that many
  of these problems fall into polynomial time if the electorate is
  single-peaked (i.e., is polarized along some axis/issue).  However,
  real-world electorates are not truly single-peaked.  There are
  usually some mavericks, and so real-world electorates tend to merely
  be nearly single-peaked.  This paper studies the complexity of
  manipulative-action algorithms for elections over nearly
  single-peaked electorates, for various notions of nearness and
  various election systems.  We provide instances where even one
  maverick jumps the manipulative-action complexity up to $\np$-hardness,
  but we also provide many instances where a reasonable number 
  of mavericks can be tolerated without
  increasing the manipulative-action complexity.
\end{abstract}

\subjectcategories

\section{Introduction}\label{sec:intro}
Elections are a model of collective 
decision-making so central in human and multiagent-systems contexts---ranging 
from planning to collaborative filtering to reducing 
web spam---that it is natural to want to get a handle on the 
computational difficulty 
of finding whether manipulative actions can obtain a given outcome
(see the survey~\citep{fal-hem-hem:j:cacm-survey}).
A recent line of work 
\tarkhide{started by
Walsh }%
\citep{wal:c:uncertainty-in-preference-elicitation-aggregation,fal-hem-hem-rot:cOUTDATEDbyJOURNALwithPTR:single-peaked-preferences,bra-bri-hem-hem:c:sp2}
has looked at the extent to which $\np$-hardness results for the
complexity of manipulative actions (bribery, control, and
manipulation) may evaporate when one focuses on electorates that are
(unidimensional) 
single-peaked, a central social-science model of electoral behavior.
That model basically views society as polarized along some (perhaps
hidden) issue or axis.  However, real-world elections are unlikely to
be perfectly single-peaked.  Rather, they are merely very close to being
single-peaked, a notion that was recently raised in a computational
context by \citet{con:j:eliciting-singlepeaked}
and 
\citet{esc-lan-ozt:c:single-peaked-consistency}.  
There
will almost always be a few mavericks, who vote based on some reason
having nothing to do with the societal axis.  For example, in recent US
presidential primary and final elections, there was much discussion of
whether some voters would vote not based on the political positioning
of the candidates but rather based on the candidates' religion, race, or
gender.  In this paper, we most centrally study whether the
evaporation of complexity results that often holds 
for 
single-peaked electorates will also occur in nearly single-peaked
electorates.  We prove that often the answer is yes, and sometimes
the answer is no. 
We defer to Section~\ref{sec:related} our
discussion of previous and related work.

Among the contributions
of our paper are the following.

\begin{itemize}
\item Most centrally, we show that in many control and bribery settings,
a reasonable number of mavericks (voters whose votes are not
consistent with the societal axis) can be 
handled.
In such cases,
the ``complexity-shield evaporation'' results of the earlier work can
now be declared free from the worry that the results might hold only for
perfect single-peakedness.

\item We give settings, 
for example 3-candidate Borda and 3-candidate veto, in 
which even one maverick raises the 
(constructive 
coalition weighted)
manipulation complexity from $\p$ to $\np$-hardness.

\item For all scoring systems of the form $(\alpha_1,\alpha_2,\alpha_3)$,
$\alpha_2 \neq \alpha_3$, we provide a dichotomy theorem 
determining when the (constructive coalition 
weighted) manipulation problem 
is in $\p$ and when it is $\np$-complete, for 
so-called 
single-caved societies.

\item We show cases where the price of mavericity is 
paid in nondeterminism---cases where, for each $k$, 
we prove the control problem for 
societies with 
$\bigo(\log^kn)$ mavericks 
to be in complexity class $\beta_k$, the $k$th level of the limited
nondeterminism hierarchy 
of
\citet{fis-kin:j:beta}.
\end{itemize}

\whereareproofs

\section{Preliminaries}\label{sec:prelims}
\whynodefinitions

\smallskip

{\bf Elections}\quad An election $E=(C,V)$ consists of a finite
candidate set $C$ and a finite collection $V$ of votes over the candidates.  
$V$ is a list of entries, one per voter, 
with each entry containing a linear (i.e., 
tie-free total) ordering of the candidates
(except for approval elections where each vote is a $\|C\|$-long 
0-1 
vector
denoting disapproval/approval of each
candidate).\footnote{Throughout this paper $V$, though input as a list
(one ballot
per voter), typically functions as a multiset.  When dealing with voter
sets, we use terms such as set/subset to mean multiset/submultiset
(although we will typically just say collection), and we use set-bracket
notation to, for that case, mean multisets, e.g., if $v$ is a preference
order, $w$ is a preference order, $V=\{v,v\}$, and $W=\{v,w\}$, then $V
\cup W = \{v,v,v,w\}$.}
In plurality elections,
whichever candidate gets the most top-of-the-preference-order votes
wins.  Each vector $(\alpha_1,\ldots,\alpha_k)$, $\alpha_i \in
\naturals$, $\alpha_1 \geq \cdots \geq \alpha_k \geq 0$, defines a
$k$-candidate scoring protocol election, in which each voter's $i$th
favorite candidate gets $\alpha_i$ points, and whichever candidate
gets the most points wins.  $k$-candidate veto is defined by the
vector $(\overbrace{1,\ldots,1}^{k-1},0)$, and $k$-candidate Borda 
is
defined by the vector $(k-1,k-2,\ldots,0)$.  In approval elections,
whichever candidate is approved of by the most voters wins.  In all
the systems just mentioned, if candidates tie for the highest number
of points, those tieing for highest are all considered winners.  In
Condorcet elections, a candidate wins if he or she strictly beats
every other candidate in pairwise head-on-head votes.

\smallskip

{\bf Attacks}\quad 
Each of the election problems is defined based
on an election and some additional parameters.  In constructive
coalition weighted manipulation (CCWM), the input is a set of
nonmanipulative voters (having weights and preferences over the
candidates),
a list of the weights of the manipulative voters, and which candidate
$p$ the manipulators wish to be a winner.  (Input) instances are in
the set exactly if there is a set of votes the manipulators can cast
to make $p$ a winner (under the given election system).  In
``bribery,'' all voters have preferences, and our input is an election,
a candidate $p$, and a bound $K$ on how many voters can be bribed.
Instances are in the set if there is a way of changing the preferences
of at most $K$ voters that 
makes 
$p$ a winner.  (We will briefly mention a
number of variations of bribery.
``Weighted'' means the voters have weights, ``\$'' means voters have
individual prices (and $K$ becomes a bound on the amount that can be
spent seeking to make $p$ win).
For approval voting, ``negative'' bribery
means a bribe cannot change someone from disapproving of $p$ to
approving of $p$, and ``strongnegative'' bribery means
every bribed person must
end up disapproving of $p$.  In negative bribery, one can only help
$p$ in subtle, indirect ways.  For plurality, the negative notion 
is similar, see
\citet{fal-hem-hem:j:bribery} or our appendix.)  
In control (of the four types we will discuss), the input
is an election, the candidate $p$ one 
wants to be a
winner, and a parameter $K$ limiting how many actors one can influence
in the designated way.  Our four types of control will be adding voters
(CCAV), deleting voters (CCDV), adding candidates (CCAC), and deleting
candidates (CCDC).  Each of those four problems is defined as 
the collection of inputs on which using at most $K$ actions of the 
designated type (e.g., adding at most $K$ voters) suffices to make $p$ a 
winner.
For CCAV an additional part of the input is a pool of potential 
additional voters (and their preferences over the candidates).
For CCAC an additional part of the input is a pool of potential 
additional candidates, and all voters have preferences over the 
set of all initial and potential-additional candidates.
(Some early papers on control 
focused on making $p$ be the one and only winner,
but we follow the more recent approach of focusing on making $p$ 
become a winner.  
The older results for the former case that we 
cite here are known in the literature, 
or were verified for this paper by us, to 
also hold for the latter case.)
Control loosely models such real-world activities as get-out-the-vote
drives, targeted advertising, and voter suppression.

Each of our algorithms for manipulation, bribery, and control not only
gives a \emph{yes/no} answer for the decision variant of the
problem, but also can be made to produce a successful
manipulative action if the answer is \emph{yes}.

Having algorithms for such tasks as bribery and control isn't
inherently a ``bad'' or unethical thing.  For example, bribery and
control algorithms are valuable tools for actors (party chairs,
campaign managers, etc.) who are trying to most effectively use their
resources.

\smallskip

{\bf (Nearly) Single-peakedness}\quad A collection $V$ of votes (cast
as linear orders) is said to be single-peaked
exactly if there is a
linear order $L$ over the candidate set such that for each triple of
candidates, $c_1$, $c_2$, $c_3$, it holds that if $ c_1 \, L \, c_2 \,
L c_3 \lor c_3 \, L \, c_2 \, L \, c_1$, then 
$(\forall v \in V)[ c_1 \,P_v\,
c_2 \implies c_2 \,P_v \,c_3]$, where $a P_v b$ means that 
voter $v$ prefers $a$ to $b$.  This notion was first created by
Black more than half a century ago, and is one of the most important
concepts in political science.  Loosely put, the notion is motivated
as follows.  Imagine that on some issue, for example what the tax
rate should be for the richest Americans, each person has a utility
curve that on a (perhaps empty) initial part is nondecreasing and then
on the (perhaps empty) rest is nonincreasing.  Suppose the candidates
are spread along the tax-rate axis as to their positions, 
with no two on
top of each other.  
The set of preferences that can be supported among
them by curves of the mentioned sort on which there are no ties among
candidates in utility are precisely the single-peaked vote
ensembles. Note that different voters can have different
peaks/plateaus and different curves, e.g., if both Alice and Bob think
40 percent is the ideal top tax rate, it is completely legal for Alice
to prefer 30 percent to 50 percent and Bob to prefer 50 percent to 30
percent.  
There is extensive political science literature on single-peaked
voting's naturalness, ranging from 
conceptual discussions to 
empirical
studies of actual 
US 
political elections (with few candidates) 
showing
that most voters are single-peaked with respect to left-right
political spectrum, and has been described as
``\emph{the} canonical setting for models of political        
institutions''~\citep{gai-pat-pen:b:arrow-on-single-peaked-domains}.
If in the definition of single-peaked one 
replaces the final ``forall'' with this one,
$(\forall v \in V)[ c_2 \,
P_v \,
c_1 \implies c_3 \, P_v \, c_2]$, one defines the closely
related notion of single-caved preferences, which we 
will also study.
For approval ballots, a vote set $V$ is said to be 
single-peaked if there is a linear order $L$ 
such that for each voter $v$, all candidates 
that $v$ approves of (if any) form an adjacent block in $L$.

In all our manipulative action problems about single-peaked and nearly
single-peaked societies, we will follow Walsh's model, which is that
the societal order, $L$, is part of the input.  (See the earlier
papers for extensive discussion of why this is a reasonable model.)

In this paper, we will primarily focus on elections whose voters are
``nearly'' single-peaked, under the following notions of nearness.
Our ``maverick'' notions apply to both voting by approval 
ballots and voting by linear orders;  our other notions are 
specific to voting by linear orders.
We will say an election is over a $k$-maverick-SP society
(equivalently, a $k$-maverick-SP electorate) if all but $k$ of the
voters are consistent with (in the sense of single-peakedness; this does
not mean identical to) the societal order $L$.  That is, we allow up to $k$
mavericks.  
We will speak of $f(\cdot)$-maverick-SP societies when this usage and
the type of $f$'s argument(s) is clear from context ($f$'s argument(s)
will typically be the size of the election instance or some parameters
of the election, e.g., the number of candidates or the number of
voters).

Also, we will prove a number of results 
that state that ``PROBLEM for ELECTION-SYSTEM 
over log-maverick-SP societies is 
in $\p$''; this is a shorthand for the claim that 
for each
function $f$ 
(that is computable in time polynomial in 
the size of the input---which is roughly 
$\|V\| \|C\| \log \|C\|$ for the election
$(C,V)$
itself plus whatever space is taken by
other parameters---%
and to avoid possible technical problems, we should assume 
$f$ is nondecreasing)
whose 
\emph{value} is $\bigo(\log(\mathrm{ProblemInputSize}))$,
it holds that 
``PROBLEM
for ELECTION-SYSTEM over $f$-maverick-SP societies is in $\p$,''
where the argument to $f$ is the input size of the problem.
An election is over a $(k,k')$-swoon-SP society if each voter
has the property that if one removes the voter's $k$ favorite
and $k'$ least favorite
candidates from the voter's preference order, the resulting 
order is consistent
with societal order $L$ after removing those same candidates from $L$.
We will use swoon-SP as a shorthand for 
\tarkhideadd{$(1,0)$-swoon-SP,
as we will not study other swoon values in this paper.}{$(1,0)$-swoon-SP\@.}
In swoon-SP, each person may have as her or his
favorite some candidate chosen due to some personal passion (such as
hairstyle or religion), but all the rest of
that person's vote must be consistent with the societal polarization.
An election is over a 
Dodgson$_k$-SP society if for each 
voter some 
at-most-$k$ sequential exchanges of adjacent candidates in his or her
order make the vote consistent with the societal order $L$.
An election is over a PerceptionFlip$_k$-SP society if, for each 
voter, there is some series of 
at most $k$ sequential exchanges of adjacent candidates
\emph{in the societal order $L$}
after which the voter's 
vote is 
consistent with $L$.  
This models each voter
being consistent with that voter's humanly blurred view of 
the societal order.

Some model details follow.  For control by adding voters for
maverick-SP societies, the total number of mavericks in the initial
voter set and the pool of potential additional voters is what the
maverick bound limits.  
For manipulation, nonmanipulators as well as manipulators
can be mavericks, and we bound the total number of mavericks.
For bribery involving $f(\cdot)$-maverick-SP societies, we will
consider both the ``standard'' model and the ``marked'' model.  In the
standard model, the $f(\mathrm{ProblemInputSize})$ 
limit on the number of mavericks must hold both
for the input and for the voter set after the bribing is done; anyone
may be bribed and bribes can create mavericks and can make mavericks
become nonmavericks.  In the marked model, 
each voter has a flag
saying whether or not he or she can cast a
maverick vote (we will call that being
``maverick-enabled'').
The at most $f(\mathrm{ProblemInputSize})$ voters with 
the maverick-enabled flag may
(subject to the other constraints of the bribery problem such as
total number of bribes) be bribed in any way, and so may legally cross in
either direction between consistency and inconsistency with the
societal ordering.  All non-maverick-enabled voters must be consistent
with societal order $L$ both before and after the bribing, although
they too can be bribed (again, subject to the problem's other constraints
such as total number of bribes).

For single-peaked electorates, ``median voting'' (in which the candidate wins 
who on
the societal axis is preferred by the ``median voter'') is known to be
strategy-proof, i.e., a voter never benefits from misrepresenting his 
or her preferences.  It might seem tempting to conclude from that 
that all elections on single-peaked societies
``should'' use median voting, and that we thus need not discuss
single-peaked (or perhaps even nearly single-peaked) elections 
with respect to other voting systems, such as plurality, veto, etc.
But that temptation should be resisted.
First, median voting's 
strategy-proofness regards manipulation, not control or
bribery.  Second, even in real-world political elections broadly
viewed as being (nearly) single-peaked, it simply is not the case that
median voting is used.  People,
for 
whatever reasons of history and comfort, 
use such systems as plurality, approval, and so on for such elections.  
And so algorithms
for those systems are worth studying. Third, for 
manipulation of nearly single-peaked electorates, strategy-proofness does not 
even hold.  And although for them indeed only the mavericks can have 
an incentive to lie, 
that doesn't mean that the outcome
won't be utterly distorted even by 
a single
maverick.  
There are arbitrarily large electorates, having just one
maverick, where that maverick can change the winner from being the
median one to instead being a candidate on the 
outer extreme
of the societal order.  

\section{Manipulation}\label{sec:manip}
This paper's sections on control and bribery focus on, and provide
many examples of, settings where not just the single-peaked
case but even the nearly single-peaked cases have polynomial-time
algorithms.  Regarding manipulation, the results are more sharply
varied.  

We show that $\np$-hardness holds for a rich class of 
scoring protocols, in the presence of even one maverick.
(When $\alpha_2 = \alpha_3$ the system is either equivalent to plurality
or is a trivial system where everyone always is a winner.  These cases
are easily seen to be in $\p$.)
Recall from Section~\ref{sec:prelims} 
the meaning of ``$(\alpha_1,\alpha_2,\alpha_3)$ elections,'' namely,
scoring protocol elections
using the vector $(\alpha_1,\alpha_2,\alpha_3)$.

\begin{theorem}\label{t:1msp-dich}
  For each $\alpha_1 \geq \alpha_2 > \alpha_3$, $\ccwm$ for
  $(\alpha_1,\alpha_2,\alpha_3)$ elections over 1-maverick-SP societies is
  $\np$-complete.
\end{theorem}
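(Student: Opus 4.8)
The plan is to establish membership in $\np$ (immediate: a nondeterministic algorithm guesses a preference order for each manipulator and then checks in polynomial time both that $p$ wins and that at most one voter, manipulative or not, is inconsistent with $L$) and to prove $\np$-hardness by a polynomial-time many-one reduction from \textsc{Partition}. Given positive integers $k_1,\dots,k_n$ with $\sum_i k_i = 2K$ (we may assume $K\geq 1$), one asks whether $\sum_{i\in S}k_i = K$ for some $S\subseteq\{1,\dots,n\}$. Set $\delta = \alpha_2-\alpha_3\ (>0)$ and let $\gamma = (\alpha_1-\alpha_2)^2 + (\alpha_1-\alpha_2)(\alpha_2-\alpha_3) + (\alpha_2-\alpha_3)^2$, a positive integer depending only on the scoring vector. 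I would build the election with candidate set $\{p,a,b\}$, societal order $L\colon a<p<b$ (so $p$ is the ``centrist''), manipulators of weights $\gamma k_1,\dots,\gamma k_n$, and nonmanipulators consisting of (i) one \emph{maverick} of weight $m=(\alpha_1-\alpha_3)(2\alpha_1-\alpha_2-\alpha_3)K$ casting $a\succ b\succ p$ --- which is not single-peaked with respect to $L$, since it ranks the centrist last --- and (ii) a block of total weight $x=(\alpha_1-\alpha_2)(2\alpha_1-\alpha_2-\alpha_3)K$ of single-peaked voters all casting $b\succ p\succ a$ (this block is empty when $\alpha_1=\alpha_2$).

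Correctness would go as follows. The lone maverick already uses up the maverick budget of $1$, so in any successful manipulation every manipulator must vote single-peakedly with respect to $L$; since there are only three candidates and $p$ is central, moving $p$ to the top of a single-peaked vote keeps it single-peaked and never lowers $p$'s margin over either opponent, so we may assume every manipulator casts $p\succ a\succ b$ or $p\succ b\succ a$. Thus each manipulator's sole decision is whether to give the larger second-place boost to $a$ or to $b$ --- exactly the \textsc{Partition} dichotomy. A short computation (the maverick gives $(p,a,b)$ the score increments $(\alpha_3 m,\alpha_1 m,\alpha_2 m)$ and the filler block $(\alpha_2 x,\alpha_3 x,\alpha_1 x)$) shows the chosen $m,x$ make $\mathrm{init}_a = \mathrm{init}_b = \mathrm{init}_p + (2\alpha_1-\alpha_2-\alpha_3)\gamma K$; so, writing $S$ for the set of manipulators voting $p\succ a\succ b$ and $s=\sum_{i\in S}\gamma k_i$, the final scores are $\score(p)=\mathrm{init}_p+2\alpha_1\gamma K$, $\score(a)=\mathrm{init}_a+2\alpha_3\gamma K+\delta s$, and $\score(b)=\mathrm{init}_b+2\alpha_3\gamma K+\delta(2\gamma K-s)$, whence $p$ is a winner iff $s\leq\gamma K$ and $2\gamma K-s\leq\gamma K$, i.e.\ iff $s=\gamma K$, i.e.\ iff $\sum_{i\in S}k_i=K$ for some $S$. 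Every weight introduced is polynomially bounded, so the reduction runs in polynomial time.

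The conceptual core --- one maverick exhausts the budget, pinning each manipulator to a binary single-peaked choice mirroring \textsc{Partition} --- is short; I expect the real work to be the base-score engineering of the nonmanipulator block. On the axis $a<p<b$ a single-peaked voter always ranks $p$ first or second, never last, so single-peaked nonmanipulators alone cannot start $p$ as far behind $a$ and $b$ as the reduction requires (for veto, not behind at all); the maverick supplies precisely that deficit. When $\alpha_1>\alpha_2$, however, the same maverick also tips $a$ above $b$, and the single-peaked $b\succ p\succ a$ filler voters are there to rebalance $a$ against $b$ while leaving $p$'s relative position controlled. Making the arithmetic close exactly --- so that integrality of the $k_i$ converts ``$s\leq\gamma K$ and $s\geq\gamma K$'' into ``$s=\gamma K$'' --- is what forces the preliminary scaling of the $k_i$ by $\gamma$; pinning down $m$, $x$, and that scale factor, and verifying the score identities, is the only genuinely fiddly step. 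The case $(\alpha_1,\alpha_2,\alpha_3)=(1,1,0)$ ($3$-candidate veto), where $\gamma=1$, the filler block vanishes, and the lone maverick simply casts $a\succ b\succ p$ with weight $K$, is a transparent special case that already displays the whole phenomenon.
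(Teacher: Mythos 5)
Your proposal is correct and is essentially the paper's own proof: the same three-candidate reduction from \textsc{Partition} with societal axis $a\,L\,p\,L\,b$, the same maverick $a\succ b\succ p$ and single-peaked filler $b\succ p\succ a$ (your $m$, $x$, and $\gamma$ specialize, upon normalizing $\alpha_3=0$, to exactly the weights $(2\alpha_1-\alpha_2)\alpha_1K$, $(2\alpha_1-\alpha_2)(\alpha_1-\alpha_2)K$, and $\alpha_1^2-\alpha_1\alpha_2+\alpha_2^2$ used there), and the same correctness argument forcing the manipulators into a binary single-peaked choice. The only cosmetic difference is that you verify the converse via the explicit margin identities $\delta(\gamma K-s)$ rather than via the observation $2\,\score(p)\leq\score(a)+\score(b)$; both yield $s=\gamma K$.
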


\tarkhide{
We point out that this theorem is of the same form
as that for the general case 
(see~\citep{con-lan-san:j:when-hard-to-manipulate,hem-hem:j:dichotomy,pro-ros:j:juntas}).  
However, the proofs for 
the general case do not work in our case,
since those proofs construct elections with at least two mavericks. 
}

In the general case (i.e., no single-peakedness is required), the
above cases also are $\np$-complete~\citep{con-lan-san:j:when-hard-to-manipulate,hem-hem:j:dichotomy,pro-ros:j:juntas}, so allowing a one-maverick
single-peaked society is jumping us up to the same level of complexity
that holds in the general case here.  In contrast, for SP societies
(without mavericks), 
3-candidate $\ccwm$ is
$\np$-complete when $(\alpha_1 - \alpha_3) > 2(\alpha_2 - \alpha_3) > 0$ 
and is in $\p$ 
otherwise~\citep{fal-hem-hem-rot:j:single-peaked-preferences}.  So, in
particular, 3-candidate veto and 3-candidate Borda elections are in $\p$
for the SP (single-peaked) 
case, but are already $\np$-complete for SP with one maverick
allowed. 

Does allowing one maverick always raise the $\ccwm$ complexity?
No, as the following
theorem shows.
\tarkhide{(The $k = 0$ case follows from
\citet{fal-hem-hem-rot:j:single-peaked-preferences}.)}

\begin{theorem}\label{t:veto-easy}
For each $k \geq 0$ and $m \geq k + 3$, $\ccwm$ for
$m$-candidate veto elections over $k$-maverick-SP 
societies is in $\p$.
\end{theorem}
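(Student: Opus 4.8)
The plan is to reduce the problem to two polynomial-time checks on the input and to prove that they exactly characterize the yes-instances, so no real search is needed. The structural fact I would lean on is that in any vote that is single-peaked with respect to the societal axis $L$, the least-preferred candidate (which, under veto, is the one vetoed) must be one of the two endpoints of $L$: if a candidate $x$ strictly interior to $L$ were ranked last by $v$, pick $c_1$ and $c_3$ on the two sides of $x$ in $L$; then $c_1 P_v x$, and single-peakedness applied to the triple $c_1,x,c_3$ forces $x P_v c_3$, contradicting $x$ being bottom. Hence every non-maverick voter (manipulator or not) vetoes either $\ell$ or $r$, the two endpoints of $L$, so any candidate other than $\ell$ and $r$ can accumulate veto weight only from mavericks; and since at most $k$ voters total may be mavericks while $m \geq k+3$ gives at least $k+1$ candidates strictly interior to $L$, at every moment at least one interior candidate has veto weight $0$ and therefore the maximum possible score.

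First I would run this algorithm: reject unless (i) the number $b$ of nonmanipulative voters whose preference order is not single-peaked with respect to $L$ is at most $k$, and (ii) no nonmanipulative voter of positive weight ranks $p$ last (equivalently, the total weight $V_p$ of nonmanipulators vetoing $p$ is $0$); otherwise accept. Both conditions are plainly checkable in polynomial time. To justify accepting: when (i) and (ii) hold, have every manipulator veto whichever of $\ell,r$ is not $p$, using the corresponding axis-monotone (hence single-peaked, non-maverick) order; then nobody vetoes $p$, so $p$'s veto weight is $0$, which is minimum, and $p$ is a winner, while the total number of mavericks is just the $b\le k$ from (i). Note we lose nothing by never letting a manipulator veto $p$: rerouting such a manipulator to an endpoint $\neq p$ only lowers $p$'s veto weight and can be realized without creating a maverick.

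To justify rejecting: if (i) fails, the nonmanipulators alone already violate $k$-maverick-SP, so the instance is trivially negative. Otherwise (i) holds but $V_p>0$, and I would split on whether $p$ is an endpoint of $L$. If $p$ is interior, any nonmanipulator vetoing $p$ is itself a maverick, leaving at most $k-1$ mavericks for the $m-3\ge k$ interior candidates other than $p$; each of those needs veto weight $\ge V_p>0$, obtainable only from a maverick, so by pigeonhole some interior candidate is stuck at veto weight $0<V_p$ and outscores $p$ no matter how the manipulators vote. If $p$ is an endpoint, each of the $m-2\ge k+1$ interior candidates needs a maverick to reach veto weight $\ge V_p>0$, but there are at most $k$ mavericks, so again some interior candidate stays at $0<V_p$ and outscores $p$. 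Either way $p$ cannot be made a winner, so rejecting is correct.

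I expect the point needing the most care to be the exact role of the hypothesis $m\ge k+3$: it is precisely what makes the two pigeonhole counts ($m-3\ge k$ and $m-2\ge k+1$) go through, and for $m=k+2$ the interior candidates can all be ``covered'' by mavericks and the problem becomes genuinely hard, consistent with Theorem~\ref{t:1msp-dich}. Secondary things to verify are the realizability of the constructed manipulator votes — a manipulator vetoing an endpoint can always be made non-maverick via the axis-monotone order, which is all the accepting case uses, while non-single-peaked orders needed elsewhere exist whenever $m\ge 4$, i.e., whenever $k\ge 1$ — and that the degenerate cases (no manipulators, or zero-weight voters) are handled correctly by the same two checks.
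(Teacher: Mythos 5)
Your proposal is correct and takes essentially the same route as the paper's proof: both rest on the observation that a single-peaked voter can only veto one of the two endpoints of $L$, so with at most $k < m-2$ mavericks some interior candidate always finishes with veto weight zero, which forces the characterization that $p$ can be made a winner iff no nonmanipulator vetoes $p$. Your endpoint-versus-interior case split and explicit input-validity check are just a more verbose rendering of the paper's single pigeonhole count over the $m-2$ interior candidates.
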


In contrast, all of Theorem~\ref{t:veto-easy}'s cases are well-known to be
$\np$-complete in the general 
case~\citep{con-lan-san:j:when-hard-to-manipulate}.
Still, the 
contrast is a bit fragile.  For example, although the above
theorem shows that $\ccwm$ for $(1,1,1,1,0)$ elections over
$2$-maverick-SP societies is in $\p$,
we prove below that $\ccwm$ for
$(1,1,1,0)$ elections over $2$-maverick-SP societies is 
$\np$-complete. 
Note also that this theorem gives an example where
4-candidate veto elections are $\np$-complete but 5-candidate veto
elections are in $\p$, in contrast with the behavior that one often
expects regarding $\np$-completeness and parameters, namely, one might
expect that increasing the number of candidates wouldn't lower the
complexity.  (However, see \citet{fal-hem-hem-rot:j:single-peaked-preferences} for another
example of this unusual behavior.)

\begin{theorem}\label{t:veto-hard}
For each $k \geq 0$ and $m \geq 3$ such that $m \leq k + 2$, $\ccwm$ for
$m$-candidate veto elections over $k$-maverick-SP societies 
is $\np$-complete.
\end{theorem}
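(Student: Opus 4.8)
The plan is to establish $\np$-completeness by combining an easy membership argument with a reduction from \textsc{Partition} that is a ``single-peaked-aware'' version of the classical proof that weighted coalitional manipulation for $3$-candidate veto is $\np$-hard. Membership in $\np$ is routine: for fixed $k$ and $m$, guess the manipulators' ballots, verify in polynomial time that $p$ is a winner, and verify that the number of voters (nonmanipulators together with the guessed manipulators) whose ballots are not consistent with the societal order $L$ is at most $k$.

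For hardness, start from a \textsc{Partition} instance $w_1,\dots,w_n$ with $\sum_i w_i = 2K$; we may assume the sum is even, so $K\ge 1$. Build an $m$-candidate veto election with candidate set $\{a,p,b,d_1,\dots,d_{m-3}\}$ and societal axis $L\colon a \sporder d_1 \sporder \cdots \sporder d_{m-3} \sporder p \sporder b$, so that $a$ and $b$ are the two extremes of $L$ while $p$ and all $d_i$ lie in the interior. The nonmanipulative voters are: one voter of weight $K$ vetoing $p$, and, for each $i\in\{1,\dots,m-3\}$, one voter of weight $M:=2K$ vetoing $d_i$. The manipulators have weights $w_1,\dots,w_n$, and the distinguished candidate is $p$. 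Two observations make the instance fit the $k$-maverick-SP model: every nonmanipulator vetoes an interior candidate of $L$ and is therefore a maverick, and there are exactly $1+(m-3)=m-2\le k$ of them; on the other hand, a ballot whose bottom candidate is $a$ (resp.\ $b$) can be completed to be $L$-consistent by listing the remaining candidates in reverse (resp.\ forward) $L$-order, so a manipulator who vetoes $a$ or $b$ need not be a maverick.

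Now recall that in veto each candidate's score is the total voter weight minus its ``veto weight'' (the total weight of voters ranking it last), so $p$ is a winner exactly when $p$'s veto weight is $\le$ that of every other candidate. Let $q$, $S_a$, $S_b$ be the total manipulator weight vetoing $p$, $a$, $b$, respectively. For $p$ to win we need $K+q\le S_a$ and $K+q\le S_b$; summing these and using $q+S_a+S_b\le 2K$ forces $q=0$ and $S_a=S_b=K$, i.e., a subcollection of $w_1,\dots,w_n$ summing to exactly $K$. Conversely, if \textsc{Partition} has a solution, split the manipulators accordingly: then $p$, $a$, $b$ all have veto weight $K$ and tie at the top, each $d_i$ has veto weight $\ge M>K$ and falls strictly below, and the whole configuration has only the $m-2\le k$ mavericks already present, so $p$ wins within the maverick budget. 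Hence the constructed instance is a yes-instance of $\ccwm$ iff the given instance is a yes-instance of \textsc{Partition}.

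The step I expect to require the most care is not any single calculation but the tightness bookkeeping: one must be sure the construction spends exactly one maverick to ``kill'' each dummy candidate and exactly one more to host the block of $K$ vetoes on $p$, for a total of $m-2$, matching the hypothesis $m\le k+2$ precisely (and dovetailing with the $\p$ algorithm of Theorem~\ref{t:veto-easy} for $m\ge k+3$); and one must check that the dummy candidates, despite being present, can never interfere with the $\{a,p,b\}$ sub-election, which the choice $M>K$ guarantees. The base case $m=3$ has no dummies and is exactly the classical reduction together with its single unavoidable maverick.
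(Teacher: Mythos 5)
Your proposal is correct and follows essentially the same route as the paper's proof: a reduction from \textsc{Partition} in which the $m-2$ maverick nonmanipulators veto the interior candidates of the axis (including $p$), forcing the manipulators---who can veto only the extremes $a$ and $b$ without spending mavericks---to split their total weight $2K$ exactly evenly. The only differences are cosmetic (the paper places $p$ adjacent to $a$ rather than $b$, gives every nonmanipulator weight $K$ so that all candidates tie in the yes-case, and phrases the converse via score differences rather than veto weights), and your accounting of the maverick budget and of why the dummies cannot interfere is sound.
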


Theorems~\ref{t:veto-easy} and~\ref{t:veto-hard} also show that
for any number of mavericks, there exists a voting system
such that $\ccwm$ is easy for up to that number of mavericks,
and hard for more mavericks.

\begin{corollary}\label{t:veto-cor}
Let $k \geq 0$.   For all $\ell \geq 0$, 
$\ccwm$ for $k+3$-candidate veto elections over $\ell$-maverick-SP 
societies is in $\p$ if $\ell \leq k$ and $\np$-complete otherwise.
\end{corollary}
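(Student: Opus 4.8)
The plan is to obtain Corollary~\ref{t:veto-cor} as an immediate consequence of Theorems~\ref{t:veto-easy} and~\ref{t:veto-hard}, simply by specializing their parameters so that the candidate count is frozen at $m = k+3$ while the maverick bound ranges over all $\ell \geq 0$. The key observation is that the two parameter regimes covered by those theorems, after this substitution, partition $\{\ell : \ell \geq 0\}$ exactly into $\{\ell \leq k\}$ and $\{\ell \geq k+1\}$, with no overlap and no gap; so there is nothing to prove beyond checking these arithmetic inclusions.

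For the polynomial-time half, suppose $\ell \leq k$ and consider $\ccwm$ for $(k+3)$-candidate veto over $\ell$-maverick-SP societies. I would invoke Theorem~\ref{t:veto-easy} with the role of its ``$k$'' played by $\ell$ and its ``$m$'' set to $k+3$. Its hypothesis ``$m \geq k+3$'' then reads $k+3 \geq \ell + 3$, i.e., $\ell \leq k$, which is exactly our assumption. Hence the problem is in $\p$.

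For the $\np$-completeness half, suppose $\ell \geq k+1$. Membership in $\np$ is the standard observation that one can guess the manipulators' votes (and simultaneously verify that the resulting electorate remains $\ell$-maverick-SP) in polynomial time; alternatively it is already part of the statement of Theorem~\ref{t:veto-hard}. For hardness I would invoke Theorem~\ref{t:veto-hard} with its ``$k$'' played by $\ell$ and its ``$m$'' set to $k+3$. Its hypotheses ``$m \geq 3$'' and ``$m \leq k+2$'' become $k+3 \geq 3$ (true since $k \geq 0$) and $k+3 \leq \ell + 2$, i.e., $\ell \geq k+1$, which is exactly our assumption. Hence the problem is $\np$-complete.

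There is essentially no obstacle here: the entire mathematical content lives in Theorems~\ref{t:veto-easy} and~\ref{t:veto-hard}. The only point that requires (minimal) care is confirming that the ``$\geq k+3$'' and ``$\leq k+2$'' boundary conditions, once the candidate count is pinned to $k+3$, translate into complementary conditions on $\ell$ (namely $\ell \leq k$ versus $\ell \geq k+1$), so that the stated dichotomy is genuinely exhaustive. I would also remark, as the text preceding the corollary already hints, that this yields the promised family of systems witnessing that any prescribed number of mavericks can be the exact threshold between tractability and $\np$-completeness of $\ccwm$.
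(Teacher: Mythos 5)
Your proposal is correct and matches the paper's own treatment: the paper gives no separate proof of Corollary~\ref{t:veto-cor}, deriving it exactly as you do by instantiating Theorems~\ref{t:veto-easy} and~\ref{t:veto-hard} at $m=k+3$ and observing that the conditions $m\geq \ell+3$ and $m\leq \ell+2$ split the maverick bounds into precisely $\ell\leq k$ and $\ell\geq k+1$. Your parameter bookkeeping is accurate.
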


Let us now turn from the maverick notion of nearness to single-peakedness,
and look at the ``swoon'' notion\tarkhideadd{, in which, recall, each voter must be
consistent with the societal ordering (minus the voter's first-choice
candidate) when one removes from the voter's preference list the
voter's first-choice candidate.}{.}  We will still mostly focus on the
case of veto elections.  
For three candidates (see Observation~\ref{o:three-cand})
and four candidates
we have $\np$-completeness, and for five or more candidates we 
have membership in $\p$.
\begin{theorem}\label{t:swoon-ccwm-veto}
  For each $m \geq 5$, $\ccwm$ for
  $m$-candidate veto elections
  in swoon-SP societies is in $\p$\@.
  For $m \in \{3,4\}$, this problem is $\np$-complete.
\end{theorem}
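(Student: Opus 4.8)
The plan is to pin down exactly which candidates a swoon-SP voter can veto. Fix the societal axis $L = (c_1, \dots, c_m)$. If a voter's top candidate is $t$, then (by the definition of swoon-SP) deleting $t$ from the vote must leave an order that is single-peaked with respect to $L$ with $t$ deleted; moreover the least-preferred candidate of a single-peaked order is always an endpoint of its axis, and deleting the top of a vote does not change its bottom. Running through the three possibilities for $t$ --- $t = c_1$, $t = c_m$, or $t$ interior --- one finds that the vetoed candidate always lies in $\{c_1, c_2, c_{m-1}, c_m\}$. Conversely, each of $c_1, c_2, c_{m-1}, c_m$ is the bottom of some swoon-SP vote (for instance $(c_2, c_3, \dots, c_m, c_1)$ vetoes $c_1$, $(c_1, c_3, \dots, c_m, c_2)$ vetoes $c_2$, and the mirror-image votes handle the right end). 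I would record this tight characterization as a lemma, the proof being the routine case check just sketched.

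For $m \ge 5$ the set of interior candidates $\{c_3, \dots, c_{m-2}\}$ is nonempty, and by the lemma no swoon-SP vote --- manipulator or nonmanipulator --- ever vetoes such a candidate; also $\{c_1, c_2, c_{m-1}, c_m\}$ has four distinct members. I claim the instance is a yes-instance exactly when no nonmanipulator vetoes the distinguished candidate $p$. If no nonmanipulator vetoes $p$, have each manipulator cast a swoon-SP vote vetoing some candidate in $\{c_1, c_2, c_{m-1}, c_m\} \setminus \{p\}$ (possible, as that set is nonempty); then every ballot awards $p$ a point, so $p$ reaches the maximum conceivable score, the total weight, and wins. If instead some nonmanipulator of positive weight vetoes $p$, then whatever the manipulators do, $p$'s score is strictly below the total weight, while any interior candidate $d$ is vetoed by no one and therefore scores the full total weight, so $p$ is not a winner. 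This test (together with an optional polynomial-time verification that the input profile is genuinely swoon-SP) runs in polynomial time; as usual we assume weights are positive integers.

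For $m \in \{3,4\}$ I would obtain $\np$-completeness by showing the swoon-SP restriction does not shrink the set of realizable veto-profiles, so the problem is at least as hard as general $m$-candidate veto $\ccwm$, which is $\np$-complete~\citep{con-lan-san:j:when-hard-to-manipulate}. For $m = 3$, deleting a voter's top leaves two candidates and every two-candidate order is trivially single-peaked, so every three-candidate profile is swoon-SP and the swoon-SP problem is literally the general problem; this is Observation~\ref{o:three-cand}. For $m = 4$ the interior set $\{c_3, \dots, c_{m-2}\}$ is empty, and the lemma says every one of the four candidates is the bottom of some swoon-SP vote; since veto scoring depends only on who is vetoed, one gets a polynomial-time many-one reduction from general $4$-candidate veto $\ccwm$ by replacing each nonmanipulator ballot with a swoon-SP ballot vetoing the same candidate (the manipulators can likewise veto any candidate they want in either problem). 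Membership in $\np$ is the routine guess-the-manipulator-votes argument.

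The substantive content is essentially the structural lemma plus keeping the weighted bookkeeping honest; after that, both directions are short. The one point that repays attention is the boundary between $m = 4$ and $m = 5$: the polynomial-time algorithm for $m \ge 5$ hinges entirely on the presence of a candidate that \emph{no} swoon-SP ballot can veto, and exactly this feature vanishes at $m \le 4$ --- which is also precisely what lets those cases collapse back onto the $\np$-hard general veto problem.
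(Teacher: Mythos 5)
Your proposal is correct, and it splits into parts that track the paper and one part that takes a genuinely different route. For $m \geq 5$ your argument is essentially the paper's: the paper likewise observes that any candidate with at least two candidates on each side of it in $L$ can never be vetoed by a swoon-SP ballot, and concludes that $p$ can be made a winner iff no nonmanipulator vetoes $p$; your lemma pinning the vetoable set down to exactly $\{c_1,c_2,c_{m-1},c_m\}$ is a slightly sharper packaging of the same fact. The $m=3$ case is handled identically in both (via Observation~\ref{o:three-cand}). The real divergence is at $m=4$: the paper gives a direct reduction from PARTITION, building a two-voter gadget ($a > c > b > p$ and $c > a > p > b$, each of weight $K$) and arguing about score balancing, whereas you observe that for $m=4$ every candidate is the bottom of some swoon-SP ballot and that veto outcomes depend only on the veto profile, so the swoon-SP restriction is vacuous and general $4$-candidate veto $\ccwm$~\citep{con-lan-san:j:when-hard-to-manipulate} reduces to it by relabeling ballots. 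Your route is arguably cleaner and makes the $4$-versus-$5$ boundary conceptually transparent (hardness holds exactly when no "unvetoable" candidate exists); the paper's route is self-contained modulo PARTITION and does not rely on importing the known general-case hardness result, which also lets the same gadget be recycled almost verbatim for the Dodgson$_1$-SP case in Theorem~\ref{t:dodgson-ccwm-veto}. Both are valid; just make sure your reduction carries along an explicitly chosen societal axis as part of the constructed instance, since $L$ is part of the input in this paper's model.
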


\begin{observation}\label{o:three-cand}
Every 3-candidate election is a swoon-SP election and
a Dodgson$_1$-SP election and so all complexity results for 3-candidate
elections in the general case also hold for 
swoon-SP elections and Dodgson$_1$-SP elections. 
\end{observation}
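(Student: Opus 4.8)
The plan is to fix an \emph{arbitrary} linear order $L$ over the three candidates and verify directly that both membership conditions hold with respect to this single $L$, and then observe that this makes the class of 3-candidate swoon-SP (respectively, Dodgson$_1$-SP) elections coincide with the class of all 3-candidate elections, from which the transfer of complexity results is immediate.

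For the swoon-SP claim, recall that swoon-SP means $(1,0)$-swoon-SP: a voter is acceptable if, after deleting that voter's single favorite candidate, the remaining order is consistent with $L$ restricted to the remaining candidates. With only three candidates, deleting any one candidate (and the deleted candidate may differ from voter to voter) leaves exactly two candidates, and the single-peakedness condition quantifies over \emph{triples} of candidates; with only two candidates present there are no such triples, so the condition holds vacuously. Hence every voter is acceptable, for every choice of $L$, and the election is swoon-SP.

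For the Dodgson$_1$-SP claim, write $L$ as $x < m < y$, so $m$ is the ``middle'' candidate. A voter's order is consistent with $L$ exactly when $m$ is not ranked last: the only relevant triples are $(x,m,y)$ and its reverse, and the single-peakedness condition there says precisely that $m$ is not below both $x$ and $y$. If some voter does rank $m$ last, her order is $x' \pref y' \pref m$ where $\{x',y'\} = \{x,y\}$; the single adjacent exchange swapping $m$ with the candidate immediately above it yields $x' \pref m \pref y'$, in which $m$ is no longer last, hence consistent with $L$. So every voter is within one adjacent exchange of consistency with $L$, i.e., the election is Dodgson$_1$-SP.

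Finally, for the transfer of complexity: by the two preceding paragraphs, \emph{every} 3-candidate election $(C,V)$, paired with an arbitrary fixed societal order (say, the lexicographic order on candidate names), is a legal instance of the swoon-SP variant and of the Dodgson$_1$-SP variant of any of our manipulative-action problems. That pairing is a polynomial-time many-one reduction from the general 3-candidate version of the problem to its swoon-SP (respectively, Dodgson$_1$-SP) counterpart, so hardness results transfer upward; and conversely the SP-variant problems are decided by ignoring the supplied societal order and invoking a general-case algorithm, so membership results (in $\p$, or in any class) transfer downward. The one point requiring any care---and the only ``obstacle'' here---is the bookkeeping that the societal order $L$ is part of the input in the SP variants, which is handled exactly by the trivial padding just described.
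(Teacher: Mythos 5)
Your proposal is correct and follows essentially the same route as the paper's own proof: swoon-SP holds because any two-candidate remainder is vacuously single-peaked, and Dodgson$_1$-SP holds because the only non-single-peaked three-candidate votes are those ranking the middle candidate last, which one adjacent swap repairs. Your extra paragraph making the hardness/membership transfer explicit is fine but is left implicit in the paper.
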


Complexity results for general elections do not always hold for swoon-SP 
elections or for Dodgson$_1$-SP elections.  For example, for $m \geq 5$,
$\ccwm$ for $m$-veto elections is $\np$-complete in the general case,
but in $\p$ for swoon-SP societies (Theorem~\ref{t:swoon-ccwm-veto})
and Dodgson$_1$-SP societies (Theorem~\ref{t:dodgson-ccwm-veto}).

\begin{theorem}\label{t:dodgson-ccwm-veto}
  For each $m \geq 5$, $\ccwm$ for
  $m$-candidate veto elections
  in Dodgson$_1$-SP societies is in $\p$\@.
  For $m \in \{3,4\}$, this problem is $\np$-complete.
\end{theorem}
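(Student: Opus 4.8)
The plan is to treat the two regimes separately, and it parallels the proof of Theorem~\ref{t:swoon-ccwm-veto}: for $m\ge 5$ the problem is almost trivially in $\p$ once one isolates the right structural fact, and for $m\in\{3,4\}$ it is $\np$-complete, with the $m=3$ case immediate from Observation~\ref{o:three-cand}. The structural fact I would prove first is this. Write $L$ as $\ell < \ell' < \cdots < r' < r$; for $m\ge 4$ the four candidates $\ell,\ell',r,r'$ are distinct, and set $T=\{\ell,\ell',r,r'\}$. \emph{Claim:} in a veto election, every voter whose ballot is Dodgson$_1$-SP with respect to $L$ ranks some candidate of $T$ last. First, a ballot $v$ that is itself single-peaked with respect to $L$ must rank an $L$-endpoint last: if it ranked some $x$ with $a\,L\,x\,L\,b$ last, then $a\,P_v\,x$ and $b\,P_v\,x$, contradicting single-peakedness on the triple $a,x,b$. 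Now take a Dodgson$_1$-SP ballot $v$ that is not itself single-peaked, and let $v'$ be the single-peaked ballot obtained by one adjacent transposition. If that transposition is not of the two bottommost candidates of $v$, then $v$ and $v'$ rank the same candidate last, which by the above lies in $\{\ell,r\}\subseteq T$. Otherwise $v=(\dots,x,y)$ with $y$ last and $v'=(\dots,y,x)$; then $v'$ ranks $x$ last, so $x\in\{\ell,r\}$, and $v'$ restricted to $C\setminus\{x\}$ is single-peaked and ranks $y$ last, so $y$ is an $L$-endpoint of $C\setminus\{x\}$, i.e.\ $y\in\{\ell',r\}$ if $x=\ell$ and $y\in\{\ell,r'\}$ if $x=r$; either way $y\in T$.

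For the case $m\ge 5$: since $\|T\|=4<m$, fix a candidate $d^\ast\notin T$. By the claim, no voter at all ranks $d^\ast$ last (manipulator ballots must also be Dodgson$_1$-SP), so $d^\ast$ receives the full weight of the election, the maximum possible veto score. Hence if $p\notin T$, the same holds for $p$, so $p$ is always a winner and the instance is a yes-instance regardless of what the manipulators do; and if $p\in T$, then $p$ can be a winner only if nobody ranks $p$ last, which for the nonmanipulators is read off the input, and which the manipulators can always respect by each ranking last an $L$-endpoint different from $p$ (such a single-peaked ballot exists) — so the instance is a yes-instance iff no nonmanipulator ranks $p$ last. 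Both tests are polynomial-time, giving the $\p$ membership.

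For $m\in\{3,4\}$: membership in $\np$ is routine (a yes-instance is witnessed by the polynomially many, constant-length manipulator ballots, and checking Dodgson$_1$-SP-ness and the winner condition is polynomial-time). For $m=3$, every $3$-candidate election is Dodgson$_1$-SP by Observation~\ref{o:three-cand}, so $\np$-hardness follows from that of $3$-candidate veto $\ccwm$ in the general case~\citep{con-lan-san:j:when-hard-to-manipulate}. For $m=4$, I would reduce from \textsc{Partition}: on candidates $a<b<c<d$ with $p=a$, give the $n$ manipulators (of total weight $2K$) the \textsc{Partition} weights, bury $c$ behind $a$ using nonmanipulator vetoes of $c$, and add nonmanipulator vetoes of $a$ so that $a$ is not a trivial winner — all realized by Dodgson$_1$-SP ballots, which is possible since for $m=4$ each of $a,b,c,d$ can be ranked last by such a ballot (e.g.\ $d>c>b>a$, $d>c>a>b$, $a>b>d>c$, $a>b>c>d$). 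The weights are chosen so that making $a$ a winner requires the manipulators to put weight $\ge K$ on vetoing $b$ and weight $\ge K$ on vetoing $d$; since their total is $2K$ this forces an exact $K/K$ split, i.e.\ a \textsc{Partition} solution.

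The main obstacle is getting the set $T$ in the structural claim exactly right — in particular its last case, where the single transposition moves the vetoed candidate — since the answer is $\{\ell,\ell',r,r'\}$, not $\{\ell,r\}$, and this is precisely why tractability sets in at $m=5$ and why $m=4$ (where $T$ is the whole candidate set, so there is no ``free'' winner outside $T$) is hard. The fiddly point in the $m=4$ reduction is doing three things simultaneously with Dodgson$_1$-SP ballots only: keeping $c$ permanently behind $a$, keeping $p=a$ from being a free winner, and still making the winning condition an exact subset-sum.
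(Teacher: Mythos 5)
Your proposal is correct and follows essentially the same route as the paper: the $m\ge 5$ case rests on the observation that a Dodgson$_1$-SP ballot can only veto one of the two leftmost or two rightmost candidates on the axis (so some candidate is never vetoed once $m\ge 5$, reducing the question to whether any nonmanipulator vetoes $p$), the $m=3$ case is Observation~\ref{o:three-cand}, and the $m=4$ case is a \textsc{Partition} reduction with two weight-$K$ nonmanipulators forcing the manipulators into an exact $K/K$ split between two vetoable candidates. Your $m=4$ gadget places $p$ at an endpoint of the axis rather than in the paper's second position, but with $w_1=w_2=K$ it is the same construction in all essentials, and your explicit verification that all four candidates are vetoable by Dodgson$_1$-SP ballots is exactly the feasibility check the paper's adapted votes provide.
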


We conclude this section with a brief comment about single-caved
electorates.  (We remind the reader that single-caved is not a
``nearness to SP'' notion, but rather is in some sense a
mirror-sibling.)  For scoring vectors $(\alpha_1,\alpha_2,\alpha_3)$,
the known $\ccwm$ dichotomy result for single-peaked electorates 
is that if $\alpha_1 - \alpha_3 > 2(\alpha_2 - \alpha_3)$ 
then the problem is $\np$-complete
and otherwise the problem is in $\p$.  For single-caved, 
the opposite holds for each case that is not in P
in the general case.

\begin{theorem}\label{t:manip-sc}
  For each $\alpha_1 \geq \alpha_2 > \alpha_3$, $\ccwm$ for
  $(\alpha_1,\alpha_2,\alpha_3)$ elections over single-caved societies is
  $\np$-complete if 
  $(\alpha_1 - \alpha_3) \leq 2(\alpha_2-\alpha_3)$  
  and otherwise is in $\p$.
\end{theorem}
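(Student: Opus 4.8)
The plan is to first normalize so that $\alpha_3=0$: replacing $(\alpha_1,\alpha_2,\alpha_3)$ by $(\alpha_1-\alpha_3,\alpha_2-\alpha_3,0)$ merely subtracts the constant $\alpha_3$ from every candidate's score in every profile, so it leaves all winner sets (hence all $\ccwm$ instances) unchanged; thus I may take the scoring vector to be $(\alpha_1,\alpha_2,0)$ with $\alpha_1\ge\alpha_2\ge 1$, and the dividing condition $\alpha_1-\alpha_3\le 2(\alpha_2-\alpha_3)$ becomes simply $\alpha_1\le 2\alpha_2$. Fix the societal axis and name the candidates so that it is $a\,L\,b\,L\,c$. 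The one structural fact I use throughout: with three candidates a vote is single-caved with respect to $L$ exactly when the axis-middle candidate $b$ is not ranked first, so over any single-caved profile of total weight $W$ we have $\score(b)\le\alpha_2W$ and $\score(a)+\score(c)\ge\alpha_1W$ (each voter's first choice being $a$ or $c$); and since reversing the axis leaves the restriction unchanged, the endpoints $a$ and $c$ are interchangeable. Membership in $\np$ is immediate in all cases.

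For the $\np$-hardness half (the case $\alpha_1\le 2\alpha_2$) I would reduce from \textsc{Partition}. Given positive integers with sum $2T$, first scale them all by $\alpha_1+\alpha_2$; use the resulting numbers (whose sum I again call $2T$) as the manipulator weights, let the distinguished candidate be the axis-middle one, $p=b$, and add two nonmanipulator votes $a\succ c\succ b$ and $c\succ a\succ b$, each of weight $(2\alpha_2-\alpha_1)T/(\alpha_1+\alpha_2)$ (an integer after the scaling; no nonmanipulators at all when $\alpha_1=2\alpha_2$). These two votes are single-caved, and by the $a$--$c$ symmetry they give $b$ base score $0$ and give $a$ and $c$ equal base scores $(2\alpha_2-\alpha_1)T$. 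Since $b$ can never be ranked first, a simple swapping argument shows that without loss of generality every manipulator ranks $b$ second; the only remaining freedom is, voter by voter, whether $a$ or $c$ takes the top slot. A one-line computation with the three scores then shows that, with these parameters, $b$ becomes a winner if and only if the manipulators who put $a$ on top have total weight exactly $T$ --- that is, if and only if the \textsc{Partition} instance is a yes-instance. The only things to check carefully are that these base scores are achievable by single-caved votes (they are, as just described) and that the two inequalities ``$\score(b)\ge\score(a)$'' and ``$\score(b)\ge\score(c)$'' pinch the ``$a$-on-top'' weight down to the single value $T$.

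For membership in $\p$ (the case $\alpha_1>2\alpha_2$) I would argue by cases on $p$, writing $W_{\mathrm{man}}$ for the total manipulator weight and $a_0,b_0,c_0$ for the nonmanipulators' scores. If $p=b$: adding the inequalities ``$\score(b)\ge\score(a)$'' and ``$\score(b)\ge\score(c)$'' needed for $b$ to win gives $2\score(b)\ge\score(a)+\score(c)$, but the structural fact forces $2\score(b)\le 2\alpha_2W<\alpha_1W\le\score(a)+\score(c)$ whenever the total weight $W$ is positive; so the answer is simply \emph{no} (and \emph{yes} in the degenerate zero-weight case). If $p\in\{a,c\}$, say $p=a$ (symmetric): a swapping argument shows every manipulator may without loss of generality rank $a$ first, so $\score(a)$ is pinned at $a_0+\alpha_1W_{\mathrm{man}}$, and each manipulator's only remaining choice is whether the second slot (worth $\alpha_2$) goes to $b$ or to $c$; letting $x$ be the total weight whose second slot goes to $b$, the instance is a \emph{yes}-instance iff some achievable $x$ lands in $[\ell,r]$, where $\ell:=W_{\mathrm{man}}-(a_0+\alpha_1W_{\mathrm{man}}-c_0)/\alpha_2$ and $r:=(a_0+\alpha_1W_{\mathrm{man}}-b_0)/\alpha_2$. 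The key claim is that in this regime $[\ell,r]$ cannot be nonempty while also $r<W_{\mathrm{man}}$: plugging the single-caved cap $b_0\le\alpha_2W_0$ and the identity $a_0+b_0+c_0=(\alpha_1+\alpha_2)W_0$ into ``$\ell\le r$'' and ``$r<W_{\mathrm{man}}$'' collapses to $(\alpha_1-2\alpha_2)(a_0+\alpha_1W_{\mathrm{man}})<0$, impossible since $\alpha_1>2\alpha_2$. Hence any feasible instance has $r\ge W_{\mathrm{man}}$ (and also $\ell\le W_{\mathrm{man}}$, since feasibility needs $[\ell,r]\cap[0,W_{\mathrm{man}}]\ne\emptyset$), so the trivial choice $x=W_{\mathrm{man}}$ --- every manipulator votes $a\succ b\succ c$ --- already works; thus $\ccwm$ reduces here to testing that one fixed strategy, which is polynomial time.

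I expect the small algebraic ``narrow-interval'' lemma of the previous paragraph (the collapse to $(\alpha_1-2\alpha_2)(a_0+\alpha_1W_{\mathrm{man}})<0$) to be the only real content; everything else is bookkeeping with three scores and routine swapping arguments. As a sanity check on why the dividing line flips relative to the known single-peaked dichotomy of \citet{fal-hem-hem-rot:j:single-peaked-preferences}: reversing every vote carries single-peaked profiles to single-caved ones and back, carries the scoring vector $(\alpha_1,\alpha_2,\alpha_3)$ to (a shift of) $(\alpha_1-\alpha_3,\alpha_1-\alpha_2,0)$, and swaps ``is a winner'' with ``is a lowest scorer'' --- which is exactly why $\alpha_1-\alpha_3>2(\alpha_2-\alpha_3)$ (hard for single-peaked) turns into $\alpha_1-\alpha_3\le 2(\alpha_2-\alpha_3)$ (hard for single-caved); but the self-contained case analysis above seems cleaner than pushing that correspondence through the ``lowest scorer'' twist.
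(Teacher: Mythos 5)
Your proposal is correct and follows essentially the same route as the paper's proof: the same normalization to $\alpha_3=0$, the same observation that the axis-middle candidate is never ranked first (hence loses whenever $\alpha_1>2\alpha_2$ and the total weight is positive), the same \textsc{Partition} reduction with two weight-$(2\alpha_2-\alpha_1)K$ nonmanipulators ranking $p$ last and manipulator weights scaled by $\alpha_1+\alpha_2$, and the same canonical strategy ($p$ first, middle second) for the polynomial-time case. The only difference is that your interval computation $[\ell,r]$ explicitly justifies why that canonical strategy is optimal when $p$ is an endpoint, a step the paper leaves as ``easy to check''; your algebra there is correct.
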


\section{Control}\label{sec:control}
The very first results 
of \citet{fal-hem-hem-rot:j:single-peaked-preferences} showing
that $\np$-complete general-case control results can simplify to $\p$ results
for single-peaked electorates were for constructive control by adding
voters and for constructive control by deleting voters, for approval
elections.
We show that each of those results can be 
reestablished even in the presence of 
logarithmically many mavericks.  (Indeed, we mention in passing
that even if the attacker is allowed to simultaneously both
add and delete voters---so-called ``AV+DV'' multimode 
control in the recent
model that allows simultaneous 
attacks~\citep{fal-hem-hem:c:multimode}---the complexity of 
planning an optimal attack still remains polynomial-time 
even with logarithmically many mavericks.)  
\begin{theorem}\label{t:stcp}
CCAV and CCDV for approval
elections over log-maverick-SP societies are each 
in $\p$.  For CCAV, the complexity remains in $\p$
even for the case where no limit is imposed on the 
number of mavericks in the initial voter set, and the
number of mavericks in the set of potential
additional voters is logarithmically 
bounded (in the overall problem input size).\footnote{By 
that last part, we mean 
precisely the definition---including its 
various restrictions on the complexity of the function---of the notion
of log-maverick-SP, except with the limit being placed just 
on the number of mavericks in the additional voter set.
\tarkhide{Formally put, to avoid any possibility of ambiguity or 
confusion, we mean that 
for each
function $f$ 
(that is computable in time polynomial in 
the size of the input;
and to avoid possible technical problems, we require that 
$f$ be nondecreasing)
whose 
\emph{value} is $\bigo(\log(\mathrm{ProblemInputSize}))$,
it holds that 
``CCAV
for approval elections over inputs for which (as is standard
in our model regarding SP and nearly-SP cases, a societal 
ordering of single-peakedness is given a part of the input, and)
at most $f(\mathrm{ProblemInputSize})$ of the additional 
voters are inconsistent with the societal ordering
is in $\p$.''}}%
\end{theorem}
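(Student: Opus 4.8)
The plan is to reduce, in polynomial time, each of these problems to polynomially many instances of a control problem in which only a single-peaked \emph{pool} of voters is involved, exploiting the fact that there are at most logarithmically many mavericks. Fix the (polynomial-time computable, nondecreasing) function $f$ of value $\bigo(\log(\mathrm{ProblemInputSize}))$ implicit in the statement. Given an input---an approval election $(C,V)$, a societal order $L$, a distinguished candidate $p$, a bound $K$, and, for CCAV, a pool $W$---first use $L$ to identify in polynomial time the set $M$ of maverick voters, i.e., those whose approval set is not an interval of $L$. In every case we will need only that the relevant part of $M$ has at most $f(\mathrm{ProblemInputSize})$ elements, so that it has at most $2^{f(\mathrm{ProblemInputSize})} = 2^{\bigo(\log(\mathrm{ProblemInputSize}))} = \poly(\mathrm{ProblemInputSize})$ subsets.

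For CCAV, enumerate every subset $S$ of the mavericks in $W$ (there are polynomially many, since the standard model bounds the total number of mavericks in $V$ and $W$, and the stronger variant bounds the number of mavericks in $W$ directly). For each such $S$ with $\|S\| \le K$, form the modified base election $(C, V \cup S)$ and test whether adding at most $K - \|S\|$ voters from the single-peaked sub-pool $W \setminus M$ can make $p$ a winner; accept iff some $S$ succeeds. This is correct because a successful $W'\subseteq W$ decomposes disjointly as $W' = S \cup T$ with $S = W' \cap M$ (one of the enumerated sets) and $T = W' \setminus M$ single-peaked, and conversely any enumerated $S$ together with a witnessing $T$ yields a successful $W'$. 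Since the base election $(C,V\cup S)$ is never required to be single-peaked, the stronger version---no limit on mavericks in the initial voter set, only on those in the additional pool---is obtained by the identical algorithm.

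For CCDV, enumerate every subset $D \subseteq M$ of mavericks to delete. For each $D$ with $\|D\| \le K$, test whether, in the election $(C, V \setminus D)$ with the surviving mavericks $M \setminus D$ now forbidden from deletion, one can delete at most $K - \|D\|$ of the single-peaked voters $V \setminus M$ to make $p$ a winner; accept iff some $D$ succeeds. Correctness is analogous (a successful deletion set splits into its maverick part, which is enumerated, and its single-peaked part). Both reductions leave a subproblem of the same shape: a CCAV (resp.\ CCDV) instance for approval in which only the pool of addable (resp.\ deletable) voters must be single-peaked with respect to $L$, the other voters being arbitrary (and, for CCDV, a designated subset of them undeletable), subject to an addition (resp.\ deletion) budget. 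I would dispatch this by a mild adaptation of the polynomial-time CCAV/CCDV-for-approval-over-single-peaked algorithms of \citet{fal-hem-hem-rot:j:single-peaked-preferences}: those algorithms may assume w.l.o.g.\ that one only adds voters approving $p$ (resp.\ only deletes voters not approving $p$), and single-peaked votes of those restricted kinds have exactly the interval/laminar structure relative to $p$ that the algorithms use, while all remaining voters enter only through the fixed initial approval scores, which the algorithms already treat as opaque input; an add/delete budget and a set of undeletable voters are straightforward to accommodate.

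Since we solve polynomially many polynomial-time subproblems, the whole procedure runs in polynomial time, giving both CCAV and CCDV (and, by combining the same guessing step with the multimode algorithm of \citet{fal-hem-hem:c:multimode}, also AV+DV). The main obstacle is the last step: carefully checking that the single-peaked approval control algorithms of \citet{fal-hem-hem-rot:j:single-peaked-preferences} remain correct (i) with an arbitrary, not-necessarily-single-peaked base electorate, (ii) with a designated subset of voters forbidden from deletion, and (iii) under an addition/deletion budget. Once that is in hand, the ``guess the $\bigo(\log)$ mavericks'' layer on top is immediate, since $2^{\bigo(\log n)}$ is polynomial in $n$.
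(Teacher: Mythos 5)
Your proposal is correct and follows essentially the same route as the paper: enumerate the polynomially many subsets of the $\bigo(\log)$ mavericks, adjust the budget, and disjunctively reduce to the known polynomial-time single-peaked approval CCAV/CCDV algorithms, observing (exactly as the paper does for CCDV) that those algorithms tolerate non-single-peaked voters that are merely ``fixed''/nondeletable. The only cosmetic difference is in the CCAV branch, where the paper sidesteps your adaptation step~(i) by ``demaverickifying'' the base electorate---replacing each maverick by one singleton-approval voter per approved candidate, which preserves all approval scores and is automatically single-peaked---so that the published algorithm can be invoked as a black box rather than re-inspected.
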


\tarkhide{Although we will soon prove a more general result, we start by proving
this result directly.  We do so both as that will make the more
general result clearer, and as P-time results are the core focus of
this paper.}  Our proof involves the ``demaverickification'' of the
society, in order to allow us to exploit the power of
single-peakedness.  
\tarkhide{By doing so, our proof establishes that there is a 
polynomial-time 
disjunctive truth-table reduction (see \citet{lad-lyn-sel:j:com} for
the formal definition of $\leq_{dtt}^p$, but one does not need to know
that to follow our proof) to the single-peaked case.}
\fixformissingproof
\tarkhideadd{Now, before we move on to other election systems, let us
  pause to wonder whether Theorem~\ref{t:stcp} is just the tip of an
  iceberg, and is in fact hiding some broader connection between
  number of mavericks and computational complexity theory.}{Before 
  we move on to other election systems, let us pause to look at 
  whether Theorem~\ref{t:stcp} is a reflection of some
  broader connection between number of mavericks and computational
  complexity theory.}
We won't give this type of discussion
for all, or even most, of our theorems.  But it is worthwhile to,
since this is our first control result, look here at what 
holds.  \tarkhideadd{And
what holds is that Theorem~\ref{t:stcp} is indeed in some sense 
the tip of an iceberg.  However, it is an iceberg whose tip is  
its most interesting part, since it gives the part that admits
polynomial-time attacks.  

Still, the rest of the iceberg 
brings out an interesting connection between maverick 
frequency and nondeterminism.
Let us think again of the proof we just
saw.  It worked by sequentially generating each member of the powerset
of a logarithmic-sized set (call it $Q$).  And we did that, naturally
enough, in polynomial time.  However, note that we could also have
done it with nondeterminism.  We can nondeterministically guess for
each member of $Q$ whether or not it will be added (for CCAV) or
deleted (for CCDV).  And then after that nondeterministic guess, we
for the CCAV case do the demaverickification presented in the above
theorem's proof and for the CCDV case do the deletable/nondeletable
marking, and then we run the polynomial-time algorithms for the
single-peaked approval-voting CCAV and the approval-voting CCDV (with
deletable/nondeletable flag, and all mavericks---those not consistent
with the societal ordering---being nondeletable) cases.  It is easy to
see that above proof argument works fine with the change to
nondeterminism.  Indeed, the reason Theorem~\ref{t:stcp} is 
about ``$\p$'' is because sequentially handling $\bigo(\log(\mathrm{ProblemInputSize}))$
nondeterministic bits can be done in polynomial time.

So, what }{What }%
\newcommand{\NONDETTIME}{{\rm NONDET\mbox{-}TIME}}%
underlies the above theorem are the following results that say
that frequency of mavericks in one's society exacts a price, in
nondeterminism.  (We here are 
proving just an upper bound, but we
conjecture that the connection is quite tight---that commonality of
wild voter behavior is very closely connected with nondeterminism.)  To
state the results, we need to briefly introduce some notions from
complexity theory.  Complexity theorists often separate out the
weighing of differing resources, putting bounds on each.  The only 
such class we need here is the class of languages that can be 
accepted in time $t(n)$ on machines using $g(n)$ bits of 
nondeterminism, which is typically denoted 
$\NONDETTIME[ g(n) , \, t(n)]$.  The most widely known
such classes are those of the limited nondeterminism 
hierarchy, known as the beta hierarchy, 
of \citet{fis-kin:j:beta} (see also~\citep{dia-tor:j:beta} 
and the survey~\citep{gol-lev-mun:j:limited-nondeterminism}).  
$\beta_k$ is the class
of sets that can be accepted in polynomial time on machines that use
$\bigo(\log^kn)$ bits of nondeterminism: $\beta_k = \{ L ~|~ (\exists
\mbox{ polynomial } t(n)) (\exists g(n) \in \bigo(\log^kn)) [L \in
\NONDETTIME[ g(n),\, t(n)]]$, or for short, $\beta_k = \NONDETTIME[ 
 \bigo(\log^kn),\, \poly]$.  Of course, $\beta_0 = \beta_1 = \p$\@.
We can now state our result, which
says that frequency of mavericity is paid for in nondeterminism.

\begin{theorem}\label{t:stcp-general}
CCAV and CCDV for approval
elections over $f(\cdot)$-maverick-SP societies are each 
in $\NONDETTIME[f(\mathrm{ProblemInputSize}),\, \poly]$.
For CCAV, the complexity remains in 
$\NONDETTIME[f(\mathrm{ProblemInputSize}),\, \poly]$
even for the case where no limit is imposed on the 
number of mavericks in the initial voter set, and the
number of mavericks in the set of potential
additional voters is $f(\cdot)$-bounded 
(in the overall problem input size).
\end{theorem}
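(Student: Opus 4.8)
\noindent\emph{Proof proposal.}\quad The plan is to reuse, essentially verbatim, the algorithm underlying the proof of Theorem~\ref{t:stcp}, but to replace its polynomial-time enumeration of the powerset of the maverick set by a single nondeterministic guess. Recall the shape of that algorithm: (a) it isolates the set $Q$ of maverick voters (for CCDV, the mavericks among the voters; for CCAV, the mavericks among the potential additional voters); (b) it tries, one at a time, each of the $2^{\|Q\|}$ ways of ``resolving'' the members of $Q$---for CCAV, which of them are added; for CCDV, which of them are deleted; and (c) for each such resolution it ``demaverickifies'' the resolved ballots (replacing each by an equivalent block of $L$-consistent approval ballots having the same per-candidate approval totals---e.g., splitting an approved set into its maximal runs along $L$---with the CCAV-added ones entering the registered electorate and, for CCDV, the kept ones entering as non-deletable ballots), decrements the attack budget by the number of $Q$-members the resolution spent a move on, and invokes the known polynomial-time algorithm for the single-peaked approval CCAV/CCDV problem (the CCDV version being the one that accepts a deletable/non-deletable flag). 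It accepts exactly if some resolution makes the single-peaked subroutine report success.

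The only super-polynomial ingredient here is the loop in step~(b), which runs $2^{\|Q\|}$ times, and in our setting $\|Q\| \le f(\mathrm{ProblemInputSize})$ by hypothesis. So the plan is: on input, compute $f(\mathrm{ProblemInputSize})$ (possible in polynomial time and well defined, since by definition $f$ is polynomial-time computable and nondecreasing), identify $Q$, nondeterministically guess one bit for each member of $Q$---thus at most $f(\mathrm{ProblemInputSize})$ bits---to pin down a single resolution, and then deterministically carry out step~(c): demaverickification, budget bookkeeping, and one call to the polynomial-time single-peaked subroutine, accepting iff that subroutine succeeds. Every branch runs in polynomial time, the total nondeterminism used is at most $f(\mathrm{ProblemInputSize})$ bits, and the machine accepts its input iff some branch accepts. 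Since a successful constructive attack on the original instance exists iff one exists for some resolution of $Q$ (which the corresponding branch then detects, by correctness of the single-peaked subroutine and the score-preserving nature of demaverickification), this machine witnesses membership in $\NONDETTIME[f(\mathrm{ProblemInputSize}),\,\poly]$. For the second (CCAV, unbounded initial mavericks) claim, the point is that the unboundedly many mavericks in the initial registered electorate are never guessed on---they are simply all ``in'', so they are demaverickified up front in polynomial time---and only the mavericks in the additional-voter pool, of which there are at most $f(\mathrm{ProblemInputSize})$, are guessed on; the rest of the argument is unchanged.

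The main obstacle I expect is simply confirming that the known polynomial-time single-peaked approval CCAV/CCDV algorithms really do apply to the transformed instances we hand them: after demaverickification the additional-voter pool (CCAV) resp.\ the deletable ballots (CCDV) are $L$-consistent, but the registered electorate (CCAV) resp.\ the non-deletable ballots (CCDV) may be an arbitrary collection of interval ballots produced by the decompositions. The reason this is harmless is that approval scores are additive, so these ballots enter only as additive offsets to the base scores and never affect the structure on which the single-peaked subroutine reasons (the ballots it is free to add or to delete). Verifying this compatibility, together with the routine accounting detail that adding or deleting a single maverick costs exactly one move regardless of how many interval ballots it decomposes into (the decomposition lands in the registered/non-deletable part, not the pool/deletable part), is where the only real work lies; the nondeterminism count and the matching of $\NONDETTIME$ acceptance semantics are then immediate.
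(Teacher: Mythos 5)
Your proposal is correct and is essentially the paper's own argument: the paper proves Theorem~\ref{t:stcp-general} precisely by observing that the powerset loop over the maverick set $Q$ in the proof of Theorem~\ref{t:stcp} can be replaced by a guess of one nondeterministic bit per member of $Q$ (at most $f(\mathrm{ProblemInputSize})$ bits), followed by the same deterministic demaverickification (for CCAV) or deletable/nondeletable marking (for CCDV) and a call to the polynomial-time single-peaked subroutine. Your compatibility worry is resolved exactly as you suspect---the paper's demaverickification replaces each maverick by one single-candidate-approval ballot per approved candidate, which preserves all approval totals and is trivially $L$-consistent, and for CCDV the paper notes the single-peaked algorithm tolerates non-deletable mavericks outright---so there is no gap.
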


\begin{corollary}\label{c:stcp-beta}
For each natural number $k$,
CCAV and CCDV for approval
elections over $\bigo(\log^kn)$-maverick-SP societies are each 
in $\beta_k$.
For CCAV, the complexity remains in 
$\beta_k$
even for the case where no limit is imposed on the 
number of mavericks in the initial voter set, and the
number of mavericks in the set of potential
additional voters is $\bigo(\log^kn)$-bounded 
(in the overall problem input size).
\end{corollary}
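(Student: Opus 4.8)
The plan is to derive this purely as a specialization of Theorem~\ref{t:stcp-general} together with the definition of the beta hierarchy recalled just above it. First I would unwind the shorthand ``$\bigo(\log^kn)$-maverick-SP'': the claim to be proved is that for every nondecreasing, polynomial-time-computable function $f$ whose \emph{value} is $\bigo(\log^k(\mathrm{ProblemInputSize}))$, CCAV and CCDV for approval elections over $f(\cdot)$-maverick-SP societies lie in $\beta_k$, and likewise for the asymmetric CCAV variant in which only the pool of additional voters is maverick-bounded. So I would fix such an $f$ and argue for that one problem, uniformly.

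Next I would invoke Theorem~\ref{t:stcp-general}, which already places each of these problems in $\NONDETTIME[f(\mathrm{ProblemInputSize}),\,\poly]$. Since the ``problem input size'' is just the instance length $n$ (up to the usual fixed polynomial encoding conventions), and since $f$'s value is by hypothesis $\bigo(\log^k n)$, we get the containment $\NONDETTIME[f(\mathrm{ProblemInputSize}),\,\poly] \subseteq \NONDETTIME[\bigo(\log^k n),\,\poly]$, simply because fewer nondeterministic bits can always be simulated (one guesses a length-$f(n)$ string within the budget $\bigo(\log^k n)$). By the definition $\beta_k = \NONDETTIME[\bigo(\log^k n),\,\poly]$ recorded in the excerpt, the right-hand side is exactly $\beta_k$, which finishes the first two systems; applying the same reasoning to the second sentence of Theorem~\ref{t:stcp-general} handles the asymmetric CCAV case.

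The only point that needs a moment's care — and it is bookkeeping rather than a genuine obstacle — is keeping the quantifier over $f$ and the argument to $f$ straight: the corollary, like the theorems it specializes, is a statement about a family of problems indexed by the maverick-bounding function, so one must note that the containment above holds uniformly for every admissible $f$ with value $\bigo(\log^k(\mathrm{ProblemInputSize}))$, not merely for the single function $f(n)=\lceil\log^k n\rceil$. Once that is observed, there is nothing further to prove; the corollary is immediate from Theorem~\ref{t:stcp-general} and the identity $\beta_k = \NONDETTIME[\bigo(\log^k n),\,\poly]$ (together with the degenerate observation that for $k\le 1$ this reduces to the $\p$ statement of Theorem~\ref{t:stcp}).
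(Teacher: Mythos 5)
Your proposal is correct and is exactly the route the paper intends: Corollary~\ref{c:stcp-beta} is an immediate specialization of Theorem~\ref{t:stcp-general} obtained by restricting to maverick-bounding functions $f$ whose value is $\bigo(\log^k n)$ and then applying the identity $\beta_k = \NONDETTIME[\bigo(\log^k n),\,\poly]$. Your care about quantifying uniformly over all admissible $f$ (rather than a single representative function) matches the paper's stated convention for the shorthand, so nothing is missing.
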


Theorem~\ref{t:stcp} follows from the $k=1$ case of this more general
corollary.  Now let us turn from our particularly detailed discussion
of CCAV and CCDV for approval voting, and let us look at other
election systems.

For Condorcet elections, both CCAV and CCDV are known to be 
$\np$-complete in the general case~\citep{bar-tov-tri:j:control} but 
to be in $\p$ for single-peaked electorates~\citep{bra-bri-hem-hem:c:sp2}.
For Condorcet, results analogous to those for approval under
$f(\cdot)$-maverick-SP societies hold.

\begin{theorem}\label{t:stcp-condorcet}
CCAV and CCDV for Condorcet
elections over $f(\cdot)$-maverick-SP societies are each 
in $\NONDETTIME[f(\mathrm{ProblemInputSize}),\, \poly]$.
For CCAV, the complexity remains in 
$\NONDETTIME[f(\mathrm{ProblemInputSize}),\, \poly]$
even for the case where no limit is imposed on the 
number of mavericks in the initial voter set, and the
number of mavericks in the set of potential
additional voters is $f(\cdot)$-bounded 
(in the overall problem input size).
\end{theorem}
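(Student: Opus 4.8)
The plan is to follow, essentially line for line, the proof of Theorem~\ref{t:stcp-general}, with the polynomial-time single-peaked Condorcet CCAV/CCDV algorithms of \citet{bra-bri-hem-hem:c:sp2} playing the role that the single-peaked approval algorithms played there. The only structural feature of Condorcet elections we need is that the winner is determined solely by the pairwise-majority margins, and that each individual vote (maverick or not) contributes a fixed $\{+1,-1\}$ increment to each such margin. So the overall strategy is: use the $f(\mathrm{ProblemInputSize})$ bits of nondeterminism to guess the disposition (added, deleted, or untouched) of the at-most-$f$ mavericks; once those are pinned down, the committed mavericks become an immutable part of the election that affects nothing but a fixed, efficiently computable additive offset to each pairwise margin; and what is left is a control instance in which every still-relevant voter is $L$-consistent, which we hand to the single-peaked algorithm.

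In more detail, for CCDV we nondeterministically guess, for each of the at-most-$f$ voters inconsistent with $L$, whether it is among the deleted voters (at most $f(\mathrm{ProblemInputSize})$ bits). On each branch, the mavericks guessed ``deleted'' are removed and the deletion budget is decremented; the mavericks guessed ``kept'' are folded into a fixed background-margin vector; and the resulting instance---all remaining and all still-deletable voters $L$-consistent, modulo that background vector---is passed to (a mild variant of) the polynomial-time single-peaked Condorcet CCDV algorithm, and we accept iff some branch's call accepts. For CCAV the argument is identical with ``deleted'' replaced by ``added,'' and here it is even cleaner: the mavericks already in the registered voter set $V$ are never added or removed, so they contribute to the fixed background with no guessing at all, which is exactly why the weaker hypothesis---unboundedly many mavericks in $V$, but at most $f(\mathrm{ProblemInputSize})$ mavericks in the pool $W$ of potential additional voters---suffices; we guess only the disposition of the at-most-$f$ mavericks in $W$, fold the ones guessed ``added'' into the background (shrinking the addition budget), and hand $W$ restricted to $L$-consistent voters, together with the $L$-consistent registered voters and the background offset, to the single-peaked Condorcet CCAV algorithm. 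Taking $f(n)\in\bigo(\log^k n)$ puts each of these in $\beta_k$, exactly as in the approval case.

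The main obstacle---and the only step that is not a direct transliteration of the proof of Theorem~\ref{t:stcp-general}---is verifying that the single-peaked Condorcet CCAV and CCDV algorithms tolerate an arbitrary fixed additive offset to the pairwise margins. Note that, unlike for approval, one genuinely cannot ``demaverickify'' a maverick vote by substituting $L$-consistent votes with the same pairwise effect: the integer span of the margin vectors of $L$-consistent linear orders is a proper subspace (single-peaked profiles never induce a majority cycle, for instance), so the committed mavericks' contributions must be carried along as a background rather than absorbed into extra $L$-consistent registered voters. I expect this offset-tolerance to be routine, however, because the single-peaked structure enters that algorithm only through the observation that, candidate by candidate, the available voters who help $p$ in its head-to-head contest against a given $c$ form an interval-like family with respect to $L$; a constant shift of the threshold that each contest must clear leaves that family structure---and hence the tractability of deciding which voters to add or delete---completely intact. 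Spelling out the (easy) adaptation of the \citet{bra-bri-hem-hem:c:sp2} algorithm to the offset setting is where the real work of the proof lies.
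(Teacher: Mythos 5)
Your proposal is correct and matches the paper's proof in all essentials: the paper likewise spends its $f(\mathrm{ProblemInputSize})$ nondeterministic bits guessing the disposition of the (non-$p$-top) mavericks and then runs a polynomial-time procedure on the $L$-consistent voters against whatever pairwise margins the committed mavericks leave behind. The one step you defer---that the single-peaked Condorcet CCAV/CCDV machinery tolerates an arbitrary additive offset to the margins---is confirmed by the paper, which simply writes that procedure out directly rather than citing prior work: each $L$-consistent voter whose peak lies left (resp.\ right) of $p$ prefers a prefix $\{b_1,\ldots,b_i\}$ (resp.\ $\{c_1,\ldots,c_i\}$) of the axis to $p$ and $p$ to everything else, so sorting each side by $i$ and trying every budget split $K_\ell+K_r$ is optimal no matter what the starting margins are.
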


\begin{corollary}\label{c:stcp-beta-condorcet}
For each natural number $k$,
CCAV and CCDV for Condorcet
elections over $\bigo(\log^kn)$-maverick-SP societies are each 
in $\beta_k$.
For CCAV, the complexity remains in 
$\beta_k$
even for the case where no limit is imposed on the 
number of mavericks in the initial voter set, and the
number of mavericks in the set of potential
additional voters is $\bigo(\log^kn)$-bounded 
(in the overall problem input size).
\end{corollary}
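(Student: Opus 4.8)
The plan is to obtain this corollary as an immediate specialization of Theorem~\ref{t:stcp-condorcet}, in exactly the way Corollary~\ref{c:stcp-beta} is obtained from Theorem~\ref{t:stcp-general}. First I would fix a natural number $k$ and recall that, by the conventions the paper uses for $\bigo(\log^kn)$-bounded maverick counts, the statement ``CCAV/CCDV for Condorcet over $\bigo(\log^kn)$-maverick-SP societies is in $\beta_k$'' is shorthand for: for every nondecreasing, polynomial-time-computable function $f$ whose \emph{value} is $\bigo(\log^k(\mathrm{ProblemInputSize}))$, the problem over $f(\cdot)$-maverick-SP societies lies in $\beta_k$. So it suffices to fix such an $f$.

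Next I would invoke Theorem~\ref{t:stcp-condorcet} for this very $f$: it yields that CCAV and CCDV for Condorcet elections over $f(\cdot)$-maverick-SP societies are each in $\NONDETTIME[f(\mathrm{ProblemInputSize}),\,\poly]$, and that the CCAV problem remains in this class even under the relaxed hypothesis (no bound on mavericks in the initial voter set, with the $f(\cdot)$-bound placed only on the pool of potential additional voters). Then I would simply unwind the definition of $\beta_k$: since $f(\mathrm{ProblemInputSize})$ is $\bigo(\log^k(\mathrm{ProblemInputSize}))$, we have $\NONDETTIME[f(\mathrm{ProblemInputSize}),\,\poly] \subseteq \NONDETTIME[\bigo(\log^k n),\,\poly] = \beta_k$, where the last equality is the definition of $\beta_k$ given in the text (the existential over the constant hidden in $\bigo(\cdot)$ and over the polynomial running time is precisely what $\beta_k$ packages). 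Applying the same unwinding to the relaxed-CCAV clause of Theorem~\ref{t:stcp-condorcet} gives the second sentence of the corollary.

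Since this is a pure specialization, I do not expect any genuine obstacle; the one thing to be careful about is bookkeeping---checking that the ``for all appropriate $f$'' quantifier hidden in the shorthand ``$\bigo(\log^kn)$-maverick-SP'' matches the ``for all appropriate $f$'' quantifier of Theorem~\ref{t:stcp-condorcet}, and that ``appropriate'' ($f$ nondecreasing and polynomial-time computable) means the same thing in both places. (If one preferred not to treat Theorem~\ref{t:stcp-condorcet} as a black box, one could instead re-run its nondeterministic procedure directly: nondeterministically guess which of the $\bigo(\log^kn)$ maverick-eligible additional/deletable voters to use, then run the known polynomial-time single-peaked Condorcet CCAV/CCDV algorithm of \citet{bra-bri-hem-hem:c:sp2} on the resulting demaverickified or flag-annotated instance; but invoking the theorem is cleaner.) Finally I would note that Theorem~\ref{t:stcp} is the $k=1$ instance of the approval-voting Corollary~\ref{c:stcp-beta}, and that the entire development here mirrors the approval case line for line, with the single-peaked Condorcet control algorithms of \citet{bra-bri-hem-hem:c:sp2} playing the role that the single-peaked approval control algorithms of \citet{fal-hem-hem-rot:j:single-peaked-preferences} played there.
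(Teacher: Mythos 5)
Your proposal is correct and matches the paper exactly: the paper gives no separate proof for this corollary, treating it (like Corollary~\ref{c:stcp-beta}) as an immediate specialization of the corresponding $\NONDETTIME$ theorem, here Theorem~\ref{t:stcp-condorcet}, by taking $f$ with value $\bigo(\log^k(\mathrm{ProblemInputSize}))$ and unwinding the definition $\beta_k = \NONDETTIME[\bigo(\log^kn),\,\poly]$. Your quantifier bookkeeping over the admissible functions $f$ is the right thing to check, and your parenthetical fallback (re-running the nondeterministic guess-then-solve-single-peaked procedure) is exactly the content of the theorem you are invoking.
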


For plurality, the most important of systems, 
CCAV and CCDV are known to be in $\p$ for the general 
case~\citep{bar-tov-tri:j:control}, so there is no need to
seek a maverick result there.  However,
CCAC and CCDC are both known to be $\np$-complete in
the general case and in $\p$ in the single-peaked 
case.  For both of those, a constant number of mavericks can be handled.
\begin{theorem}\label{t:plurality-maverick}
For each $k$,
CCAC and CCDC for plurality
over $k$-maverick-SP societies are
  in $\p$.
\end{theorem}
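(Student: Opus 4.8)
The plan is to reduce, in polynomial time, each of these problems to the known‑easy single‑peaked case, i.e., to the result of~\citet{fal-hem-hem-rot:j:single-peaked-preferences} that CCAC and CCDC for plurality over single‑peaked electorates are in~$\p$, by brute‑forcing the behavior of the (constantly many) mavericks. The key observation is that in a plurality election a voter contributes exactly one point, to whichever candidate is that voter's most‑preferred among the current candidate set; so a maverick affects the outcome only through the single candidate it ends up voting for. Since there are at most $k$ mavericks and $k$ is a constant, we can afford to guess, for each maverick $v$, the identity $c_v$ of the candidate that $v$ votes for in a (putative) successful attack; the number of such guesses is at most $(\|C\|+\|D\|)^k$ for CCAC and $\|C\|^k$ for CCDC, which is polynomial. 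For a fixed guess we then (i) check feasibility and record the forced structural decisions it entails, (ii) replace each maverick $v$ by a ``phantom'' non‑maverick voter that ranks $c_v$ first and is single‑peaked with respect to the societal order $L$ (such a vote always exists---for instance, the one ranking candidates by their distance from $c_v$ along $L$), so that the entire electorate becomes single‑peaked, and (iii) hand the resulting instance to the single‑peaked algorithm. Accept iff some guess yields acceptance; since there are polynomially many guesses and each takes polynomial time plus one call to the single‑peaked algorithm, the overall running time is polynomial.

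For CCAC (choose at most $K$ spoilers from $D$ to add to the qualified set $C$ so that $p$ wins): a guess ``maverick $v$ votes for $c_v$'' is feasible only if no candidate that $v$ ranks above $c_v$ lies in $C$ (those can never be kept out); it forces every spoiler that $v$ ranks above $c_v$ to be \emph{not} added, and, if $c_v \in D$, it forces $c_v$ to be added. We discard guesses whose forced ``add''/``don't add'' requirements conflict across mavericks, and we decrement the budget by the number of forced ``add'' spoilers, discarding if it goes negative. We then delete the forced‑out spoilers from $D$ outright and move the forced‑in spoilers into $C$; the votes are unchanged, so (after the phantom substitution) the instance is genuinely single‑peaked with respect to $L$ (restricted suitably) and is an ordinary CCAC instance with the reduced budget. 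A short argument shows the original instance is a yes‑instance iff some guess's transformed instance is: from a successful original add‑set, taking $c_v$ to be $v$'s actual favorite among the final candidates yields a consistent guess whose transformed instance is solved by the same add‑set minus the forced part; conversely, the forced constraints guarantee that each maverick really does vote for $c_v$ in the reconstituted original election, so the phantoms faithfully reproduce the maverick contributions to the scores.

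CCDC (delete at most $K$ candidates, none equal to $p$, so that $p$ wins) is the mirror image: ``maverick $v$ votes for $c_v$'' is infeasible if $v$ ranks $p$ above $c_v$ (we cannot delete $p$); it forces deletion of every candidate $v$ ranks above $c_v$, and it forces $c_v$ to survive. We discard guesses with conflicting forced ``delete''/``keep'' sets, decrement the budget by the number of forced deletions, delete the forced‑deletions outright, and substitute phantoms peaking at the (surviving) $c_v$'s. The remaining subproblem is single‑peaked CCDC with the reduced budget, \emph{except} that the at most $k$ forced‑keep candidates must also be protected from deletion. This last point is the only part that needs care beyond routine bookkeeping, and it is the main obstacle: we need that single‑peaked plurality CCDC stays in $\p$ when a constant number of candidates, besides $p$, are marked undeletable. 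This follows from a straightforward modification of the single‑peaked CCDC algorithm of~\citet{fal-hem-hem-rot:j:single-peaked-preferences} (its dynamic programming along the axis $L$ can treat the protected candidates exactly as it treats $p$), or one can re‑derive it directly; given it, correctness follows just as for CCAC. Besides this wrinkle, the only thing to be careful about is verifying that the phantom voters track the mavericks under \emph{every} add/delete operation still permitted once the guess is fixed --- which is precisely why the phantom is built to \emph{peak} at $c_v$ and why the candidates a maverick ranks above $c_v$ are forced out (CCAC) or forced deleted (CCDC).
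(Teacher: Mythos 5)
Your proposal is correct and follows essentially the same route as the paper's proof: brute-force over the $(\|C\cup A\|)^{k}$ choices of each maverick's top candidate, derive the forced add/don't-add (resp.\ delete/keep) decisions and discard inconsistent guesses, replace each maverick by a single-peaked phantom peaking at its guessed favorite, and invoke the single-peaked algorithm of \citet{fal-hem-hem-rot:j:single-peaked-preferences}. The one wrinkle you flag for CCDC---that the single-peaked algorithm must tolerate a constant number of undeletable candidates besides $p$---is exactly what the paper isolates as ``CCDC with restricted deleting,'' justified there, as by you, as implicit in the earlier paper's proof.
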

\tarkhide{As in the case of Theorem~\ref{t:stcp}, the idea of the
  algorithm is
  to (polynomial-time disjunctively truth-table) 
  reduce to the single-peaked case. However, here the
  mavericks require a more involved brute-force search and thus we can
  only handle a constant number of them.
}

Unfortunately, swooning cannot be
handled at all (unless $\p=\np$, of course).
Also, allowing the number of mavericks to be some
root of the input size cannot be handled.
\begin{theorem}\label{t:plurality-swoon}
CCAC and CCDC for plurality
elections over swoon-SP societies are
  $\np$-complete.
\end{theorem}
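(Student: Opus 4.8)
Membership in $\np$ is immediate for both problems: a control action is a size-$\le K$ subset of the candidate pool (for CCAC) or of the candidate set (for CCDC), which can be guessed, with the resulting plurality winner set computed in polynomial time; and being swoon-SP is itself a polynomial-time-checkable condition on an input (delete each voter's top candidate and test single-peakedness of what remains against $L$ with that candidate deleted). So the plan is to prove $\np$-hardness of each problem by a polynomial-time many-one reduction from \xthreec{}, following the overall shape of the classical general-case reductions for plurality CCAC and CCDC of \citet{bar-tov-tri:j:control}, but arranging the constructed election to be swoon-SP. The one extra freedom a swoon-SP profile has over a single-peaked one is that each voter's \emph{top} candidate is arbitrary, so the reduction must be set up so that the combinatorial payload rides on the voters' top (and, conveniently, second) choices, while every voter's remaining preferences can still be laid out consistently along a single societal axis $L$.

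For CCAC the plan is as follows. Given an \xthreec{} instance with elements $b_1,\dots,b_{3k}$ and $3$-sets $S_1,\dots,S_n$, use candidates $p$ (the one to be made a winner), one ``opponent'' $b_i$ per element, and one ``spoiler'' $d_j$ per set placed in the addition pool, with bound $K:=k$. Choose the vote multiplicities (a routine matter) so that $p$ and each $b_i$ receive first-place counts making every $b_i$ beat $p$ by exactly one vote, and so that for each $b_i\in S_j$ there is exactly one $b_i$-voter whose \emph{top} candidate is $d_j$ and who ranks every other spoiler far below $b_i$; the remaining $b_i$-voters and the $p$-voters rank $b_i$, respectively $p$, on top. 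Then adding $d_j$ removes exactly one first-place vote from each of the three opponents in $S_j$ and never touches $p$'s count, so $p$ becomes a winner exactly when the added spoilers cover all opponents---i.e.\ exactly when the \xthreec{} instance is a yes-instance (a size-$\le k$ cover of $3k$ elements by $3$-sets is necessarily exact). For swoon-SP-ness, fix the axis $L$ that puts the opponents and $p$ in one block and the spoilers in another, say $b_1 \, L \, \cdots \, L \, b_{3k} \, L \, p \, L \, d_1 \, L \, \cdots \, L \, d_n$: loyal $b_i$-voters and $p$-voters order their remaining candidates single-peakedly along $L$ around their peak, and a ``$d_j$-on-top'' $b_i$-voter takes $b_i$ as its second choice and then orders the rest single-peakedly around $b_i$ in $L\setminus\{d_j\}$, which automatically places all other spoilers (sitting at the far end of $L$) below $b_i$, exactly as needed. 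One then checks that deleting each voter's top candidate leaves a profile single-peaked with respect to $L$ minus that candidate, so the instance is swoon-SP.

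For CCDC the plan is to similarly reduce from \xthreec{}, mimicking the corresponding general-case plurality-CCDC reduction of \citet{bar-tov-tri:j:control} in the same spirit, the added difficulty being that deleting a candidate redistributes its first-place votes to those voters' \emph{next surviving} choices, so that the reduction's behavior is driven not only by top choices but by the second-and-lower part of each preference order---precisely the part that swoon-SP constrains. Here too the device is to cluster every ``redistribution-relevant'' candidate into one contiguous block of $L$, so that each voter's intended next recipient can be forced to be its second choice while the candidates irrelevant to that voter lie at the far ends of $L$, and then fill in the tails single-peakedly; this keeps the redistribution behavior intact while making every voter-minus-its-top single-peaked on $L$.

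The step I expect to be the main obstacle is exactly this last one: in plurality control by adding or deleting candidates the hardness lives in the \emph{lower} positions of the preference orders (who a voter ends up voting for once candidates are added or removed), which is the region swoon-SP ties down, so the crux is to exhibit a single axis $L$ together with swoon-SP tails that still realize all of the needed ``capture''/``redistribution'' incidences at once. A secondary point is soundness (the ``no'' direction): one must rule out \emph{unintended} winning control actions that exploit the particular tails we are forced to pick; this should follow from the design, since by construction the only candidates whose first-place count an in-budget action can move are $p$ and the opponents, and only along the incidences that were built in.
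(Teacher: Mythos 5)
Your CCAC half is essentially the paper's reduction and is sound in outline: the paper likewise reduces from \xthreec{}, places the spoilers at one end of the societal axis, and uses the swoon (top-choice) position of certain voters to let an added spoiler $a_j$ capture first-place votes away from each $b_i\in S_j$ while leaving the loyal $b_i$-voters and $p$-voters untouched. (The paper uses multiplicities of $2k$ per capture vote and an extra padding candidate $d$ to pin down the budget, where you use unit gaps; both work, and your check that the non-swooned tails can be laid out single-peakedly around each voter's peak with all other spoilers forced below $b_i$ is exactly the verification the paper performs.) One small thing to make explicit when you fix multiplicities: each added spoiler acquires a first-place score of its own (three captured votes), so $p$'s target score must be set at least that high.

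The CCDC half, however, is not a proof but a statement of intent, and the part you defer is precisely the part that carries the theorem. You correctly identify that in CCDC the hardness lives in where a deleted candidate's votes flow, which is the region swoon-SP pins down, but you never exhibit the election, the axis, or the soundness argument. The paper's construction resolves this by putting $p$ at the extreme end of the axis ($p \sord b_1 \sord \cdots \sord b_{3k} \sord a_1 \sord \cdots \sord a_n$) so that a swoon-SP vote of the form $a_j > b_i > \cdots > p$ is realizable with $p$ ranked \emph{last}; then deleting $a_j$ sends one vote to $p$ (via a vote $a_j > p > \cdots$) and one vote to each $b_i \in S_j$, each $b_i$ starts at $k-1$ points, and the budget forces exactly $k$ deletions from $A$ forming an exact cover. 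The ``$p$ ranked last'' device is what kills the unintended control actions you worry about (deleting $b_i$'s or extra $a_j$'s never routes votes to $p$), and nothing in your sketch supplies an equivalent mechanism. Without a concrete vote set and a verification that every required preference order is simultaneously realizable as swoon-SP on a single axis, the CCDC direction remains open in your write-up.
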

\begin{theorem}\label{t:plurality-manymavericks}
For each $\epsilon > 0$,
CCAC and CCDC for plurality
elections over $I^\epsilon$-maverick-SP societies
are $\np$-complete, where $I$ denotes the input size.
\end{theorem}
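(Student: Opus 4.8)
The plan is a padding (``dilution'') reduction. Membership in $\np$ is immediate: guess the set of candidates to add (for CCAC) or to delete (for CCDC) and check in polynomial time whether $p$ is then a plurality winner. For $\np$-hardness I would reduce from the general-case (no single-peakedness assumed) version of the same control problem for plurality, which is $\np$-complete by~\citet{bar-tov-tri:j:control}. The only obstacle to such a reduction is that a general-case instance may have \emph{every} voter behave wildly, i.e., be a maverick with respect to every societal order; the fix is to pad the instance with enough inert candidates that, measured against the blown-up instance size $I$, the original voters number at most $I^\epsilon$.

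Given a general-case instance of size $s$ (candidate set containing the distinguished candidate $p$, voter collection $V$, budget $k$; for CCAC also a spoiler pool), fix the integer $t := \lceil 1/\epsilon\rceil$ and adjoin $N := s^{t}$ fresh candidates $d_1,\dots,d_N$ to the set of \emph{initially present} candidates (for CCAC these are registered candidates, not spoilers). Extend every voter's preference order by appending $d_1,\dots,d_N$ (in that fixed order) at the bottom, let the required societal order $L$ list the original candidates in any fixed order followed by $d_1,\dots,d_N$, and leave $p$, $V$, and $k$ unchanged. No voters are added, so the number of mavericks is at most $\|V\|\le s$. Since the encoding of $L$ (indeed, already of each voter's preference list) now contains the $N$ new candidates, the size $I$ of the new instance satisfies $I\ge N=s^{t}$, whence $I^\epsilon\ge s^{\epsilon t}\ge s\ge\|V\|$; thus the constructed instance genuinely is $I^\epsilon$-maverick-SP. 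As $N$ is polynomial in $s$ for our fixed $\epsilon$, the construction runs in polynomial time.

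Correctness rests on the observation that each $d_i$ is \emph{election-irrelevant}. In every subelection reachable by the control action the candidate $p$ is present (for CCAC we only add candidates and $p$ is initially present; for CCDC deleting $p$ cannot help and may be assumed not to occur), and every voter ranks $p$ above every $d_i$; hence no $d_i$ is ever anyone's top choice, so inserting a $d_i$ into, or deleting a $d_i$ from, the current candidate set changes no voter's top choice and therefore no candidate's plurality score. Consequently a set of candidates witnessing success for the padded instance still witnesses success after all $d_i$'s are dropped from it (which only decreases its size, so the budget $k$ is still met), and conversely any witness for the original instance is a witness for the padded one verbatim. So the padded instance is a yes-instance iff the original is, which finishes the reduction and hence the theorem; the same construction serves for CCAC and CCDC, the only difference being whether the inert candidates are regarded as present-and-never-worth-adding or present-and-never-worth-deleting.

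The step needing the most care is the size bookkeeping in the second paragraph: one must invoke the concrete form of the instance size (``roughly $\|V\|\,\|C\|\log\|C\|$'' plus the size of the encoded societal order, as recalled in Section~\ref{sec:prelims}) to be sure the $N$ appended candidates genuinely push $I$ up to at least $s^{t}$, and then check the one-line inequality $\|V\|\le I^\epsilon$. Everything else---election-irrelevance of the padding and the $\np$ upper bound---is routine.
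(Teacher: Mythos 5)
Your proof is correct, but it takes a genuinely different route from the paper's. The paper does not reduce from the general-case control problem at all: it reuses its own swoon-SP reduction from X3C (the proof of Theorem~\ref{t:plurality-swoon}), whose voters are the mavericks, and then \emph{pads with voters}---appending $t$ blocks of carefully chosen single-peaked votes $V'_t$ that provably leave the yes/no answer unchanged---until the maverick count drops below $I'^\epsilon$ of the blown-up size. You instead \emph{pad with candidates}: you take the general-case $\np$-hardness of \citet{bar-tov-tri:j:control} as a black box, append $N=s^{\lceil 1/\epsilon\rceil}$ inert candidates at the bottom of every vote and of the societal axis, and observe that bottom-ranked candidates are plurality-irrelevant, so the instance's answer is preserved while every one of the at most $s\le I^\epsilon$ voters is permitted to be a maverick. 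Your argument is more modular and arguably simpler---there is nothing construction-specific to re-verify, whereas the paper must check that the padding votes in $V'_t$ do not perturb the X3C gadget (for CCDC this forces the extra care of having $p$ ranked first or last in each padding vote)---and your one delicate point, that the control action may touch the inert candidates, is correctly dispatched by noting that adding or deleting a never-top-ranked candidate changes no plurality score and that pruning $d_i$'s from a witness only shrinks it. The paper's voter-padding buys something yours does not: it keeps the candidate set of the hard instances small (linear in the X3C instance), so the hardness is not an artifact of an enormous, mostly inert candidate pool; your candidate-padding keeps the voter set small instead. Both are legitimate; just make sure, as you note, to cite the co-winner-model version of the general-case $\np$-completeness (which the paper asserts holds) rather than the unique-winner original.
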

For the Dodgson and PerceptionFlip notions of nearness to single-peakedness,
we can prove that allowing 
a constant number of 
adjacent-swaps (for each voter, separately, in the appropriate 
structure) still leaves the CCAC and CCDC problems in $\p$.
\tarkhide{(We mention in passing that the following result holds 
not just for constructive control, but even for 
the concept, which we are not focusing on in this paper,
of destructive control, 
in which one's goal is just to
preclude a certain candidate from winning.)}
\begin{theorem}\label{t:dp}
For each $k$,
CCAC and CCDC for plurality
elections are in $\p$ for 
Dodgson$_k$-SP societies and for 
PerceptionFlip$_k$-SP societies.
\end{theorem}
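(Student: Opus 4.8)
The plan is to build on two things already in hand: plurality CCAC and CCDC are in $\p$ for single-peaked electorates, and (Theorem~\ref{t:plurality-maverick}) they stay in $\p$ when a constant number of voters are allowed to deviate arbitrarily from the societal axis. The maverick proof brute-forces over those few exceptional voters and reduces the rest to the single-peaked case; that route is closed here, since under Dodgson$_k$-SP (resp.\ PerceptionFlip$_k$-SP) \emph{every} voter may fail to be consistent with $L$. So instead I would exploit that, although many voters deviate, each deviates only \emph{locally}: for a Dodgson$_k$-SP voter $v$, fixing a witnessing single-peaked vote $v^\ast$ obtained from $v$ by at most $k$ adjacent swaps, $v$ and $v^\ast$ differ in at most $k$ inverted pairs; for a PerceptionFlip$_k$-SP voter $v$, the axis $L'_v$ with respect to which $v$ is single-peaked agrees with $L$ outside a block of at most $2k$ candidates.

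The technical heart is a ``bounded-window'' lemma: for any surviving candidate set $C'$, the top candidate of $v$ within $C'$---the only thing about $v$ that plurality scoring sees---is determined by a constant-width ($O(k)$) window of consecutive members of $C'$ near $v$'s peak, together with $v$'s (fixed, given) preference order restricted to that window. Concretely, for Dodgson$_k$-SP one argues that $\mathrm{top}(v,C')$ lies among the top $O(k)$ candidates of $v^\ast$ restricted to $C'$, that those form a contiguous block of $C'$ (in the $L$-induced order) around the nearest survivor to $v^\ast$'s peak, and that $v^\ast$'s peak is within $L$-distance $O(k)$ of $v$'s actual first choice. For PerceptionFlip$_k$-SP one argues similarly using $L'_v$ in place of $L$ (and $v$ itself in place of $v^\ast$), observing that the $L'_v$-order and the $L$-order can matter differently only when the relevant survivors fall inside the $O(k)$-candidate perturbed block, whose contents can also be tracked.

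Granting that lemma, the algorithm is the single-peaked dynamic program swept along the axis $L$ (interleaving, for CCAC, the pool candidates at their axis positions and deciding for each candidate whether it is added/kept/deleted, subject to the budget $K$), but with the DP state enlarged to remember the last $O(k)$ decided-and-surviving candidates, plus (for PerceptionFlip) enough to reconstruct the local perturbed order. By the lemma this window, together with a bounded look-ahead, suffices to resolve each voter's contribution to the plurality tally at the moment the sweep passes its peak region, hence to maintain every candidate's running score and check at the end that $p$ is a winner. The state space is $m^{O(k)}$, which is polynomial because $k$ is a fixed constant (this is exactly why the result is stated only for constant $k$). The main obstacle I anticipate is pinning down the bounded-window lemma cleanly---in particular, showing the carried window is genuinely of bounded width even after deletions (so that, e.g., long runs of deleted candidates adjacent to a perturbed block are handled) and that the window can always be located from information available to a left-to-right (or two-pass) sweep; the interaction with the budget $K$, and for CCAC with the addable pool, is then routine bookkeeping.
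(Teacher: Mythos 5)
Your plan is essentially the paper's own proof: the paper formalizes your bounded-window lemma as the statement that Dodgson$_k$-SP and PerceptionFlip$_k$-SP plurality elections are $(k+1)$-\emph{local} (each candidate's plurality score in any surviving subset $C'$ is already determined by its $k{+}1$ nearest survivors on each side; Lemma~\ref{l:k-local-dodgson-perception}), and then runs exactly the window-state dynamic program you describe, with state space polynomial in $\|C\cup A\|^{O(k)}$ (Theorem~\ref{t:klocal}). The only cosmetic differences are that the paper states locality candidate-by-candidate rather than voter-by-voter, and that for CCDC it replaces the sweep by a simpler routine that guesses $p$'s $k$-radius neighborhood and then greedily deletes every candidate still beating $p$.
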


Our algorithm exploits the local nature of adding/deleting
candidates in single-peaked plurality elections.
Formally, \citet{fal-hem-hem-rot:j:single-peaked-preferences} 
made the following observation.

\begin{lemma}[Lemma~3.4 of~\citep{fal-hem-hem-rot:j:single-peaked-preferences}]
\label{lem:fhhr}
  Let $(C,V)$ be an election where $C = \{c_1, \ldots, c_m\}$ is a set
  of candidates, $V$
  is a collection of voters whose preferences are single-peaked with respect
  to societal axis $L$, and where $c_1\, L\, c_2\, L \,\cdots\, L\,
  c_m$. Within plurality, if $m \geq 2$ then
  \begin{enumerate}
  \item $\score_{(C,V)}(c_1) = \score_{(\{c_1,c_2\},V)}(c_1)$,
  \item for each $i$, $2 \leq i \leq m-1$, $\score_{(C,V)}(c_i) =
    \score_{(\{c_{i-1},c_{i},c_{i+1}\},V)}(c_i)$, and
  \item $\score_{(C,V)}(c_m) = \score_{(\{c_{m-1},c_m\},V)}(c_m)$.
  \end{enumerate}
\end{lemma}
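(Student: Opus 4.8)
\textbf{Proof plan for Lemma~\ref{lem:fhhr}.}
The plan is to reduce all three equalities to a single structural fact about an individual single-peaked vote: if $v$'s preferences are single-peaked with respect to $L$ and $c_1 \, L \, c_2 \, L \, \cdots \, L \, c_m$, and if $c_t$ is the candidate $v$ ranks first among all of $C$, then $P_v$ is strictly decreasing as one moves away from $c_t$ in either direction along $L$; that is, $c_t \, P_v \, c_{t-1} \, P_v \, \cdots \, P_v \, c_1$ and $c_t \, P_v \, c_{t+1} \, P_v \, \cdots \, P_v \, c_m$ (each chain read left to right in decreasing $P_v$-preference). I would prove this by first observing that the triple-wise condition from Section~\ref{sec:prelims}, applied in both of its ``$\lor$'' forms, says exactly that whenever $c_a \, L \, c_b \, L \, c_c$ the middle candidate $c_b$ is never $v$'s least preferred of the three. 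The two monotone chains then follow by a short induction on distance from $c_t$ along $L$: to the left, $c_t \, P_v \, c_{t-1}$ holds because $c_t$ is $v$'s overall top choice, and if $c_{j+1} \, P_v \, c_j$ is already known for some $j < t$, then applying ``the middle is not last'' to $c_{j-1} \, L \, c_j \, L \, c_{j+1}$ forces $c_j \, P_v \, c_{j-1}$; transitivity then propagates $c_t$'s dominance all the way out, and the right chain is symmetric.

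Given this fact, the lemma is essentially bookkeeping. In plurality, $\score_{(C',V)}(c)$ is the number of voters in $V$ whose top-ranked candidate within the set $C'$ is $c$. For part~(2), fix $i$ with $2 \le i \le m-1$ and consider the contiguous block $B = \{c_{i-1}, c_i, c_{i+1}\}$ (which is well defined since $c_{i-1}$ and $c_{i+1}$ exist). For a voter $v$ with peak $c_t$: if $t = i$ then the chains give $c_i \, P_v \, c_{i-1}$ and $c_i \, P_v \, c_{i+1}$, so $v$'s top within $B$ is $c_i$; if $t \ne i$ then the chains show $v$'s top within $B$ is the endpoint of $B$ closer to $c_t$ along $L$ (which is $c_t$ itself when $t \in \{i-1,i+1\}$), and in particular it is never $c_i$. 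Hence $v$ is counted toward $\score_{(B,V)}(c_i)$ exactly when $v$ is counted toward $\score_{(C,V)}(c_i)$, namely exactly when $c_t = c_i$; summing over all $v \in V$ gives the claimed equality. Parts~(1) and~(3) are the same argument applied to the endpoint blocks $\{c_1,c_2\}$ and $\{c_{m-1},c_m\}$: $v$'s top in $\{c_1,c_2\}$ is $c_1$ iff $v$'s peak is $c_1$ (if the peak is any $c_j$ with $j \ge 2$, the left chain — or directly $j \ge 2$ — makes $c_2 \, P_v \, c_1$), and symmetrically for $c_m$; here only $m \ge 2$ is needed so that $\{c_1,c_2\}$ exists.

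The only real content is the monotone-chain fact, and its proof is the one-line induction sketched above; everything after that is routine. I do not expect a genuine obstacle, but the point to be careful about is verifying both directions of the ``$v$ is counted toward $c_i$'' equivalence for \emph{every} voter, and checking that the small cases ($m = 2$ for part~(1)/(3), and $m = 3$ with $i = 2$ for part~(2)) do not make the relevant block degenerate — which they do not, precisely because the hypotheses $m \ge 2$ and $2 \le i \le m - 1$ guarantee the existence of the two or three consecutive candidates involved.
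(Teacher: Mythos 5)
Your proof is correct. The paper itself does not prove this lemma---it is imported verbatim as Lemma~3.4 of the cited reference \citep{fal-hem-hem-rot:j:single-peaked-preferences}---and your argument (deriving the two monotone chains away from the peak from the ``middle of a triple is never last'' reformulation of single-peakedness, then observing that a voter's plurality top within $\{c_{i-1},c_i,c_{i+1}\}$ is $c_i$ exactly when $c_i$ is the voter's peak) is the standard proof and the one the cited paper gives.
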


It turns out that this local structure, slightly distorted, still
occurs in Dodgson$_k$-SP societies and in PerceptionFlip$_k$-SP
societies. Thus, our strategy for proving Theorem~\ref{t:dp} is to
first formally define what me mean by a distorted variant of the above
lemma, and then to adapt the CCAC and CCDC algorithms of
\citet{fal-hem-hem-rot:j:single-peaked-preferences} for single-peaked
societies to the distorted setting.

\begin{definition}
  Let $C = \{c_1, \ldots, c_m\}$ be a set of candidates and let
  $\sord$ be a linear order over $C$ (the societal axis) such that
  $c_1 \sord c_2 \sord \cdots \sord c_m$. For each $c_i \in C$ and
  each nonnegative integer $k$, $0 \leq k \leq m-1$, we define
  $N(\sord,C,c_i,k) = \{ c_j \mid |i-j| \leq k\}$. We call $N(\sord,
  C,c_i,k)$ the $k$-radius neighborhood of $c_i$ with respect to $C$ and
  $\sord$.
\end{definition}

\begin{definition}
  Let $E = (C,V)$ be a plurality election, 
  let $\sord$ be a linear order over $C$ (the societal axis),
  and let $k$ be a positive integer. We say that $E$ is $k$-local with
  respect to $L$ if for each $c \in C$ and each $C' \subseteq C$
  such that $c \in C'$ it holds that
  $\score_{(N(\sord,C',c,k),V)}(c) = \score_{(C',V)}(c).$
\end{definition}

In particular, the proof of Lemma~\ref{lem:fhhr}
(given in~\citep{fal-hem-hem-rot:j:single-peaked-preferences}) shows that
single-peaked plurality elections are $1$-local with respect to the societal
axis. We extend this result to Dodgson$_k$-SP societies and to
PerceptionFlip$_k$-SP societies.

\begin{lemma}\label{l:k-local-dodgson-perception}
  Let $k$ be a positive integer, let $E = (C,V)$ be a plurality election, and
  let $\sord$ be some linear order over $C$ (the societal axis). 
\begin{enumerate}
\item[(a)] If $E$ is Dodgson$_k$-SP with respect to $\sord$ then 
  $E$ is $(k+1)$-local with respect to $\sord$.
\item[(b)] If $E$ is PerceptionFlip$_k$-SP with respect to $\sord$
  then $E$ is $(k+1)$-local with respect to $\sord$.
\end{enumerate}
\end{lemma}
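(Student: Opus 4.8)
The plan is to prove the key inequality directly from the definitions, treating each voter independently and then summing. Fix a voter $v$ and a candidate $c$, and fix some $C' \subseteq C$ with $c \in C'$. The contribution of $v$ to $\score_{(C',V)}(c)$ is $1$ if $c$ is the top-ranked candidate among $C'$ in $v$'s order, and $0$ otherwise; the contribution to $\score_{(N(\sord,C',c,k+1),V)}(c)$ is analogous with $C'$ replaced by the $(k+1)$-radius neighborhood. Since $N(\sord,C',c,k+1) \subseteq C'$, if $v$ ranks $c$ first among $C'$ then certainly $v$ ranks $c$ first among the smaller set, so the neighborhood-score is always at least the $C'$-score. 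The whole content of the lemma is the reverse: I must show that if some candidate in $C'$ beats $c$ in $v$'s order, then some candidate \emph{within distance $k+1$ of $c$ in $\sord$} already beats $c$ in $v$'s order. Equivalently: if $c$ is top among $N(\sord,C',c,k+1)$ for voter $v$, then $c$ is top among all of $C'$ for $v$.

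For part (a), the key step is to use the Dodgson$_k$-SP hypothesis: there is a sequence of at most $k$ adjacent swaps transforming $v$'s order into an order $v'$ that is single-peaked with respect to $\sord$. The crucial observation is that $k$ adjacent swaps move any single candidate by at most $k$ positions in the ranking, so for any two candidates $a,b$, if $a \,P_{v'}\, b$ but $b \,P_v\, a$, then in $v$ the candidates $a$ and $b$ are within $k$ positions of each other (they had to ``cross,'' and each swap moves a fixed candidate one position). Now suppose for contradiction that some $d \in C'$ with $d \,P_v\, c$ lies at $\sord$-distance more than $k+1$ from $c$. I would split into two cases according to whether $d \,P_{v'}\, c$ or $c \,P_{v'}\, d$. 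In the case $c \,P_{v'}\, d$, the pair $(d,c)$ crossed, so they are within $k$ ranking-positions in $v$; then applying single-peakedness of $v'$ to a suitable triple containing $c$, $d$, and a candidate strictly $\sord$-between them (which exists since the $\sord$-distance exceeds $1$) forces a candidate $\sord$-between $c$ and $d$—hence at $\sord$-distance at most $k+1$ from $c$, once the ``within $k$ positions'' bound is combined with how far single-peakedness can push things—to beat $c$ in $v$, contradicting that $c$ is top among $N(\sord,C',c,k+1)$. In the case $d \,P_{v'}\, c$, single-peakedness of $v'$ directly yields that for the candidate $e$ that is $\sord$-adjacent to $c$ on the side toward $d$, we have $e \,P_{v'}\, c$; then either $e \,P_v\, c$ already (and $e$ is at $\sord$-distance $1 \le k+1$, contradiction), or $(e,c)$ crossed and we recurse/bound using the $k$-swap budget. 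The upshot of the case analysis is that the worst case—a swapped pair whose members are $\sord$-far apart, with a single-peakedness-forced intermediate—costs at most $k+1$ in $\sord$-distance, which is exactly the radius we allowed.

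For part (b), the structure is the same but the hypothesis is dual: there is a sequence of at most $k$ adjacent swaps \emph{in $\sord$} producing an axis $\sord'$ with respect to which $v$ is single-peaked. The clean way to handle this is to observe that $v$ is single-peaked with respect to $\sord'$, and $\sord'$ differs from $\sord$ by at most $k$ adjacent transpositions, so every candidate sits at $\sord$-position within $k$ of its $\sord'$-position; hence $N(\sord,C',c,k+1) \supseteq N(\sord',C',c,1)$ is \emph{not} quite what we want, but the reverse containment direction we actually need—$N(\sord',C',c,1) \subseteq N(\sord,C',c,k+1)$—does hold, since a candidate $\sord'$-adjacent to $c$ is at $\sord$-distance at most $1 + 2k \le$ \ldots; here I need to be a little careful and instead argue that a candidate at $\sord'$-distance $1$ from $c$ is at $\sord$-distance at most $k+1$ from $c$, using that each of the $\le k$ swaps changes the $\sord$-gap between the two fixed candidates $c$ and that neighbor by at most $1$. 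Then apply $1$-locality of single-peaked elections (the proof of Lemma~\ref{lem:fhhr}) with respect to $\sord'$: $\score_{(N(\sord',C',c,1),V)}(c) = \score_{(C',V)}(c)$, and since $N(\sord',C',c,1) \subseteq N(\sord,C',c,k+1) \subseteq C'$, monotonicity of the score in the candidate set (more precisely, the same ``top among a subset'' monotonicity used above) squeezes $\score_{(N(\sord,C',c,k+1),V)}(c)$ between the two equal quantities, giving equality.

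The main obstacle I anticipate is the bookkeeping in part (a): correctly relating the number of adjacent swaps in the \emph{ranking} to $\sord$-distances, and pinning down exactly why a single-peakedness-forced intermediate candidate plus a $\le k$-position crossing in the ranking lands within $\sord$-radius $k+1$ rather than $2k$ or $k$. Part (b) should be cleaner because there the $k$ swaps act directly on $\sord$, so the distance bookkeeping is immediate and one gets to invoke the already-proved $1$-locality of the genuinely single-peaked case as a black box. It may in fact be cleanest to prove (b) first and then reuse its distance lemma inside (a).
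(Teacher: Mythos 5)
Your part (b) is essentially the paper's own argument and is sound: for each voter you pass to that voter's perturbed axis $\sord'$, invoke the $1$-locality of genuinely single-peaked votes with respect to $\sord'$, and check that an $\sord'$-neighbor of $c$ lies in $N(\sord,C',c,k+1)$ because each of the at most $k$ axis swaps changes by at most one the number of candidates sitting between $c$ and that fixed neighbor. (Just make sure you apply the $1$-locality voter by voter, since each voter has its own $\sord'$, and only then sum.) Part (a), however, is not yet a proof, and the missing step is the entire content of the lemma. What must be shown is: if some candidate of $C'$ beats $c$ in $v$, then some candidate of $N(\sord,C',c,k+1)$ beats $c$ in $v$. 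At the decisive moment your sketch says ``once the `within $k$ positions' bound is combined with how far single-peakedness can push things'' and ``we recurse/bound using the $k$-swap budget,'' and you concede you cannot yet tell whether the radius comes out to $k$, $k+1$, or $2k$. Two concrete obstacles remain: (i) your crossing observation bounds distances in the \emph{ranking} $v$, while the neighborhood is defined by distances on the \emph{axis}; the only bridge between the two is single-peakedness of $v'$, and you never build it. (ii) For a general $C'$ the neighborhood counts only members of $C'$, so measuring ``$\sord$-distance'' in the full order does not match the definition; you need the paper's reduction that it suffices to take $C'=C$ (because $(C',V)$ is again Dodgson$_k$-SP with respect to $\sord$ restricted to $C'$), or an argument carried out relative to $C'$.

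For comparison, the paper closes part (a) by a swap-counting argument organized around the location of the \emph{peak} of $v'$ rather than around the orientation of the pair $(d,c)$ in $v'$. If the peak lies outside the $(k+1)$-radius neighborhood of $c_i$, then all $k+1$ neighborhood candidates on the peak's side beat $c_i$ in $v'$, and $c_i$ cannot overtake all of them with only $k$ swaps, so one of them still beats $c_i$ in $v$. If the peak is $c_{i+s}$ with $1\le s\le k+1$ and $v$'s top candidate $c_j$ lies outside the neighborhood, then lifting $c_j$ over the at least $k+2-s$ candidates that $v'$ ranks above it costs at least $k+2-s$ swaps, while lifting $c_i$ over the $s$ candidates between it and the peak costs at least $s$ more; the total $k+2$ exceeds the budget. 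Your idea—that each pair whose relative order flips between $v'$ and $v$ consumes a distinct swap, so at most $k$ pairs can flip—can be completed along similar lines by exhibiting, in each case, $k+1$ neighborhood candidates that beat $c_i$ in $v'$ (or a combination of flips totalling $k+1$), but that exhibition is precisely what is missing from your write-up.
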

\begin{proof}
  Cases (a) and (b) are similar but not identical and thus we will
  handle each of them separately.

\smallskip\noindent\textbf{Case (a)}\quad Let $E = (C,V)$ be a Dodgson$_k$-SP election, where
    $C = \{c_1, \ldots, c_m\}$ and where the societal axis $\sord$ is
    such that $c_1 \sord c_2 \sord \cdots \sord c_m$.
Since for every $C' \subseteq C$, $(C',V)$ is Dodgson$_k$-SP with respect
to $\sord$ and since $E$ was chosen arbitrarily, it 
    suffices to show that for each candidate $c_i \in C$ it holds that
    $\score_{(N(\sord, C, c_i,k+1),V)}(c_i) = \score_{(C,V)}(c_i)$.

    Fix a candidate $c_i$ in $C$ and a voter $v$ in $V$. By definition
    of Dodgson$_k$-SP elections, there exists a vote $v'$ that is
    single-peaked with respect to $L$, such that $v$ can be obtained
    from $v'$ by at most $k$ swaps. Let $C' = N(\sord, C, c_i,k+1)$.
    We will show that $\score_{(C',\{v\})}(c_i) =
    \score_{(C,\{v\})}(c_i)$.

    First, if $v$ ranks $c_i$ on top among all candidates in $C$, then
    certainly $v$ ranks $c_i$ on top among candidates in $C'$. Thus,
    $\score_{(C',\{v\})}(c_i) \geq \score_{(C,\{v\})}(c_i)$. It remains to
    show that if $v$ does not rank $c_i$ on top among candidates in
    $C$, then $v$ also does not rank $c_i$ on top among candidates in
    $C'$. We consider two cases: 
    \begin{enumerate}
    \item The peak of $v'$ is not in $C'$. Then it is easy to verify that
      some candidate from $C'$ precedes $c_i$ in $v$. Otherwise, to
      convert $v'$ to $v$ one would need enough swaps for $c_i$ to
      pass $k+1$ candidates from $C'$ (either those ``to the left'' of
      $c_i$ in $\sord$ if the peak of $v'$ was ``to the left'' of
      $c_i$, or those ``to the right'' of $c_i$ if the peak of $v'$ was
      ``to the right'' of $c_i$).
    \item The peak of $v'$ is in $C'$. If $v$'s top-ranked candidate is in
      $C'$ then clearly $c_i$ is not ranked first among $C'$ in $v$.
      Thus, let us assume that the top ranked candidate in $v$ is not
      in $C'$ and that it is some candidate $c_j$. Without loss of
      generality, let us assume that $j > i+k+1$ (the case where $j <
      i-k-1$ is analogous). Let us also assume that the peak of $v'$
      is some candidate $c_{i+s}$, such that $1 \leq s \leq k+1$ (the
      case when $-k-1 \leq s \leq 0$ is impossible because converting
      $v'$ to $v$ requires at most $k$ swaps). The minimal number of
      swaps that convert $v'$ to a vote where $c_j$ is ranked first
      is at least $(k+1)-s+1 = k-s+2$ (these swaps involve $c_j$ and
      candidates $c_{i+s}, c_{i+s+1}, \ldots, c_{i+k+1}$. The minimum
      number of swaps in $v'$ that ensure that $c_i$ is ranked ahead
      of $c_{i+s}$ is at least $s$ (these swaps involve $c_i$ and
      candidates $c_{i+1}, c_{i+2}, \ldots, c_{i+s}$). Thus, the
      minimum number of swaps of candidates in $v'$ that ensure that
      $c_j$ is ranked first and that $c_i$ is ranked ahead of $c_{i+s}$ is
      $k+2$, which is more than the allowed $k$ swaps. Thus, this
      situation is impossible. As a result, some candidate from $C'$
      is ranked ahead of $c_i$ in $v$.
    \end{enumerate}
    Thus, we have shown that if $c_i$ is not ranked first in $v$ among
    the candidates from $C$, then $c_i$ is not ranked first in $v$
    among the candidates from $C'$. Thus, $\score_{(C',\{v\})}(c_i) \leq
    \score_{(C,\{v\})}(c_i)$, and with the previously shown inequality
    $\score_{(C',\{v\})}(c_i) \geq \score_{(C,\{v\})}(c_i)$, it must be
    the case that $\score_{(C',\{v\})}(c_i) =
    \score_{(C,\{v\})}(c_i)$. Since $v$ was chosen arbitrarily, we have
    that $\score_{(C',V)}(c_i) = \score_{(C,V)}(c_i)$. This completes the proof
    of part (a) of the theorem.

    \smallskip\noindent\textbf{Case (b)}\quad Let $E = (C,V)$ be an
    election, where $C = \{c_1, \ldots, c_m\}$.  Let us assume,
    without loss of generality, that $E$ is PerceptionFlip$_k$-SP via
    societal axis $\sord$, where $c_1 \sord c_2 \sord \cdots \sord
    c_m$.
Since for every $C' \subseteq C$, $(C',V)$ is PerceptionFlip$_k$-SP with respect
to $\sord$ and since $E$ was chosen arbitrarily, it 
suffices to show that for each candidate
$c_i \in C$ it holds that $\score_{(N(\sord, C, c_i,k+1),V)} =
\score_{(C,V)}(c_i)$.

    Let us fix a candidate $c_i$ in $C$ and a voter $v$ in $V$. This
    voter's preference order is single-peaked with respect to some
    order $\sord'$ that can be obtained from $\sord$ by at most $k$
    swaps of adjacent candidates. Let us assume that $c_{j'}$ is the
    candidate directly preceding $c_i$ in $\sord'$ and $c_{j''}$ is
    the candidate directly succeeding $c_i$ in $\sord'$ (in this proof
    we omit the easy-to-handle cases where $c_i$ is either the maximum
    or the minimum element of $\sord'$).

    We claim that for any $C' \subseteq C$ that includes $c_{j'}$,
    $c_{i}$, and $c_{j''}$, it holds that $\score_{(C',\{v\})}(c_i) =
    \score_{(C,\{v\})}(c_i)$. This is so because any voter that ranks
    $c_i$ on top, ranks $c_i$ on top irrespective of which other
    candidates are included. So, $\score_{(C',\{v\})}(c_i) \geq
    \score_{(C,\{v\})}(c_i)$. On the other hand, by Lemma~3.4 
    of~\citep{fal-hem-hem-rot:j:single-peaked-preferences},
    $\score_{(\{c_{j'},c_i,c_{j''}\},\{v\})}(c_i) =
    \score_{(C,\{v\})}(c_i)$. Thus, any voter that does not rank $c_i$
    on top, given a choice between $c_i$, $c_{j'}$, and $c_{j''}$
    ranks one of $c_{j'}$, $c_{j''}$ on top.  It is easy to see that
    $\{c_i, c_{j'}, c_{j''}\} \subseteq N(\sord, C, c_i, k+1)$, and
    so, $\score_{(C',\{v\})}(c_i) \leq \score_{(C,\{v\})}(c_i)$. 
    Thus, $\score_{(C',\{v\})}(c_i) = \score_{(C,\{v\})}(c_i)$ and since
    we picked $v$ arbitrarily, $\score_{(C',V)}(c_i) \leq
    \score_{(C,V)}(c_i)$. The proof of case (b) is complete.
\end{proof}

Now, Theorem~\ref{t:dp} is a consequence of the following, more
general, result.

\begin{theorem}\label{t:klocal}
For each constant $k$,
CCAC and CCDC for plurality
elections are in $\p$ for $k$-local elections. 
\end{theorem}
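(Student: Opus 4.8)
The plan is to adapt the dynamic-programming algorithm that \citet{fal-hem-hem-rot:j:single-peaked-preferences} give for single-peaked (i.e., $1$-local) plurality CCAC and CCDC, widening its bookkeeping ``window'' from constant size $3$ to constant size $2k+1$. First I would note that CCAC and CCDC are both instances of a single task: choose a final candidate set $C^\ast$ with $p \in C^\ast$, where for CCDC $C^\ast = C \setminus D$ for some $D \subseteq C \setminus \{p\}$ with $\|D\| \le K$, and for CCAC $C^\ast = C \cup A$ for some subset $A$ of the spoiler pool with $\|A\| \le K$; we succeed exactly when $p$ is a plurality winner of $(C^\ast,V)$. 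The definition of $k$-locality quantifies over all candidate subsets, so every sub-election of a $k$-local election is again $k$-local; hence for every admissible $C^\ast$ and each $c \in C^\ast$ we have $\score_{(C^\ast,V)}(c) = \score_{(N(L,C^\ast,c,k),V)}(c)$. In other words, $c$'s final plurality score is a function only of $c$ together with the (at most) $k$ surviving candidates immediately to each side of $c$ along the societal axis $L$.

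Armed with this locality, I would run a left-to-right sweep over the societal axis $c_1\,L\,c_2\,L\cdots L\,c_{m'}$ (the axis over $C$ for CCDC, over $C$ together with the spoiler pool for CCAC). Processing positions $1,\ldots,m'$ in order, at each step we decide whether the current candidate belongs to $C^\ast$: for CCDC a member of $C \setminus \{p\}$ may be dropped at a cost of one budget unit while $p$ is forced in, and for CCAC a spoiler may be added at a cost of one budget unit while every member of $C$ is forced in. The DP state records the current position, the budget spent so far (an integer in $\{0,\ldots,K\}$), and the ordered list of the at most $2k+1$ most recently selected candidates; since $k$ is a fixed constant there are only $O((m')^{2k+1})$ such lists, so the state space is polynomial. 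As soon as a selected candidate $c_j$ has accumulated $k$ selected candidates to its right, its $k$-radius neighborhood in $C^\ast$ is fully determined by the stored list, so we compute $\score_{(N(L,C^\ast,c_j,k),V)}(c_j)$ and ``retire'' $c_j$; at the right end of $L$ we retire whatever selected candidates remain. To decide whether $p$ ends up a winner I would either additionally carry in the state $p$'s (eventually finalized) score together with the running maximum score over retired competitors---both integers bounded by $\|V\|$, so the state stays polynomial---or, equivalently and a bit more cleanly, first enumerate the $O((m')^{2k})$ possible choices for $p$'s surviving $k$-radius neighborhood (which fixes $p$'s score $s_p$ and pins down several keep/drop decisions) and then run the budget-respecting sweep, rejecting any transition that would retire a competitor of score exceeding $s_p$. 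We accept iff some reachable terminal state has every candidate retired, $p$ a winner, and budget at most $K$; a standard traceback then recovers the required deletions/additions.

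Correctness is routine: by $k$-locality the scores computed at retirement equal the true plurality scores in $(C^\ast,V)$, and the DP ranges over exactly the admissible choices of $C^\ast$. The running time is polynomial because the state space is polynomial---this is the one place the hypothesis ``$k$ constant'' is used---and each transition performs only $O(k\cdot\|V\|)$ work. The step I expect to be the main obstacle, and really the only delicate point, is the look-ahead inherent in plurality scoring: a surviving candidate's score depends on survivors to its right that the sweep has not yet decided, so scores cannot be finalized online. The sliding-window device above (retire a candidate only once its right neighborhood is complete, and either track or pre-guess $p$'s neighborhood) resolves this, and one should additionally check the boundary behavior at the two ends of $L$ (fewer than $k$ neighbors available on one side) and, for CCAC, confirm that the spoiler pool sits inside the $k$-local societal axis so that each $(C\cup A,V)$ inherits $k$-locality.
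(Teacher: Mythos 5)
Your proposal is correct, and for CCAC it is essentially the paper's own argument: guess $p$'s surviving $k$-radius neighborhood (polynomially many choices since only $2k+1$ candidates matter), which by $k$-locality pins down $p$'s score $t$, and then run a dynamic program that chains together compatible $k$-radius neighborhoods of consecutive surviving candidates while minimizing the number of spoilers added subject to every score being at most $t$. The paper phrases this DP as a function $f(d_i,D')$ over (candidate, intended-neighborhood) pairs with a compatibility condition linking the neighborhood of $d_i$ to that of the previous survivor, and splits the axis into the parts left and right of $p$; your sliding window of the last $2k+1$ selected candidates is the same state space in different clothing. The one place you diverge is CCDC: you reuse the full window DP, whereas the paper avoids it entirely by a monotonicity observation special to deletion --- in plurality, deleting candidates can never decrease a surviving candidate's score, so once $p$'s $k$-radius neighborhood (and hence score) is fixed by enumerating the $O(m^{2k})$ choices of its $k$ nearest survivors on each side, any remaining candidate scoring above $p$ \emph{must} be deleted, and a greedy loop that repeatedly deletes such candidates and counts them against $K$ suffices. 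Your DP buys uniformity between the two control types (and would generalize to settings without that monotonicity), at the cost of a heavier algorithm; the paper's greedy buys simplicity for CCDC. Both are correct, and your identification of the look-ahead issue (a survivor's score depends on not-yet-decided survivors to its right) as the crux matches exactly what the paper's neighborhood-compatibility recursion is designed to handle.
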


However, we do have $\np$-completeness if 
in Dodgson$_k$-SP societies or PerceptionFlip$_k$-SP societies
we allow the parameter $k$ to increase to $m - 2$, where
$m$ is the total number of candidates involved in the election.
(This many swaps allow us to, in effect, use the
same technique as for swoon-SP societies.)
\begin{theorem}\label{t:dodgson-C-2}
CCAC and CCDC for plurality
elections are $\np$-complete for 
Dodgson$_{m-2}$-SP societies and for 
PerceptionFlip$_{m-2}$-SP societies, where
$m$ is the total number of candidates involved
in the election.
\end{theorem}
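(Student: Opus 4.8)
The plan is to prove the two directions of $\np$-completeness separately. Membership in $\np$ is routine for all four problems (CCAC and CCDC, each over Dodgson$_{m-2}$-SP and over PerceptionFlip$_{m-2}$-SP societies): a nondeterministic machine guesses the at most $K$ candidates to add (from the additional-candidate pool, for CCAC) or to delete (from among the non-distinguished candidates, for CCDC) and then checks in polynomial time that the distinguished candidate $p$ is a plurality winner of the resulting election. As these are promise problems---the input is guaranteed to be over a society of the stated type---nothing further needs to be checked.

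For $\np$-hardness the plan is to re-use, with at most a cosmetic adjustment, the reductions behind Theorem~\ref{t:plurality-swoon} ($\np$-completeness of CCAC and CCDC for plurality over swoon-SP societies). I would argue that those reductions can be taken to output instances that are \emph{simultaneously} valid Dodgson$_{m-2}$-SP instances and valid PerceptionFlip$_{m-2}$-SP instances, with the same societal axis $L$; since the decision question ``can $p$ be made a plurality winner by adding (resp.\ deleting) at most $K$ candidates'' does not depend on which near-single-peakedness promise is attached to the input, such a reduction at once witnesses $\np$-hardness for all three variants. By the definition of swoon-SP, each vote the swoon-SP construction emits is either single-peaked with respect to $L$---hence Dodgson$_0$-SP and PerceptionFlip$_0$-SP, and so (for $m \geq 2$) Dodgson$_{m-2}$-SP and PerceptionFlip$_{m-2}$-SP---or else consists of a single designated candidate $s$ on top of a tail that is single-peaked with respect to $L$ restricted to the remaining $m-1$ candidates. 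For a vote $v$ of this latter kind one checks: (i) inserting $s$ into the tail at the point where the tail's growing initial segments first cross $s$'s slot in $L$ produces a vote single-peaked with respect to $L$ that is reached from $v$ by sliding $s$ down past at most $m-2$ candidates, so $v$ is Dodgson$_{m-2}$-SP; and (ii) sliding $s$ along the axis until it abuts the peak of the tail produces an axis $L'$ reached from $L$ by at most $m-2$ adjacent swaps and with respect to which $v$ is single-peaked, so $v$ is PerceptionFlip$_{m-2}$-SP.

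The step I expect to be the real work is pinning down the bound $m-2$---rather than the effortless $m-1$---in both (i) and (ii), and checking that it holds uniformly over every vote the construction produces. The bound is comfortable when $s$ lies in the interior of $L$: the tail's initial segments must straddle $s$'s axis slot before they can exhaust all $m-1$ of the other candidates, which caps the insertion point in (i) at $m-2$ steps down and, by an analogous argument, caps the number of adjacent swaps needed in (ii); but it can fail when $s$ sits at an end of $L$ in a manner incompatible with the tail's direction. Since we are free to choose $L$---and, if need be, to pad it with two harmless extra candidates at its ends, which only increases the slack---this is easily circumvented; and if the swoon-SP construction already keeps its designated candidates off the ends of the axis, no change is needed, and the reduction, together with its correctness proof, is literally that of Theorem~\ref{t:plurality-swoon}.
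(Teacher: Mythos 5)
Your high-level plan---reuse the reductions behind Theorem~\ref{t:plurality-swoon} and argue that their output instances already satisfy the Dodgson$_{m-2}$-SP and PerceptionFlip$_{m-2}$-SP promises---is the same as the paper's, and your PerceptionFlip argument (ii) is in fact unconditionally correct: moving the swooned-to candidate $s$ along the axis until it abuts the tail's peak never costs more than $m-2$ adjacent swaps, even when $s$ starts at an end of $L$. The gap is in the Dodgson half. You correctly flag that the insertion bound in (i) can fail when $s$ sits at an end of $L$, but your escape hatch---``if the swoon-SP construction already keeps its designated candidates off the ends of the axis, no change is needed''---does not apply: in both reductions of Theorem~\ref{t:plurality-swoon} the societal axis ends with $a_n$, and $a_n$ is a swooned-to candidate. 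With the tail-filling that proof specifies (first the candidates to the left of the peak, then those to the right), the vote $a_n > b_i > \cdots$ ends with $a_{n-1}$, and its Dodgson distance to single-peakedness is exactly $m-1$, not $m-2$ (the only nearby single-peaked order is the one obtained by dropping $a_n$ to the bottom). So one of your patches is mandatory, and none is carried out; the padding patch in particular needs some care (the dummies must receive no first-place votes and must not perturb the CCDC budget analysis), while the cheaper fix is simply to fill each tail so that the axis end opposite $s$ is ranked last---which for the CCDC votes is forced anyway, since $p$ must be ranked last there.

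The paper sidesteps all of this with a slightly different device worth knowing. Instead of verifying that swoon-style votes (lifted candidate on top, peak second) lie within $m-2$ swaps of single-peakedness, it proves a small lemma: for any two candidates $c_i,c_j$ there is a Dodgson$_{m-2}$-SP (resp.\ PerceptionFlip$_{m-2}$-SP) vote of the form $c_i > c_j > \cdots$, with $c_1$ last when $i\neq 1$ and $j \neq 1$, obtained by taking a single-peaked order with peak $c_i$ and lifting $c_j$ to second place. Lifting the \emph{second}-ranked candidate costs at most $m-2$ swaps no matter where $c_j$ sits on the axis, so the end-of-axis case never arises. Since the correctness of the Theorem~\ref{t:plurality-swoon} reductions depends only on the top two positions of each vote (plus $p$ being last in certain CCDC votes), this substitution is all that is needed.
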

Finally, we note that 
for single-caved elections,
CCAC and CCDC are in $\p$ for plurality.

\begin{theorem}\label{t:plurality-caved}
CCAC and CCDC for plurality
elections are in $\p$ for 
single-caved societies.
\end{theorem}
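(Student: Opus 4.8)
The plan is to exploit one decisive structural fact about plurality over single-caved societies: if $(C,V)$ is single-caved with respect to the societal axis $L$, with $c_1\,L\,c_2\,L\,\cdots\,L\,c_m$, then every voter ranks first either $c_1$ or $c_m$, the two $L$-endpoints. Indeed, if some voter $v$ ranked an interior candidate $c_j$ ($1<j<m$) on top, then applying the single-caved condition to the triple $c_{j-1}\,L\,c_j\,L\,c_{j+1}$ would force $c_{j+1}\,P_v\,c_j$ from $c_j\,P_v\,c_{j-1}$, contradicting that $c_j$ is $v$'s favorite. Hence in a single-caved plurality election only the two $L$-endpoints receive any votes, and each candidate's score is determined once we know which two candidates are the endpoints. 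A second, equally easy consequence of the single-caved condition (via a ``no $3$-cycle'' argument on triples through $c_1$) is a monotonicity property: if $c_1$ is the left endpoint then, among the candidates to its right, $\#\{v : c_1\,P_v\,c\}$ is nonincreasing as $c$ moves rightward along $L$; symmetrically from the right endpoint. I would also use the standard observation that any sub-collection of votes over a subset of candidates is again single-caved with respect to the restricted axis, so these facts apply to whatever candidate set survives the control action.

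For CCDC, I would first dispense with the no-voter case ($p$ trivially wins with zero deletions). Otherwise, since interior candidates score $0$ and the endpoints together absorb all of the $\ge 1$ votes, $p$ can win only if $p$ is an $L$-endpoint of the surviving candidate set, which forces deleting either all candidates left of $p$ in $L$ or all candidates right of $p$. Consider the first subcase (the other is symmetric): we pay $\ell$ to delete the $\ell$ candidates left of $p$, and we must then pick which surviving candidate is the rightmost one, call it $c$ (possibly $c=p$, i.e., everything to the right of $p$ is also deleted). Then $p$ wins iff at least half the voters prefer $p$ to $c$, and the deletion cost is $\ell$ plus the number of candidates strictly right of $c$. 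By the monotonicity property, the candidates $c$ to the right of $p$ for which ``at least half prefer $p$ to $c$'' holds form a prefix of those candidates in $L$-order, so the cost-minimizing admissible choice is the rightmost such $c$ (or $c=p$ if none exists, giving cost $m-1$). We compute the minimum cost in each of the two subcases and compare with $K$; the whole procedure is plainly polynomial-time.

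For CCAC, the same structural fact makes adding candidates essentially useless. Adding pool candidates only enlarges the registered set, so it can never convert an interior $p$ into an endpoint; and if $p$ is already an endpoint of the registered set $C$, say the left endpoint with right endpoint $r_0$, then any added candidate either lies between $p$ and $r_0$ (leaving $\score(p)=\#\{v:p\,P_v\,r_0\}$ unchanged), or lies to the right of $r_0$ (which by monotonicity can only decrease $\score(p)$), or lies to the left of $p$ (which makes $p$ interior, hence a loser whenever there is a voter). Thus, after the trivial no-voter case, CCAC for plurality over single-caved societies reduces to checking whether $p$ already wins the initial plurality election $(C,V)$ with no additions, which is immediate to test in polynomial time.

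The part that needs the most care is the CCDC case analysis: arguing rigorously that making $p$ an endpoint is genuinely forced whenever there is a voter, and that the greedy rule ``keep the longest admissible prefix of candidates on the chosen side of $p$'' is cost-optimal; the monotonicity property is exactly what underwrites the latter. Everything else is routine bookkeeping.
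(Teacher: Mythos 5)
Your proposal is correct and follows essentially the same route as the paper's proof: the same key lemma that every single-caved voter ranks an $L$-endpoint first, the same conclusion that adding candidates can never help (so CCAC reduces to checking whether $p$ already wins the unmodified election), and the same family of deletion sets for CCDC (delete everything on one side of $p$ and truncate the other side). The only cosmetic difference is that you justify the optimal truncation point via a monotonicity lemma and a greedy rule, where the paper simply enumerates all linearly many interval configurations having $p$ as an endpoint and checks each one.
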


\def\bodyoffivepointone{{Bribery, negative-bribery, and strongnegative-bribery
for approval elections over log-maverick-SP societies are each in $\p$,
in both the standard and the marked model.}}

\section{Bribery}\label{sec:bribery}

We now briefly look at bribery of nearly single-peaked electorates,
focusing on
approval elections.  
For all three cases---bribery, negative-bribery, and
strongnegative-bribery---in which general-case 
$\np$-hardness bribery results have been shown to be in $\p$ for 
single-peaked societies~\citep{bra-bri-hem-hem:c:sp2},
we show that 
the complexity remains in polynomial time 
even if the number of mavericks is logarithmic in the
input size.
\begin{theorem}\label{t:approval-bribery}
\bodyoffivepointone
\end{theorem}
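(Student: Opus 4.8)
The plan is to follow the ``demaverickification'' / disjunctive-truth-table strategy used in the proof of Theorem~\ref{t:stcp}, but combined with a normalization observation that is special to approval voting, so that each input reduces to polynomially many instances of (a mild variant of) the single-peaked approval bribery problem, which is in $\p$ by~\citep{bra-bri-hem-hem:c:sp2}. Recall that ``log-maverick-SP'' means we have fixed a nondecreasing, polynomial-time computable $f$ whose value on input size $I$ is $\bigo(\log I)$; write $N=f(I)$, so $2^{N}=\poly(I)$.

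First I would prove a normalization lemma: for each of the three variants there is a canonical, best way to recast a bribed approval vote, and that canonical vote is always single-peaked. For plain bribery, change every bribed voter to approve exactly $\{p\}$; for negative-bribery, change a bribed voter to $\{p\}$ if it currently approves $p$ and to $\emptyset$ otherwise; for strongnegative-bribery, change every bribed voter to $\emptyset$. In each case the recast vote is legal for that variant and weakly dominates every legal alternative with respect to the goal ``$p$ is a winner'' (it maximizes $\score(p)$ and minimizes $\score(q)$ for all $q\neq p$ among the legal options, and $p$ wins iff $\score(p)\ge\score(q)$ for all $q\neq p$). Since $\{p\}$ and $\emptyset$ are adjacent blocks in any societal order $L$, the key consequence is that a bribe never needs to create a maverick; hence after a normalized bribery the only remaining mavericks are among the at most $N$ that were present (standard model), resp.\ among the at most $N$ maverick-enabled voters (marked model), so the maverick-count constraint and (in the marked model) the ``non-maverick-enabled voters stay consistent with $L$'' constraint are automatically satisfied. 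In particular the standard and marked models collapse to essentially the same subproblem.

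Second comes the enumeration. Let $Q$ be the set of at most $N$ mavericks (standard model) or at most $N$ maverick-enabled voters (marked model). Loop over all $\le 2^{N}=\poly(I)$ subsets $R\subseteq Q$, interpreting $R$ as the set of members of $Q$ that get bribed (into their canonical normalized votes, which are determined by $R$ and by each voter's current relation to $p$). The voters in $Q\setminus R$ keep their input ballots, whose total contribution to the candidates' scores is a fixed vector; the normalized ballots of $R$ contribute another fixed vector and consume exactly $|R|$ of the budget $K$ (skip the branch if $|R|>K$). What remains is: the single-peaked electorate $V\setminus Q$, a residual budget $K-|R|$, and a bounded ($\bigo(\log I)$-sized) collection of fixed, unbribeable ballots (those of $Q\setminus R$, which need not be single-peaked). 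This is an instance of single-peaked approval [variant]-bribery augmented by a logarithmic number of fixed, unbribeable ballots; the algorithm of~\citep{bra-bri-hem-hem:c:sp2} is score-based and extends in a routine way to allow them (equivalently, decompose each fixed ballot into the adjacent blocks it is a union of and add those as unbribeable single-peaked ballots). Answer ``yes'' iff some branch's subcall answers ``yes.'' Correctness in both directions is immediate from the normalization lemma, and the total running time is polynomial because there are polynomially many branches and each subcall runs in polynomial time.

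I expect the main obstacle to be the second ingredient: carefully checking that the single-peaked approval bribery algorithms of~\citep{bra-bri-hem-hem:c:sp2} really do still run in polynomial time once a logarithmic number of fixed, possibly-non-single-peaked, unbribeable ballots are present — and getting the three variants' legality constraints exactly right in the normalization lemma, especially the asymmetric ``cannot change a disapprover of $p$ into an approver of $p$'' restriction of negative-bribery, which is why its canonical target depends on the bribed voter's current vote rather than being uniformly $\{p\}$.
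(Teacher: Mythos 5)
Your proposal is correct and follows essentially the same route as the paper's proof: normalize every bribe to approve exactly $\{p\}$ or $\emptyset$ (both single-peaked, and pointwise dominant for each variant, so bribes never need to create mavericks and the standard model collapses to the marked one), enumerate the $2^{\bigo(\log I)}$ choices of which maverick(-enabled) voters to bribe, and hand the rest to a ``flagged'' extension of the single-peaked approval bribery algorithm of \citet{bra-bri-hem-hem:c:sp2} that tolerates unbribeable, possibly non-single-peaked ballots. The one step you rightly flag as needing verification---that the prior algorithm extends to such flagged instances---is exactly the step the paper also settles only by inspection of that algorithm.
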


We mention in passing that although plurality bribery has never been
discussed with respect to single-peaked (or nearly single-peaked)
electorates, it is not hard to see that the two $\np$-complete bribery
cases for plurality (plurality-weighted-\$bribery and
plurality-weighted-negative-bribery) remain $\np$-complete on
single-peaked electorates, in one case immediately from a theorem of,
and in one case as a corollary to a proof of,
\citet{fal-hem-hem:j:bribery}.

\section{Related Work}\label{sec:related}

Although it has roots going even further back, the study of 
the computational complexity
of control and manipulation actions was started by a series of papers of
Bartholdi, Orlin, Tovey, and Trick around 1990 
\citep{bar-tov-tri:j:manipulating,bar-oli:j:polsci:strategic-voting,bar-tov-tri:j:control}
and the complexity of bribery was first studied far more 
recently~\citep{fal-hem-hem:j:bribery}.
For further references, history, context, and 
results regarding all of these, see the 
surveys~\citep{fal-hem-hem-rot:b-too-short:richer,fal-hem-hem:j:cacm-survey}.  
For example, it is known that there exist election systems 
that are resistant to many 
control attacks~\citep{erd-now-rot:j:sp-av,erd-rot:c:fallback,erd-pir-rot:t:bucklin,fal-hem-hem-rot:j:llull,hem-hem-rot:j:hybrid}.

The four papers most related to the present one are the following.
\citet{wal:c:uncertainty-in-preference-elicitation-aggregation}
insightfully raised the idea that general complexity results may
change in single-peaked societies.  His manipulative-action example
(STV) actually provides a case where single-peakedness fails to lower
manipulation complexity, but in a different context he did find a
lowering of complexity for single-peakedness.  The 
papers~\citep{fal-hem-hem-rot:cOUTDATEDbyJOURNALwithPTR:single-peaked-preferences,bra-bri-hem-hem:c:sp2}%
\tarkhideadd{ }{, the first of which was in the TARK conference, }
then broadly explored the effect of single-peakedness on manipulative
actions.  These three papers are all in the model of (perfect)
single-peakedness.  \citet{con:j:eliciting-singlepeaked},
in the context of preference elicitation, raised and experimentally
studied the issue of \emph{nearly} single-peaked societies.
\citet{esc-lan-ozt:c:single-peaked-consistency} 
also discussed nearness
to single-peakedness, and the papers
\cite{fal-hem-hem-rot:j:single-peaked-preferences,bra-bri-hem-hem:c:sp2}
both raise as open issues 
whether 
shield-evaporation (complexity) 
results for single-peakedness
will withstand near-single-peakedness.
The
present paper seeks to bring the ``nearly single-peaked'' lens to the
study of manipulative actions.

It is well-known (see \citet{elk-fal-sli:c:votewise-dr} 
and the references therein) that many election systems are defined, or
can be equivalently defined, as selecting whichever candidate's region
of being a winner under some notion of ``consensus'' has a vote
set that is ``closest'' to the input vote set, where ``closest'' is
defined by applying some norm (e.g., sum or max) to a vector whose
$i$th component is some notion of distance (e.g., number of
adjacent-swaps 
to get between the votes) between the $i$th votes
on each list.  We note that, similarly, most of our notions of nearness 
to single-peaked can be framed as saying that the input vote set is close
(in the same sense) to some vote set that is consistent with the input
societal linear order.  The parallel isn't perfect, since in the
former work there are multiple target regions and the minimum over
them is crucially selecting the winner; 
and also, approaches focusing on distance typically require
commutativity of the distance function, but notions of diverging from
a societal order may, as a matter of human behavior, be asymmetric.
(The swoon notion is asymmetric; often one can 
swoon from $v_1$ to $v_2$, but not vice versa).  
But we mention
that most of our norms/distances have already proven useful in other
election contexts, and that if one finds additional norms/distances 
natural here, then one could study what happens under 
those.

This paper is focused on the line of work that looks at whether
single-peakedness removes $\np$-hardness results about manipulative
actions.  \emph{Most of our key results are cases where we show that even
\emph{nearly}-single-peaked elections fall to $\p$.  Our results of
that sort are polynomial-time upper bounds, and so apply on 
\emph{all} inputs.}
However, a few of our results are about $\np$-hardness.  For those, it is
important to mention that $\np$-hardness is a worst-case theory, and so
such results are just a first step on a path that one hopes may
eventually reach some notion of average-case hardness.  That is a
long-term and difficult goal, but has not yet been proven impossible.
Although many people have the impression that recent results such as
\citet{fri-kal-nis:c:quantiative-gib-sat} prove that any reasonable
election system can often be manipulated, that powerful paper, for example,
merely proves that the manipulation probability 
\emph{cannot go 
very quickly
to zero} asymptotically---it does not prove that the manipulation
probability asymptotically 
does not go to zero.
Other work that controls the candidates-to-voters cardinality 
relation
has experimentally suggested
stronger claims in certain settings, 
but 
is of necessity within the setting of making assumptions about 
the distribution of votes
(see, e.g., \citet{wal:c:where-hard-veto}), and typically
presents simulations but not theorems and proofs.

Finally, given the high hopes of many people for 
heuristic algorithms, it is extremely important to 
understand that it follows from known complexity-theoretic results
that (unless shocking complexity-class collapses occur)
\emph{no polynomial-time algorithm can come too
close to accepting any NP-hard set}.
In particular, if any polynomial-time heuristic 
correctly solves any $\np$-hard problem on all but a 
sparse (i.e., at most a polynomial number of strings of each length)
set of inputs,
then $\p = \np$
(\citet{sch:j:closeness} established a precursor of this
result that held for appropriately ``paddable'' NP-hard sets,
and the stronger claim we have stated here
follows from the 1-truth-table special case of the 
result of \citet{ogi-wat:j:pbt}
that if any NP-bounded-truth-table-hard set is sparse 
then $\p=\np$).
And so such extremely good heuristic algorithms almost certainly do
not exist.  But what about less ambitious hopes?  Can any
polynomial-time algorithm agree with any NP-hard set except for at
most $n^{\log^{\bigo(1)}n}$ strings 
at each length, i.e., the
symmetric difference between the set accepted by the algorithm and the
NP-hard set has density $n^{\log^{\bigo(1)}n}$
(equivalently, $2^{\log^{\bigo(1)}n}$)?  The answer is ``no,''
unless shocking complexity-theoretic collapses occur.  To see this,
we simply need to note that if such an algorithm existed, then the
NP-hard set would  polynomial-time 1-truth-table reduce
(indeed, it would even reduce by a
polynomial-time 
1-truth-table reduction that was in addition restricted to a single
truth-table, namely, the parity truth-table) to a set of density
$n^{\log^{\bigo(1)}n}$.  However, it is known
(\cite{hom-lon:j:sparse,buh-hom:c-short:superpolynomial}) that if any
NP-hard set even polynomial-time 
$\bigo(1)$-truth-table reduces (i.e., polynomial-time bounded-truth-table
reduces) to a set of density $n^{\log^{\bigo(1)}n}$, then 
(i)~all NP sets can be deterministically solved in time
$n^{\log^{\bigo(1)}n}$, and 
(ii)~$\rm EXP = NEXP$
(where ${\rm EXP} = \bigcup_{\textrm{\,polynomials $p$}}
{\rm DTIME}[2^{p(n)}]$ and 
${\rm NEXP} = \bigcup_{\textrm{\,polynomials $p$}}
{\rm NTIME}[2^{p(n)}]$),
i.e., deterministic and nondeterministic exponential time coincide.
Both of these consequences are broadly believed not to hold.
(Various additional unlikely collapse
consequences follow for the case
where 
we are speaking not merely of NP-hard sets but in fact of NP-complete 
sets~\cite{buh-hom:c-short:superpolynomial}, such as the few 
hard problems discussed in this paper.)
So polynomial-time heuristic algorithms are extremely
unlikely to be able to come within $n^{\log^{\bigo(1)}n}$
errors-per-length of any NP-hard set.\footnote{One may 
wonder how close 
to NP-hard sets \emph{can} polynomial-time heuristic algorithms 
easily come?  By using 
easy ``padding'' constructions, it is not hard to see that 
for each positive integer $k$ it holds that 
every 
NP-hard set is polynomial-time equivalent to 
an NP-hard set that differs from the empty set---which itself 
certainly is in P---on at most $2^{n^{1/k}}$ of the $2^n$ 
length-$n$ strings, i.e., they agree on all but 
an exponentially vanishing portion of the domain; 
see the construction in Footnote 10 
of \cite{erd-hem-rot-spa:j:junta} 
for how to do this, 
except with the ``2'' there being changed to $k+1$
(see also the comments/discussion in Appendix~C
of~\cite{erd-hem-rot-spa:t2:lobbying-but-now-is-avg-and-juntas}).
The
best
way to interpret this is not to just say that all NP-hard sets
are akin to easy sets, but rather to realize that frequency 
of easiness is not a robust concept with respect to 
polynomial-time reductions, and neither is average-time 
complexity computed simply by averaging.  Indeed,
this type of effect is why average-case complexity
theory is defined in ways far more subtle and complex than 
just taking a straightforward averaging of 
running times~\cite{lev:j:average-case}.}

\section{Conclusions and Open Directions}\label{sec:conclusions-and-open}

Motivated by the fact that real-world electorates are unlikely to be
flawlessly single-peaked, we have studied the complexity of
manipulative actions on nearly single-peaked electorates.
We observed a wide range of behavior.
Often, a modest amount of non-single-peaked behavior is not enough to
obliterate an existing polynomial-time claim.  We find this the most
important theme of this paper---its ``take-home message.''  
So if one feels that previous
polynomial-time manipulative-action algorithms for single-peaked
electorates are suspect since real-world electorates tend not to be
truly single-peaked but rather nearly single-peaked, our results of
this sort should reassure one on this point---although they are but a
first step, as the paragraph after this one will explain.  Yet we 
also found that 
sometimes allowing even one deviant voter is enough to raise the
complexity from $\p$ to $\np$-hardness, and sometimes allowing any number of
deviant voters has no effect at all on the complexity.  
We also saw cases where frequency of mavericity extracted a 
price in terms of amount of nondeterminism used.  We feel this is 
a connection that should be further explored, and regarding 
Corollary~\ref{c:stcp-beta},
we particularly 
commend to the reader's attention the issue of proving 
completeness for---not merely membership in---the levels of the 
beta hierarchy.  We conjecture that completeness holds.

One might wish to study other
notions of closeness to single-peakedness and, in particular, 
one might want to
combine our notions.  Indeed,
in real human elections, there probably are both mavericks and
swooners, and so our models are but a first step.  In addition, the
types of nearness that appear in different human contexts may differ
from each other, and from the types of nearness that appear in
computer multiagent systems contexts.  Models of
human/multiagent-system behavior, and empirical study of actual
occurring vote sets, may help identify the most appropriate notions of
nearness for a given setting.

Our control work studies just one type of control-attack at a time.
We suspect that many of our polynomial-time results could be 
extended to handle multiple types of attacks simultaneously,
as has recently been explored (without single-peakedness constraints)
by 
\citet{fal-hem-hem:c:multimode}, and we mentioned in passing
one result for which we have already shown this.

\acknowledgements

\pagebreak[3]

\appendix

\section{Formal Definitions}\label{sec:formal}

In this section we provide the missing formal definitions of the
problems that we study (variants of manipulation, control, and
bribery) and formal definitions of the problems we reduce from in our
hardness proofs.

\begin{definition}[\citep{con-lan-san:j:when-hard-to-manipulate}]
  Let $R$ be an election system. 
  In the CCWM problem for $R$ we are given a set of
  candidates $C$, a preferred candidate $p \in C$, a collection of
  nonmanipulative voters $S$ (each vote consists of a preference order
  and a positive integer, the weight of the vote), and a collection $T$ of $n$
  manipulators, each specified by its positive integer weight.
  We ask if it is possible
  to set the preference orders of the manipulators in such a way that
  $p$ is a winner of the resulting $R$ election $(C, S \cup T)$.
\end{definition}

The following control notions are due to 
the seminal paper of \citet{bar-tov-tri:j:control}, except
the notion below of CCAC follows \citet{fal-hem-hem-rot:j:llull} in
employing a bound, $K$, to make it better match the other control
types.

\begin{definition}[\citep{bar-tov-tri:j:control}]
  Let $R$ be an election system. 
  \begin{enumerate}
  \item[(a)] In the CCAC problem for $R$ we are given two disjoint
    sets of candidates, $C$ and $A$, a collection $V$ of votes over $C \cup
    A$, a candidate $p \in C$, and a nonnegative integer $K$. We ask
    if there is a set $A' \subseteq A$ such that (a) $\|A'\| \leq K$,
    and (b) $p$ is a winner of $R$ election $(C \cup A', V)$.

  \item[(b)] In the CCDC problem for $R$ we are given an election
    $(C,V)$, a candidate $p \in C$, and a nonnegative integer $K$. We
    ask if there is a set $C' \subseteq C$ such that (a) $\|C'\| \leq
    K$, (b) $p \notin C'$, and (c) $p$ is a winner of $R$ election $(C
    - C', V)$.

  \item[(c)] In the CCAV problem for $R$ we are given a set of
    candidates $C$, two collections of voters, $V$ and $W$, over $C$, a
    candidate $p \in C$, and a nonnegative integer $K$. We ask if
    there is a subcollection $W' \subseteq W$ such that (a) $\|W'\| \leq K$, and
    (b) $p$ is a winner of $R$ election $(C, V\cup W')$.

  \item[(d)] In the CCDV problem for $R$ we are given an election
    $(C,V)$, a candidate $p \in C$, and a nonnegative integer $K$. We
    ask if there is a collection $V'$ of voters that can be obtained from
    $V$ be deleting at most $K$ voters such that $p$ is a winner of
    $R$ election $(C, V')$.
  \end{enumerate}
\end{definition}

The bribery notions below are due to 
\citet{fal-hem-hem:j:bribery},
except
the notion below of negative and strongnegative bribery for 
approval voting are due to 
\citet{bra-bri-hem-hem:c:sp2}.

\begin{definition}[\citep{fal-hem-hem:j:bribery,bra-bri-hem-hem:c:sp2}]
  Let $R$ be an election system.  In the weighted-\$bribery problem
  for $R$ we are given an election $(C,V)$, where each vote consists
  of the voter's preferences (as appropriate for the election system,
  e.g., an approval vector for approval voting and a preference order
  for plurality) and two integers (this vote's positive integer weight and
  this vote's nonnegative integer price), a candidate $p \in C$, and a
  nonnegative integer $K$ (the allowed budget). We ask if there is a
  subcollection of votes, whose total price does not exceed $K$, such
  that it is possible to ensure that $p$ is an $R$-winner of the
  election by modifying the preferences of these votes.

  The problems (a) weighted-bribery, (b) \$bribery, and (c) bribery for
  $R$ are variants of weighted-\$bribery for $R$ where, respectively:
  (a) no prices are specified and each vote is treated
  as having unit cost, (b) no weights are specified, and each vote
  is treated as having unit weight, 
  and (c) no prices or weights are specified, and each vote
  is treated as having unit price and unit weight.

  For plurality, ``negative'' bribery means no bribed voter can
  have $p$ as the most preferred candidate in his/her preference order.

  For approval voting, ``negative'' bribery
  means a bribe cannot change someone from disapproving of $p$ to
  approving of $p$, and ``strongnegative'' bribery means
  every bribed voter must
  end up disapproving of $p$.  

\end{definition}

\begin{definition}[see, for example, \citet{gar-joh:b:int}]
  A PARTITION instance $I$ is a set of $\{k_1, \ldots, k_n\}$
  of $n$ distinct positive integers that sums to $2K$.
  $I$ is a yes instance
  if there exists a subset of $\{k_1, \ldots, k_n\}$ that 
  sums to $K$ and it is a no instance otherwise.
  An X3C instance $I = (B, \calS)$ consists of a base set $B =
  \{b_1, \ldots, b_{3k}\}$ and a family $\calS = \{S_1, \ldots,
  S_n\}$ of three-element subsets of $B$. $I$ is a yes instance
  if it is possible to pick exactly $k$ sets from $\calS$ so that
  their union is $B$ and it is a no instance otherwise.
\end{definition}

\begin{definition}
For score-based election systems (e.g., plurality, approval, scoring
protocols), we write $\score_{(C,V)}(c)$ to denote the score of
candidate $c$ in election $(C,V)$; naturally we require that $c \in
C$. The particular election system that we use will always be clear
from context.
\end{definition}

\section{Proofs from Section~\ref{sec:manip}}

\newtheorem*{theorem3x1}{Theorem~\ref{t:1msp-dich}}
\newtheorem*{theorem3x2}{Theorem~\ref{t:veto-easy}}
\newtheorem*{theorem3x3}{Theorem~\ref{t:veto-hard}}
\newtheorem*{corollary3x4}{Corollary~\ref{t:veto-cor}}
\newtheorem*{theorem3x5}{Theorem~\ref{t:swoon-ccwm-veto}}
\newtheorem*{observation3x6}{Observation~\ref{o:three-cand}}
\newtheorem*{theorem3x7}{Theorem~\ref{t:dodgson-ccwm-veto}}
\newtheorem*{theorem3x8}{Theorem~\ref{t:manip-sc}}

\begin{theorem3x1}
  For each $\alpha_1 \geq \alpha_2 > \alpha_3$, $\ccwm$ for
  $(\alpha_1,\alpha_2,\alpha_3)$ elections over 1-maverick-SP societies is
  $\np$-complete.
\end{theorem3x1}

\begin{proof}[Proof of Theorem~\ref{t:1msp-dich}]
Without loss of generality, we assume that $\alpha_3 = 0$.
We will reduce from PARTITION.  Given a set
$\{k_1, \ldots, k_n\}$ of $n$ distinct positive integers that sums to $2K$,
define the following instance of CCWM.  Let $C = \{p, a, b\}$,
let society's order be $a L p L b$, let $S$ consist of one voter
with preference order $a > b > p$ (note that this voter is the maverick)
with weight $(2\alpha_1 - \alpha_2)\alpha_1 K$, and one
voter with preference order $b > p > a$
with weight $(2\alpha_1 - \alpha_2)(\alpha_1 - \alpha_2) K$.
(Technically, weights need to be positive, but if $\alpha_1 = \alpha_2$
we can get the same effect by letting $S$ consist of just the maverick.)
Note that $\score_{(C,S)}(a) = \score_{(C,S)}(b) =
(2 \alpha_1^3 - \alpha_1^2 \alpha_2) K$
and that
$\score_{(C,S)}(p) = (2 \alpha_1 - \alpha_2) (\alpha_1 - \alpha_2) \alpha_2 K =
(2 \alpha_1^2 \alpha_2 - 3 \alpha_1 \alpha_2^2  + \alpha_2^3) K$.
Let $T$ consist of $n$ manipulators with weights
$(\alpha_1^2 - \alpha_1\alpha_2 + \alpha_2^2) k_1, \ldots,
(\alpha_1^2 - \alpha_1\alpha_2 + \alpha_2^2) k_n$.

If there is a subset of $k_1, \ldots, k_n$ that sums to $K$,
then we let all manipulators in $T$ whose weight divided by
$(\alpha_1^2 - \alpha_1\alpha_2 + \alpha_2^2)$
is in this
subset vote $p > a > b$, and all manipulators in $T$ whose
weight divided by
$(\alpha_1^2 - \alpha_1\alpha_2 + \alpha_2^2)$
is not in this subset vote $p > b > a$.  
It is immediate that 
$\score_{(C,S \cup T)}(a) = \score_{(C,S \cup T)}(b) =
\score_{(C,S)}(a) + (\alpha_1^2\alpha_2 - \alpha_1\alpha_2^2 + \alpha_2^3) K =
(2 \alpha_1^3 - \alpha_1 \alpha_2^2 + \alpha_2^3) K$
and that $\score_{(C,S \cup T)}(p) = \score_{(C,S)}(p) + 
(2\alpha_1^3 - 2\alpha_1^2\alpha_2 + 2 \alpha_1 \alpha_2^2) K =
(2 \alpha_1^3 - \alpha_1 \alpha_2^2 + \alpha_2^3) K$.  It follows
that all candidates are tied, and thus all candidates are winners.

For the converse, suppose the manipulators vote so that $p$ becomes
a winner.  It is easy to see that we can assume that all manipulators
rank $p$ first.  From the calculations above, it is 
also easy to see that it is always the case that $2\score_{(C,S \cup T)}(p)
\leq \score_{(C,S \cup T)}(a) + \score_{(C,S \cup T)}(b)$.  In order for $p$
to be a winner, we thus certainly need the scores of $a$ and $b$ to be equal.
This implies that $\score_{(C,T)}(a) =  \score_{(C,T)}(b)$.  But then the
weights of the manipulators voting $p > a > b$ sum to
$(\alpha_1^2 - \alpha_1\alpha_2 + \alpha_2^2) K$.
\end{proof}

\begin{theorem3x2}
For each $k \geq 0$ and $m \geq k + 3$, $\ccwm$ for
$m$-candidate veto elections over $k$-maverick-SP 
societies is in $\p$.
\end{theorem3x2}

\begin{proof}
Let $k \geq 0$ and $m \geq k+3$, and
let society's order be $L$.  Let $c_{\ell}$ be the leftmost candidate
in $L$ and let $c_{r}$ be the rightmost candidate in $L$. 
In an $m$-candidate veto election, 
the $m-2$ candidates in $C - \{c_{\ell},c_r\}$ are never vetoed
by the nonmavericks.  Every maverick vetoes at most one of
these $m-2$ candidates. Since $k < m - 2$, 
in a $k$-maverick $m$-candidate veto election, 
there exists a candidate that is never vetoed.
So, given an instance of $\ccwm$ for $m$-candidate veto elections
over $k$-maverick-SP societies, $p$ can be made a winner if and only if
$p$ is never vetoed by the nonmanipulators.
\end{proof}

\begin{theorem3x3}
For each $k \geq 0$ and $m \geq 3$ such that $m \leq k + 2$, $\ccwm$ for
$m$-candidate veto elections over $k$-maverick-SP societies 
is $\np$-complete.
\end{theorem3x3}

\begin{proof}
We will again reduce from PARTITION.  Given a set
$\{k_1, \ldots, k_n\}$ of $n$ distinct positive integers that sums to $2K$,
define the following instance of CCWM.  Let $C = \{p, a, b,
c_1, \ldots, c_{m-3}\}$,
let society's order be $a L p L c_1 L \cdots L c_{m-3} L b$,
let $S$ consist of $m-2$ voters each of weight $K$.
For every candidate $c$ in $C - \{a,b\}$ there is a voter in $S$
that ranks $c$ last.  Note that all voters in $S$ are mavericks.
This is allowed, since $m-2 \leq k$.
Let $T$ consist of $n$ manipulators with weights 
$k_1, \ldots, k_n$.

If there is a subset of $k_1, \ldots, k_n$ that sums to $K$,
then we let all manipulators in $T$ whose weight is in this
subset vote $a > p > c_1 > \cdots > c_{m-3} > b$,
and all manipulators in $T$ whose weight is not in this subset
vote  $b > c_{m-3} > \cdots > c_1 > p > a$.
It is immediate that all candidates in election $(C,S\cup T)$ are tied
and so $p$ is a winner.

For the converse, suppose the manipulators can vote so that 
$p$ becomes a winner.  Note that $p$ needs to gain at least $K$
points over $a$ and over $b$ in $T$. Clearly, the only way this can
happen is if $\score_{(C,T)}(a) = \score_{(C,T)}(b) = K$.
But then the weights of the voters in $T$ who rank $a$ last add
to~$K$.
\end{proof}

\begin{theorem3x5}
  For each $m \geq 5$, $\ccwm$ for
  $m$-candidate veto elections
  in swoon-SP societies is in $\p$\@.
  For $m \in \{3,4\}$, this problem is $\np$-complete.
\end{theorem3x5}

\begin{proof}
First suppose that $m \geq 5$.  Let $L$ be society's order.
Let $c$ be a candidate such that there are at least two candidates
to the left of $c$ in $L$ and there are at least two candidates
to the right of $c$ in $L$.
In an $m$-candidate veto election in a swoon-SP society, $c$ 
will never be vetoed.  Given a $\ccwm$ instance for 
$m$-candidate veto elections
in swoon-SP societies, $p$ can be made a winner if and only if
$p$ is never vetoed by the nonmanipulators.

Now consider the case that $m = 4$.  We will reduce from
PARTITION. Given a set $\{k_1, \ldots, k_n\}$ of $n$ distinct
positive integers
that sums to $2K$,
define the following instance of CCWM.  Let $C = \{p, a, b, c\}$,
let society's order be $a L p L b L c$, let $S$ consist of two voters,
each with weight $K$.  One voter in $S$ votes
$a > c > b > p$ and one voter votes $c > a > p > b$.
Let $T$ consist of $n$ manipulators with weights $k_1, \ldots, k_n$.

If there is a subset of $k_1, \ldots, k_n$ that sums to $K$,
then we let all manipulators in $T$ whose weight is in this
subset veto $a$ and
all manipulators in $T$ whose weight is not in this subset
veto $c$.  It is immediate that all candidates in election
$(C,S\cup T)$ are tied and so $p$ is a winner.

For the converse, suppose the manipulators can vote so that 
$p$ becomes a winner.  Note that $p$ needs to gain at least $K$
points over $a$ and over $c$ in $T$. Clearly, the only way this can
happen is if $\score_{(C,T)}(a) = \score_{(C,T)}(c) = K$.
But then the weights of the voters in $T$ who veto $a$ add
to~$K$.

A very similar proof can be used to show the statement for $m = 3$.
However, the statement for $m = 3$ also follows immediately from
the fact that $\ccwm$ for $3$-candidate veto elections
is $\np$-complete and the observation below that
every 3-candidate election is a swoon-SP election.
\end{proof}

\begin{observation3x6}
Every 3-candidate election is a swoon-SP election and
a Dodgson$_1$-SP election and so all complexity results for 3-candidate
elections in the general case also hold for 
swoon-SP elections and Dodgson$_1$-SP elections. 
\end{observation3x6}

\begin{proof}
Since every 2-candidate vote is single-peaked, it follows
immediately that every 3-candidate election is a swoon-SP election.
For the Dodgson$_1$-SP case, suppose society's order is $a L b L c$.
The only 
votes 
that are not single-peaked 
are $a > c > b$ and 
$c > a > b$.  But note that both of these
votes are one adjacent swap away from being single-peaked, by swapping
the last two candidates.
\end{proof}

\begin{theorem3x7}
  For each $m \geq 5$, $\ccwm$ for
  $m$-candidate veto elections
  in Dodgson$_1$-SP societies is in $\p$\@.
  For $m \in \{3,4\}$, this problem is $\np$-complete.
\end{theorem3x7}

\begin{proof}
The $m \geq 5$ case follows using the same proof as the 
$m \geq 5$ case of Theorem~\ref{t:swoon-ccwm-veto}.
The $m = 4$ case follows using the same proof as
the $m = 4$ case of Theorem~\ref{t:swoon-ccwm-veto}
except that, in order for the votes to be within one adjacent swap
of being consistent with societal order,
the two voters in $S$ now vote $c > b > a > p$ and $a > p > c > b$.
The $m = 3$ case follows from Observation~\ref{o:three-cand}.
\end{proof}

\begin{theorem3x8}
  For each $\alpha_1 \geq \alpha_2 > \alpha_3$, $\ccwm$ for
  $(\alpha_1,\alpha_2,\alpha_3)$ elections over single-caved societies is
  $\np$-complete if 
  $(\alpha_1 - \alpha_3) \leq 2(\alpha_2-\alpha_3)$  
  and otherwise is in $\p$.
\end{theorem3x8}

\begin{proof}
Without loss of generality, assume that $\alpha_3 = 0$.
We first consider the case that $\alpha_1 > 2 \alpha_2$.
Let $(C,V)$ be an 
$(\alpha_1,\alpha_2,\alpha_3)$ election over single-caved societies,
let $W$ be the total weight of $V$, 
and let $L$ be society's order.  Consider the middle candidate
in $L$.  This candidate can never be ranked first, and so its score
will be at most $\alpha_2 W$,
and the sum of the scores of the other two candidates will
be at least $\alpha_1 W$.  Since $\alpha_1 > 2 \alpha_2$, it follows
that the middle candidate will never be a winner if $W > 0$.
Given an instance of $\ccwm$ for
$(\alpha_1,\alpha_2,0)$ elections over single-caved societies,
$p$ can be made a winner if and only if
(1) $p$ is the middle candidate in $L$ and $W = 0$, or
(2) $p$ is not the middle candidate in $L$ and $p$ is a winner
if all manipulators rank $p$ first, then the middle candidate, and then
the last candidate.  All this is easy to check in polynomial time.

Now consider the case that $\alpha_1 \leq 2 \alpha_2$.
We will show that in this case PARTITION can be reduced to 
$\ccwm$ for $(\alpha_1,\alpha_2,0)$ elections over single-caved societies.
Given a set $\{k_1, \ldots, k_n\}$ of $n$ distinct positive integers
that sums to $2K$,
define the following instance of CCWM.  Let $C = \{p, a, b\}$ and
let society's order be $a L p L b$, and  let $S$ consist of two voters,
each with weight $(2\alpha_2 - \alpha_1)K$.  One voter in $S$ votes
$a > b > p$ and one voter votes $b > a > p$. (Technically, weights need
to be positive, but if $\alpha_1 = 2\alpha_2$ we can get the
same effect by letting $S = \emptyset$.)
Let $T$ consist of $n$ manipulators with weights $(\alpha_1 + \alpha_2)k_1,
\ldots,(\alpha_1 + \alpha_2)k_n$.

If there exists a subset of $\{k_1, \ldots, k_n\}$ that sums to $K$, we let 
the manipulators whose weight
divided by  $(\alpha_1 + \alpha_2)$
is in this subset vote $a > p > b$ and
the manipulators whose weight
divided by  $(\alpha_1 + \alpha_2)$
is not in this subset vote $b > p > a$.
It is easy to see that
$\score_{(C,S\cup T)}(a) = 
\score_{(C,S\cup T)}(b) = 
(2\alpha_2 - \alpha_1)(\alpha_1 + \alpha_2)K +
\alpha_1 (\alpha_1 + \alpha_2) K = 2 \alpha_2 (\alpha_1 + \alpha_2) K = 
\score_{(C,S\cup T)}(p)$, and so $p$ is a winner.

For the converse, suppose $p$ can be made a winner.
Since $p$ is the middle candidate in $L$, $p$ can not be ranked first.
Without loss of generality we can assume that the voters in $T$ vote
$a > p > b$ or $b > p > a$.  It follows that
$\score_{(C,S\cup T)}(p) = 2 \alpha_2 (\alpha_1 + \alpha_2) K$.
Since $\score_{(C,S\cup T)}(a) + \score_{(C,S\cup T)}(b) 
= 4 \alpha_2 (\alpha_1 + \alpha_2) K$, by the argument above,
the only way $p$ can be a winner is if $a$ and $b$ tie in $T$.
But then the weights of the manipulators voting $a > p > b$ sum
to $(\alpha_1 + \alpha_2)K$.
\end{proof}

\section{Proofs from Section~\ref{sec:control}}
\label{app:stcp}

In the following subsections we provide the missing proofs from
Section~\ref{sec:control}.

\newtheorem*{theorem4x1}{Theorem~\ref{t:stcp}}
\newtheorem*{theorem4x4}{Theorem~\ref{t:plurality-maverick}}
\newtheorem*{theorem4x4new}{Theorem~\ref{t:stcp-condorcet}}
\newtheorem*{theorem4x5}{Theorem~\ref{t:plurality-swoon}}
\newtheorem*{theorem4x6}{Theorem~\ref{t:plurality-manymavericks}}
\newtheorem*{theorem4x7}{Theorem~\ref{t:dp}}
\newtheorem*{theoremklocal}{Theorem~\ref{t:klocal}}
\newtheorem*{theorem4x8}{Theorem~\ref{t:dodgson-C-2}}
\newtheorem*{theorem4x9}{Theorem~\ref{t:plurality-caved}}

\subsection{Proofs of Theorems~\ref{t:stcp},~\ref{t:stcp-condorcet},
and~\ref{t:plurality-maverick}}

\begin{theorem4x1}
CCAV and CCDV for approval
elections over log-maverick-SP societies are each 
in $\p$.  For CCAV, the complexity remains in $\p$
even for the case where no limit is imposed on the 
number of mavericks in the initial voter set, and the
number of mavericks in the set of potential
additional voters is logarithmically 
bounded (in the overall problem input size).
\end{theorem4x1}

\begin{proof}
  Consider the case of CCAV\@.  We will handle directly the stronger case
  mentioned in the theorem, namely the one with no limit on the
  number of mavericks in the initial voter set.  There of course will
  be a 
$\bigo(\log(\mathrm{ProblemInputSize}))$
limit on the number of mavericks
  in the set of voters to potentially add.  Let that (easy, 
  nondecreasing) upper-bounding
  function be called $f$.  

  So, suppose we are given an input instance of this problem.  Let
  $K$, which is part of the input, be the limit on the number of
  voters we are allowed to add.  We start by doing the obvious
  syntactic checks, and we also check that the number of voters in the
  additional voter set who are not consistent with the input societal
  order is at most $f(\mathrm{ProblemInputSize})$.  If any of these 
  checks fail, we reject.  

  Now, we will show how to build a polynomially long list of instances
  of the CCAV problem over single-peaked elections such that our
  control goal is possible to achieve \emph{if and only if} one or
  more of those control problems has a goal that can be 
  achieved.  (That is, we will implicitly give a polynomial-time
  disjunctive truth-table reduction to the single-peaked case.)

  Our construction is as follows.  For each choice of which mavericks
  {}from the additional voter set to add to our election, we will 
  generate at most one instance of a single-peaked control question.
  Since there are at most logarithmically many such mavericks, 
  and the number of cases we have to look at is the cardinality 
  of the powerset of the number of mavericks among the additional
  voters, the number of instances we generate is polynomially bounded.

  For each choice $A$ of which mavericks from among the additional
  voter set to add to the main election, we generate at most one
  instance as follows.  If $\|A\| > K$, we will generate no instance,
  as that choice is trying to add illegally many additional voters.
  Otherwise, we generate a single-peaked election instance as follows.
  Move the elements of $A$ from the additional voter set to the main
  election.  Remove all remaining mavericks from the additional voter
  set.  Demaverickify our election as follows: For each maverick voter
  $v$, for each candidate $c$ that $v$ approves, add a new voter who
  approves of only $c$ (and so certainly is consistent with the
  single-peaked societal order).  Then remove all the maverick voters.
  Note that this demaverickification process does not change the
  approval counts of the election and does ensure that the electorate is
  single-peaked.  The entire demaverickification process does not
  increase the problem's size by more than a polynomial, since no
  voter is replaced by more than $\|C\|-1$ voters.
  Replace $K$ by $K - \|A\|$.  The resulting instance
  is the instance that this choice of $A$ adds to our collection of
  instances.

  So, we have created a polynomial-length list of (polynomial-sized) 
  instances of the single-peaked CCAV problem.  It is easy
  to see that our control goal can be achieved exactly if for at least
  one of these instances the control goal can be achieved.  Briefly
  put, that is because our problem has a successful control action
  (after passing the initial maverick-cardinality-limit check) exactly
  if there is some appropriate-sized subset of additional voters that
  we can add to make the favored candidate become a winner.  Our above
  process tries every legal set of choices for which mavericks from
  the additional voter set might be the mavericks in the added set.
  And the instance it generates, based on that choice, will have a
  successful control action precisely if what remains of our initial
  $K$ bound, after we remove the cardinality of the added mavericks,
  is such that there is some number of nonmaverick additional voters
  who can be added to achieve the desired victory for $p$.  In
  addition, the instance generated is a single-peaked society, and the
  transformation we used to make it single-peaked doesn't in anyway
  affect the answer to the created instance, since the
  demaverickification occurred only on voters that were (at that
  point, although some had not started there) in our main voter set,
  and the only affect that set has on the single-peaked CCAV control
  question is the approval totals of each candidate, and our
  demaverickification did not alter those totals.

  Our polynomial-length list of instances is composed just of
  instances of the CCAV problem for approval elections over
  single-peaked electorates.  That problem has a polynomial-time
  algorithm~\citep{fal-hem-hem-rot:j:single-peaked-preferences}.  And
  so we run that algorithm on each of the polynomially many instances,
  and if any finds a successful control action, our original problem
  has a successful control action, and if not our original problem
  does not.  Thus, our proof of the CCAV case is complete.

  However, a final comment is needed, since we wish to not only give a
  yes/no answer, but to in fact find what control action to take, when
  one is possible.  (Doing so goes beyond what the theorem promises,
  but we in general will try to give algorithms that not only 
  give yes/no answers but also that at least implicitly make available 
  the actual successful actions for the yes instances.)
  Formally speaking, disjunctive truth-table
  reductions are about languages, rather than about solutions.
  Nonetheless, from our construction it is immediately clear how a
  successful control action for any problem on the list---and the
  polynomial-time algorithm
  of \citet{fal-hem-hem-rot:j:single-peaked-preferences} in fact gives
  not merely a yes/no answer but in fact finds a successful control
  action when one exists---specifies a successful control action for
  our nearly-single-peaked original problem.

  The CCDV case might at first seem to be almost completely analogous,
  except that in that case, there is no separate pool of additional
  voters, and the logarithmic bound applies to the entire set of
  initial voters.  However, the reduction approach we took above for
  CCAV at first seems not to work here.  The reason is that for the
  CCAV case, the mavericks we added could be demaverickified in a way
  that didn't interfere with the call to the single-peaked case of the
  CCAV approval voting algorithm, \emph{and the mavericks we decided not to add
  could (for the instance being generated) be deleted}.  In contrast,
  for the CCDV case, whichever mavericks we don't delete remain very
  much a part of the election---and are indeed part of the
  instance we would like to generate of a case of a call to CCDV\@.
  But that means the generated case may not be single-peaked, as we
  would like it to be.

  We can work around this obstacle, by noting that the algorithm given
  in \citet{fal-hem-hem-rot:j:single-peaked-preferences} for the
  single-peaked CCDV approval-voting case in fact does a bit more than
  is claimed there.  It is easy to see, looking at that paper, that it
  in effect gives a polynomial-time algorithm for the following
  problem: Given an instance of CCDV, and a societal ordering, and
  given that in the instance's voter set each voter has an extra bit
  specifying whether the voter is deletable or is not deletable, and
  given that every voter that is specified as being deletable must be
  consistent with the societal ordering (but voters specified as being
  not deletable are not required to be consistent with the societal
  ordering---they may be mavericks), is there a set of at most $K$
  ($K$ being part of the input) deletable voters such that if we
  delete them our preferred candidate $p$ is a winner?  The fact that
  the paper implicitly gives such an algorithm is clear from that
  paper, since regarding the ``deleting voters'' actions described on
  its page 96 we can choose to allow them only on the deletable voters,
  and the \citet{fal-hem-hem-rot:j:single-peaked-preferences}
  algorithm's correctness in our case hinges (assuming that
  nondeletable voters are indeed nondeletable) just on the fact that
  the deletable voters all respect the societal ordering.  (Once we
  allow a deletable/nondeletable flag, we could in fact demaverickify
  all the remaining mavericks, and then flag all the 1-approval-each
  voters added by that demaverickification as being nondeletable,
  but there is no need to do any of that.  Doing it requires the
  deletable/nondeletable flag, and as just noted, if one has that
  flag, one can outright tolerate (nondeletable) mavericks.)

  So, we have noted a polynomial-time algorithm that, while not stated
  as their theorem, is a corollary to their theorem's proof---i.e.,
  the proof of the CCDV result that in~\citep{fal-hem-hem-rot:j:single-peaked-preferences} appears on that
  paper's page 93.  With this in hand, we can handle our
  nearly-single-peaked CCDV case using the same basic approach we used
  for CCAV, as naturally modified for the CCDV case.  In particular,
  we again polynomial-time disjunctive truth-table reduce to a problem
  known to be in polynomial time---in this case, CCDV for approval
  voting over single-peaked societies, with a deletable/nondeletable
  flag for each voter, and with all deletable voters having to be
  nonmavericks, which was argued above to be in polynomial time.  Our
  reduction is that (after checking that the input election is
  syntactically correct and does not have illegally many mavericks)
  for each subset of the mavericks that is of cardinality at most $K$
  (the input bound on the number of voters to delete), we delete those
  $K$ mavericks, then we decrement $K$ by the cardinality of the
  subset, then we mark all the nonmaverick voters as deletable, and
  mark each remaining maverick as nondeletable.
 This
 approach works for essentially the same reason as the CCAV case, and
 as in that case, we can get not merely a yes/no answer, but can even
 for the yes cases produce
  a successful control action.
\end{proof}

\begin{theorem4x4new}
CCAV and CCDV for Condorcet
elections over $f(\cdot)$-maverick-SP societies are each 
in $\NONDETTIME[ f(\mathrm{ProblemInputSize}),\, \poly]$.
For CCAV, the complexity remains in 
$\NONDETTIME[ f(\mathrm{ProblemInputSize}),\, \poly]$
even for the case where no limit is imposed on the 
number of mavericks in the initial voter set, and the
number of mavericks in the set of potential
additional voters is $f(\cdot)$-bounded 
(in the overall problem input size).
\end{theorem4x4new}
\begin{proof}
  Let us handle the CCAV case first. We assume we are in the more
  general setting where there is no limit on the number of mavericks
  in the initial voter set. Let $(C,V,W,p,K)$ be our input instance of
  CCAV for Condorcet and let $\sord$ be the societal axis.  We assume
  that the candidate set $C$ is of the form $C = \{b_{m'}, \ldots,
  b_1,p,c_1, \ldots, c_{m''}\}$ and $b_{m'} \sord \cdots \sord b_1
  \sord p \sord c_1 \sord \cdots \sord c_{m''}$ holds. We partition the voters in
  $W$ into fours groups, $W_\ell$, $W_r$, $W_p$, and $W_m$:
  \begin{enumerate}
  \item $W_\ell$ contains those voters from $W$ who are not mavericks and
    whose most preferred candidate $c$ is such that $c \sord p$
    (intuitively, these are the voters whose top choice is ``to the
    left of $p$'').
  \item $W_r$ contains those voters from $W$ who are not mavericks and 
    whose most preferred candidate $c$ is such that $p \sord c$
    (intuitively, these are the voters whose top choice is ``to the
    right of $p$'').
  \item $W_p$ contains those voters from $W$ whose most preferred
    candidate is $p$.
  \item $W_m$ contains the remaining voters from $W$ (i.e., $W_m$
    contains those mavericks who do not rank $p$ first; note that
    there are at most $f(\mathrm{ProblemInputSize})$ voters in $W_m$).
  \end{enumerate}

  Voters in $W_\ell$ (in $W_r$) have an interesting structure of their
  preference orders. For each $v \in W_\ell$ (each $v \in W_r$), due
  to his or her single-peakedness, there is a positive integer $i$
  such that $v$ prefers each candidate in $\{b_1, \ldots, b_i\}$ to
  $p$ to each of the remaining candidates ($v$ prefers each candidate
  in $\{c_1, \ldots, c_i\}$ to $p$ to each of the remaining
  candidates). Thus, we can conveniently sort the voters in $W_\ell$
  (the voters in $W_r$) in increasing order of the cardinalities
  of the sets of candidates they prefer to $p$.

  Our (nondeterministic) algorithm works as follows. First, we add
  $\max(\|W_p\|,K)$ voters from $W_p$. If that makes $p$ a Condorcet
  winner, we accept.  Otherwise, we set $K' = K - \max(\|W_p\|,K)$.
  If $K' = 0$, we reject. Next, using (at most)
  $f(\mathrm{ProblemInputSize})$ nondeterministic binary decisions,
  for each voter $v$ in $W_m$ we decide whether to add $v$ to the
  election or not. Let $M$ be the number of voters we add in this
  process. If $M > K'$ then we reject (on this computation path) and
  otherwise we set $K'' = K' - M$.  Then, we execute the following
  algorithm:
  \begin{enumerate}
  \item For each two nonnegative integers $K_\ell$ and $K_r$ such that
    $K_\ell + K_r \leq K''$, execute the following steps.
    \begin{enumerate} 
    \item Add $K_\ell$ voters from $W_\ell$ to the election (in the order
      described one paragraph above).
    \item Add $K_r$ voters from $W_r$ to the election (in the order
      described one paragraph above).
    \item Check if $p$ is the Condorcet winner of the resulting
      election.  If so, accept. Otherwise, undo the adding of the
      voters from the two preceding steps.
    \end{enumerate}  
  \item If we have not accepted until this point, reject on this
    computation path.
  \end{enumerate}
  It is easy to verify that this algorithm indeed runs in polynomial
  time (given access to $f(\mathrm{ProblemInputSize})$
  nondeterministic steps). Its correctness follows naturally from the
  observations regarding the preference orders of voters in $W_\ell$
  and $W_r$ (it is clear that, given $K_\ell$ and $K_r$, our algorithm adds
  $K_\ell$ voters from $W_\ell$ in an optimal way and adds $K_r$ voters from
  $W_r$ in an optimal way).

  Let us now turn to the case of CCDV. The algorithm is very similar.
  Let $(C,V,p,K)$ be our input instance.  We assume that the candidate
  set $C$ is of the form $C = \{b_{m'}, \ldots, b_1,p,c_1, \ldots,
  c_{m''}\}$ and $b_{m'} \sord \cdots \sord b_1 \sord p \sord c_1 \sord \cdots
  \sord c_{m''}$ holds. We partition the voters in $V$ into fours
  groups, $V_\ell$, $V_r$, $V_p$, and $V_m$:
  \begin{enumerate}
  \item $V_\ell$ contains those voters from $V$ who are not mavericks and
    whose most preferred candidate $c$ is such that $c \sord p$
    (intuitively, these are the voters whose top choice is ``to the
    left of $p$'').
  \item $V_r$ contains those voters from $V$ who are not mavericks and
    whose most preferred candidate $c$ is such that $p \sord c$
    (intuitively, these are the voters whose top choice is ``to the
    right of $p$'').
  \item $V_p$ contains those voters from $V$ whose most preferred
    candidate is $p$.
  \item $V_m$ contains the remaining voters from $V$ (i.e., $V_m$
    contains those mavericks who do not rank $p$ first; note that
    there are at most $f(\mathrm{ProblemInputSize})$ voters in $V_m$).
  \end{enumerate}
  For each $v \in V_\ell$ (each $v \in V_r$), due to his or her
  single-peakedness, there is a positive integer $i$ such that $v$
  prefers each candidate in $\{b_1, \ldots, b_i\}$ to $p$ to each of
  the remaining candidates ($v$ prefers each candidate in $\{c_1,
  \ldots, c_i\}$ to $p$ to each of the remaining candidates). Thus, we
  can conveniently sort the voters in $W_\ell$ (the voters in $W_r$)
  in decreasing order of the cardinalities of the sets of
  candidates they prefer to $p$.

  It is clear that we should never delete voters from $V_p$.  Our
  (nondeterministic) algorithm proceeds as follows. First, for each
  voter $v$ in $V_m$ we make a nondeterministic decision whether to
  delete $v$ from the election or not. Let $M$ be the number of voters
  we delete in this process. If $M > K$ then we reject on this
  computation path and otherwise we set $K' = K - M$. Next, we
  execute the following algorithm:
  \begin{enumerate}
  \item For each two nonnegative integers $K_\ell$ and $K_r$ such that $K_\ell+K_r \leq K'$,
    execute the following steps:
    \begin{enumerate}
    \item Delete $K_\ell$ voters from $V_\ell$ (in the order described one
      paragraph above).
    \item Delete $K_r$ voters from $V_r$ (in the order described one
      paragraph above).
    \item Check if $p$ is the Condorcet winner of the resulting
      election.  If so, accept. Otherwise, undo the deleting of the
      voters from the two preceding steps.
    \end{enumerate}
  \item If we have not accepted so far, reject on this computation
    path.
  \end{enumerate}
  Correctness and polynomial running time of the algorithm (given
  access to $f(\mathrm{ProblemInputSize})$ nondeterministic steps)
  follow analogously as in the CCAV case.
\end{proof}

\begin{theorem4x4}
For each $k$,
CCAC and CCDC for plurality
over $k$-maverick-SP societies are
  in $\p$.
\end{theorem4x4}

\begin{proof}
  The main idea of our proof is analogous to that of the proof of
  Theorem~\ref{t:stcp} but the details of demaverickification are
  different and, as a result, we can only handle a constant number of
  mavericks. We handle the CCAC case first.

  Let $I = (C,A,V,p,K)$ be our input instance of CCAC for plurality and
  let $\sord$ be the societal axis. Let $k'$ be the number of
  mavericks in $V$ ($k' \leq k$) and let $M = \{m_1, \ldots,
  m_{k'}\}$ be the subcollection of $V$ containing exactly these $k'$
  maverick voters. Our algorithm proceeds as follows:
  \begin{enumerate}
  \item\label{s:main-loop} For each vector $B = (b_1, \ldots, b_{k'}) \in (C \cup A)^{k'}$ of
    candidates execute the following steps (intuitively, we intend to
    enforce that candidates $b_1, \ldots, b_{k'}$ are top-ranked
    candidates of voters in $M$ and that it is impossible to change
    the top-ranked candidates of voters in $M$ by adding other
    candidates).
    \begin{enumerate}
    \item\label{s:a} If for any voter $m_i$, $1 \leq i \leq k'$, it
      holds that $m_i$ prefers some candidate in $(C \cup \{b_j
      \mid 1 \leq j \leq k'\}) - \{b_i\}$ to $b_i$ then drop this $B$
      and return to Step~\ref{s:main-loop}.  (This condition
      guarantees that it is possible to ensure, via adding candidates from
      $A$, that for each $i$, $1 \leq i \leq k'$, voter $m_i$ ranks
      candidate $b_i$ first among the participating candidates.)
    \item Set $C' = C \cup \{ b_j \mid 1 \leq j \leq k', b_j \in A \}$.
    \item Set $A' = (A - \{b_j \mid 1 \leq j \leq k', b_j \in A \}) - \{
      a \in A \mid$ some voter $m_i$, $1 \leq i \leq k'$, prefers $a$
      to $b_i \}$.
    \item Set $K' = K - \| \{ b_j \mid 1 \leq j \leq k', b_j \in A \}\|$. If $K' < 0$
      then drop this $B$ and return to Step~\ref{s:main-loop}.
    \item\label{s:e} Form a voter collection $V'$ that is identical to
      $V$ except that we restrict voters' preferences to candidates in
      $C' \cup A'$ and for each voter $m_i$, $1 \leq i \leq k'$, we
      replace $m_i$'s preference order with an easily-computable
      preference order over $C' \cup A'$ that ranks $b_i$ first and is
      single-peaked with respect to $\sord$.
    \item Using the polynomial-time algorithm of
      \citet{fal-hem-hem-rot:j:single-peaked-preferences}, check if
      $(C',A',V',p,K')$ is a yes instance of CCAC for
      single-peaked plurality elections with societal axis $\sord$. If
      so, accept.
    \end{enumerate}
    \item If the algorithm has not accepted so far, reject.
  \end{enumerate}

  It is easy to verify that the above algorithm indeed runs in
  polynomial time: There are exactly $\|C \cup A\|^{k'}$ choices of
  vector $B$ to test and for each fixed $B$ each step can clearly be
  performed in polynomial time. It remains to show that the algorithm
  is correct.

  Let us assume that $I$ is a yes instance. We will show that
  in this case the algorithm accepts. Let $A''$ be a subset of $A$
  such that $\|A''\| \leq K$ and $p$ is a winner of election $E'' = (C
  \cup A'',V)$. Let $B'' = (b''_1, \ldots, b''_{k'})$ be the vector of
  candidates from $C \cup A''$ such that for each $i$, $1 \leq i \leq
  k'$, in $E''$ voter $m_i$ ranks $b''_i$ first. We claim that
  our algorithm accepts at latest when it considers vector
  $B''$. First, by our choice of $B''$ it is clear that in
  Step~\eqref{s:a} we do not drop $B''$. Let $C'$, $A'$, $K'$, and
  $V'$ be as computed by our algorithm for $B = B''$. By our choice of
  $B''$, it is clear that $A'' - \{b''_i | 1 \leq i \leq k'\}
  \subseteq A'$. Thus, there is a set $A''' \subseteq A'$ such that
  $\|A'''\| \leq K'$ and $p$ is a winner of election $(C' \cup
  A''',V)$. Further, since every voter $m_i$, $1 \leq i \leq k'$,
  prefers candidate $b''_i$ to all other candidates in $C' \cup A'$,
  it holds that $p$ is a winner of election $(C' \cup A''',V')$.  That
  is, $(C',A',V',p,K')$ is a yes instance.

  Similarly, it is easy to see that the construction of instances
  $(C',A',V',p,K')$, and in particular the construction of $V'$ in
  Step~\eqref{s:e}, ensures that if the algorithm accepts then $I$ is
  a yes instance. This completes the discussion of the CCAC
  case.

  \medskip

  Let us now move on to the case of CCDC. As in the case of CCAC, we
  will, essentially, reduce the problem to the case where all voters
  are single-peaked. However, we will need the following more general
  variant of the CCDC problem.
  \begin{definition}
    Let $R$ be an election system.  In the CCDC with restricted
    deleting problem for $R$, we are given an election $(C,V)$, a
    candidate $p \in C$, a set $F \subseteq C$ such that $p \in F$,
    and a nonnegative integer $K$. We ask if there is a set $C'
    \subseteq C$ such that (a) $\|C'\| \leq K$, (b) $C' \cap F =
    \emptyset$, and (c) $p$ is a winner of $R$ election $(C -
    C', V)$.
  \end{definition}
  That is, in CCDC with restricted deleting we can specify which
  candidates are impossible to delete. The following result is a
  direct corollary to the proof of
  \citet{fal-hem-hem-rot:j:single-peaked-preferences} that CCDC for
  plurality is in $\p$ for single-peaked electorates.
  \begin{observation}[Implicit in
    \citet{fal-hem-hem-rot:j:single-peaked-preferences}]\label{t:ccdc-restricted}
    CCDC with restricted deleting is in $\p$ for plurality over
    single-peaked electorates.
  \end{observation}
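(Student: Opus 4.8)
The plan is to observe that the polynomial-time algorithm of \citet{fal-hem-hem-rot:j:single-peaked-preferences} for CCDC for plurality over single-peaked electorates either already solves the restricted-deleting variant or admits a trivial modification that does. Recall from Lemma~\ref{lem:fhhr} that in a single-peaked plurality election with axis $c_1 \sord c_2 \sord \cdots \sord c_m$ the plurality score of a candidate depends only on its immediate neighbors \emph{among the surviving candidates}: an interior survivor $c$ with surviving neighbors $c'$ (left) and $c''$ (right) scores $\score_{(\{c',c,c''\},V)}(c)$, and a survivor at either end of the surviving set scores $\score_{(\{c,c'\},V)}(c)$ for its one neighbor $c'$. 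Hence, once we fix which candidate $\ell$ is $p$'s surviving left-neighbor and which candidate $r$ is $p$'s surviving right-neighbor (with the obvious conventions when $p$ is leftmost, rightmost, or the only surviving candidate), $p$'s final score $S_p$ is completely determined, and every candidate lying strictly between $\ell$ and $p$, or strictly between $p$ and $r$, is forced to be deleted.

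First I would enumerate the $\bigo(m^2)$ choices of the pair $(\ell,r)$. For each, let $d_0$ be the number of candidates strictly between $\ell$ and $p$ or strictly between $p$ and $r$; if $d_0 > K$, discard this choice. Otherwise $S_p$ is pinned down and the instance splits into two independent subproblems on the ``wings'' $c_1 \sord \cdots \sord \ell$ and $r \sord \cdots \sord c_m$: on each wing, choose a set of surviving candidates (which must include $\ell$, resp.\ $r$, since it neighbors $p$) so that every survivor's locally-computed score is at most $S_p$, minimizing the number of deletions. By Lemma~\ref{lem:fhhr} this is a straightforward polynomial-time dynamic program (equivalently a shortest-path computation) along the wing whose state records the last one or two surviving candidates: the transition that makes $c''$ the next survivor after $c'$ is legal only when the then-finalized candidate's score, computed from its two surviving neighbors, does not exceed $S_p$, and it costs one deletion per skipped candidate; the boundary candidates $\ell$ and $r$ are handled by noting that the surviving neighbor on the side facing $p$ is $p$ itself. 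Summing $d_0$ and the two minimum wing costs and comparing to $K$ decides the instance. This recovers the FHHR algorithm for single-peaked CCDC, running in time polynomial in $m$ and $\|V\|$.

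The restricted-deleting requirement $C' \cap F = \emptyset$ is now folded in for free. In the enumeration we additionally discard any pair $(\ell,r)$ for which some candidate of $F$ lies strictly between $\ell$ and $p$ or strictly between $p$ and $r$ (such candidates would be force-deleted). In each wing's dynamic program we forbid every transition that skips over (i.e., deletes) a candidate of $F$ --- equivalently, we split each wing at its $F$-candidates and run the dynamic program on each resulting sub-segment with those endpoints forced to survive. Since $p \in F$ and $p$ is never a candidate for deletion in CCDC anyway, the condition on $p$ is automatic, and the running time remains polynomial. The only point needing any care is the bookkeeping around the boundary cases of Lemma~\ref{lem:fhhr} --- correctly computing $S_p$ and each survivor's score in the leftmost/interior/rightmost cases, and correctly assigning to $\ell$ and $r$ their $p$-facing neighbor --- but there is no genuine mathematical obstacle: all the real content is the locality furnished by Lemma~\ref{lem:fhhr}, and the forbidden-to-delete set $F$ is accommodated purely by deleting edges from an already-present search structure.
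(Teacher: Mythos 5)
Your argument is correct, and it reaches the same destination the paper does by a slightly different road. The paper offers no standalone proof of Observation~\ref{t:ccdc-restricted}: it simply asserts that the statement is implicit in the proof of the single-peaked plurality CCDC result of \citet{fal-hem-hem-rot:j:single-peaked-preferences}, whose algorithm (visible in this paper in the $k$-local generalization proved for Theorem~\ref{t:klocal}) works as follows: enumerate $p$'s surviving neighborhood, force-delete the candidates squeezed between $p$ and that neighborhood, and then run a greedy loop that deletes any remaining candidate whose score exceeds $p$'s. The key point there is that, once $p$'s surviving neighbors (hence $S_p$) are fixed, the deletion set is \emph{forced}: deleting candidates can only raise the scores of others, so every candidate that beats $p$ at any stage of the cascade must be deleted, and the greedy computes the unique minimal deletion set; the restricted-deleting constraint is then checked simply by testing whether that forced set meets $F$. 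You instead replace the greedy by a wing-decomposition dynamic program (state = last one or two survivors, transitions validated via the locality of Lemma~\ref{lem:fhhr}, $F$-candidates forced to survive by forbidding transitions that skip them). Both rest on the same enumeration of $p$'s neighbors and the same locality lemma, and your DP provably returns the same minimal deletion cost as the greedy (any valid deletion set contains the forced one, so forbidding deletion of an $f\in F$ in the forced set correctly kills that branch in both formulations). The trade-off is that your DP is more machinery than the problem requires---the least-fixed-point/greedy argument makes the correctness proof one line---but it is also more robust: it would extend without change to, say, per-candidate deletion costs, where the forced-deletion argument alone no longer suffices to pick an optimal set. In short: correct, self-contained, mildly over-engineered relative to the intended one-line corollary.
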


  Let $I = (C,V,p,K)$ be our input instance of CCDC for plurality and
  let $\sord$ be the societal axis. Let $k'$ be the number of
  mavericks in $V$ ($k' \leq k$) and let $M = \{m_1, \ldots, m_{k'}\}$
  be the subcollection of $V$ that contains these $k'$ maverick
  voters. Our algorithm works as follows:
  \begin{enumerate}
  \item\label{s:main-loop2} For each vector $B = (b_1, \ldots, b_{k'})
    \in C^{k'}$ of candidates execute the following steps
    (intuitively, we intend to enforce that candidates $b_1, \ldots,
    b_{k'}$ are top-ranked candidates of voters in $M$).
    \begin{enumerate}
    \item If for any voter $m_i$, $1 \leq i \leq k'$, it holds that
      $m_i$ prefers some candidate in $(C \cup \{b_j
      \mid 1 \leq j \leq k'\}) - \{b_i\}$ to $b_i$ then drop this $B$
      and return to
      Step~\ref{s:main-loop2}.  (This condition guarantees that it is
      possible to ensure, via deleting voters, that for each $i$, $1
      \leq i \leq k'$, voter $m_i$ ranks candidate $b_i$ first among
      the participating candidates.)
    \item Set $F' = \{b_i \mid 1 \leq i \leq k'\} \cup \{p\}$.
    \item Set $C' = C - \{ c \in C \mid$ there is an $i$, $1 \leq i
      \leq k'$ such that $m_i$ prefers $c$ to $b_i \}$.
    \item Set $K' = K - \|\{ c \in C \mid$ there is an $i$, $1 \leq i \leq
      k'$ such that $m_i$ prefers $c$ to $b_i \}\|$. If $K' < 0$ then
      drop this $B$ and return to Step~\ref{s:main-loop2}.
    \item
Form a voter collection $V'$ that is identical to
      $V$ except that we restrict voters' preferences to candidates in
      $C'$ and for each voter $m_i$, $1 \leq i \leq k'$, we replace
      $m_i$'s preference order with an easily-computable preference
      order over $C'$ that ranks $b_i$ first and is single-peaked with
      respect to $\sord$.
    \item Using Observation~\ref{t:ccdc-restricted}, check if
      $(C',V',p, F',K')$ is a yes instance of CCDC with
      restricted deleting for single-peaked plurality elections with
      societal axis $\sord$. If so, accept.
    \end{enumerate}
    \item If the algorithm has not accepted so far, reject.
  \end{enumerate}

  Using the same arguments as in the case of CCAC, we can see that
  this algorithm is both correct and runs in polynomial time.
\end{proof}

\subsection{Proofs of Theorems~\ref{t:plurality-swoon},~\ref{t:plurality-manymavericks}, and~\ref{t:dodgson-C-2}}

\begin{theorem4x5}
CCAC and CCDC for plurality
elections over swoon-SP societies are
  $\np$-complete.
\end{theorem4x5}

Theorem~\ref{t:plurality-swoon} implies
the classic result of \citet{bar-tov-tri:j:control} that CCAC and CCDC
are $\np$-complete
for plurality elections. However, the proofs of Bartholdi, Tovey, and
Trick cannot be used directly for swoon-SP societies.

\begin{proof}[Proof of Theorem~\ref{t:plurality-swoon}]
  We first consider the case of CCAC.
  We easily note that CCAC for plurality over swoon-SP societies is in
  $\np$. It remains to show that it is $\np$-hard and we do so by
  giving a reduction from X3C.  Let $I = (B, \calS)$ be our input X3C
  instance, where $B = \{b_1, \ldots, b_{3k}\}$ and $\calS = \{S_1,
  \ldots, S_n\}$. Without loss of generality, we assume that $k \geq 2$
  and $n \geq 4$. 
  For each $b_i \in B$, we set $\ell_i$ to be the number of sets in
  $\calS$ that contain $b_i$.

  We construct an election $E = (C \cup A, V)$, where $C = B \cup
  \{p,d\}$ is the set of registered candidates, $A = \{a_1, \ldots,
  a_n\}$ is the set of spoiler (unregistered) candidates, and $V$ is a
  collection of votes. Each candidate $a_i$ in $A$ corresponds to a set
  $S_i$ in $\calS$. We assume that the societal axis $\sord$ is $p
  \sord d \sord b_1 \sord \cdots \sord b_{3k} \sord a_1 \sord \cdots
  \sord a_n$. (Our proof works for any easily computable axis.) Collection
  $V$ contains the following $(6kn) + (n) + (2nk+k-n) + (2nk) +
  \left(\sum_{i=1}^{3k}\left(2nk+2k-2k\ell_i\right)\right)$ votes; each
  of the five parenthesized terms in this expression corresponds to an item in the description
  of votes below.  (For each vote we only specify up to two top-ranked
  candidates. Note that voters in swoon-SP societies can legally pick
  any two candidates to be ranked in the top two positions of their
  votes. This is so, because the top-ranked candidate can be chosen
  freely as the candidate to which the voter swoons, and the
  second-ranked candidate can be chosen to be the voter's peak in the
  societal axis. We assume that the remaining positions in each
  vote---irrelevant from the point of view of our proof---are filled
  in in an easily computable way consistent with the societal axis
  $\sord$. For example, each voter we describe below could rank
  candidates as follows: (a) in the first up to two positions of the
  vote he or she would rank the candidates as described below
  (appropriately choosing the candidate to which he or she swoons, and
  the candidate who takes the role of the voter's peak on the societal
  axis), (b) in the remaining positions the voter would first rank the
  remaining candidates ``to the left'' of the peak and then those ``to
  the right'' of the peak.)
  \begin{enumerate}
  \item For each set $S_j \in \calS$, for each $b_i \in S_j$, we have $2k$ 
    votes $a_j > b_i > \cdots$.
  \item For each set $S_j \in \calS$ we have a single vote $a_j > p > \cdots$.
  \item We have $2nk+k - n$ voters that rank $p$ first.
  \item We have $2nk$ voters that rank $d$ first.
  \item For each $b_i \in B$, we have $2nk+2k-2k\ell_i$ voters
   that rank $b_i$ first.
  \end{enumerate}
  We note that in election $(C,V)$ the scores of candidates 
  are as follows:
  \begin{enumerate}
  \item $p$ has $2nk+k$ points,
  \item $d$ has $2nk$ points, and
  \item each candidate $b_i \in B$ has $2nk+2k$ points.
  \end{enumerate}
  That is, the winners of plurality election $(C,V)$ are exactly the
  candidates in $B$. We claim that there is a set $A'$, $A' \subseteq
  A$, such that $\|A'\| \leq k$ and $p$ is a winner of plurality
  election $(C \cup A',V)$ if and only if $I$ is a yes instance
  of X3C (that is, if there exists a collection of exactly $k$ sets
  from $\calS$ that union to $B$; such a collection of sets is called
  an exact set cover of $B$).

  Let $A''$ be some subset of $A$. It is easy to see that in election
  $(C \cup A'',V)$, plurality scores of candidates are as follows: $p$
  has score $2nk + k - \|A''\|$, $d$ has score $2nk$, each candidate
  $b_i \in B$ has score $2nk+2k - 2k\|\{ a_j \in A'' \mid b_i \in
  S_j\}\|$, and each candidate $a_i \in A''$ has score $6k+1$. 

  Assume that $p$ is a winner of election $(C \cup A'',V)$.  Since
  $d$'s score is $2nk$ and $p$'s score is $2nk+k - \|A''\|$, it holds
  that $\|A''\| \leq k$. Further, for each $b_i \in B$ it holds that
  $b_i$'s score is no larger than that of $p$. It is easy to verify
  that this is possible only if $A''$ corresponds to an exact set
  cover of $B$ (the score of each of $3k$ candidates in $B$ has to be
  decreased and each $a_j \in A''$ corresponds to decreasing the score
  of exactly three candidates in $B$).

  On the other hand, if $A''$ corresponds to an exact cover of $B$,
  then $p$ is a winner of election $(C \cup A'',V)$. In such a case
  $\|A''\| = k$ and so the score of $p$ is $2nk$. Since each $a_j \in
  A''$ corresponds to a set $S_j \in \calS$ that contains three unique
  members of $B$, the score of each $b_i \in B$ is $2nk$. The score of
  $d$ is $2nk$ as well. Each $a_j \in A''$ has score $6k+1 < 2nk$
  (this is so because $n \geq 4$ and $k \geq 2$). The proof is
  complete.

  \medskip

  We now move on to the case of CCDC.  CCDC for plurality
  over swoon-SP societies is clearly in $\np$ and we focus on proving
  $\np$-hardness. We do so by giving a reduction from X3C.  Let $I =
  (B, \calS)$ be an X3C instance, where $B = \{b_1, \ldots, b_{3k}\}$
  and $\calS = \{S_1, \ldots, S_n\}$. Without loss of generality we
  assume that $k > 5$. We use societal axis $p \sord d \sord b_1 \sord
  \cdots \sord b_{3k} \sord a_1 \sord \cdots \sord a_n$.

  We construct an instance of CCDC for plurality as follows. Set $A =
  \{a_1, \ldots, a_n\}$ and let $E = (C,V)$ be an election, where $C = B
  \cup A \cup \{p\}$ and $V$ contains the following groups of votes
  (for each vote we only specify up to two top candidates and up to
  one ranked-lowest candidate; the reader can verify that using
  societal axis $\sord$ it is possible to create swoon-SP votes of the
  form we require).
  \begin{enumerate}
  \item For each $S_j \in \calS$ and for each $b_i \in S_j$, we have
    one vote $a_j > b_i > \cdots > p$.
  \item For each $S_j \in \calS$, we have one vote $a_j > p > \cdots$.
  \item For each $b_i \in B$, we have $k-1$ votes $b_i > \cdots > p$.
  \end{enumerate}
  In this election the candidates have the following scores:
  \begin{enumerate}
  \item $p$ has $0$ points,
  \item each $b_i \in B$ has $k-1$ points, and
  \item each $a_j$, $1 \leq j \leq n$, has $4$ points (note that $4 <
    k-1$).
  \end{enumerate}
  We claim that it is possible to ensure that $p$ is a winner of this
  election by deleting at most $k$ candidates if and only if $I$ is a
  yes instance of X3C.

  First, assume that $I$ is a yes instance of X3C and let $A'$
  be a subset of $A$ such that $\{S_i \mid a_i \in A'\}$ is an exact
  cover of $B$. It is easy to see that $p$ is a plurality winner of
  election $E' = (C - A',V)$: Compared to $E$, in $E'$ the
  score of $p$ increases by $k$, the score of each $b_i \in B$
  increases by $1$, and the scores of remaining members of $A$ do not
  change. Thus, $p$ and all members of $B$ tie for victory.

  On the other hand, assume that there exists a set $A'' \subseteq B
  \cup A$ of candidates, $\|A''\| \leq k$, such that $p$ is a winner
  of election $E'' = (C - A'',V)$. Since $\|A''\| \leq k$, there
  are at least $2k$ candidates from $B$ in $E''$ and so the score of
  $p$ in $E''$ has to be at least $k-1$, to tie with these
  candidates. However, the only way to increase $p$'s score to
  $k-1$ (or higher) by deleting at most $k$ candidates is to delete
  $k-1$ (or more) candidates from $A$.  Yet if we delete exactly
  $k-1$ candidates from $A$, then there is some candidate $b_i$ in the
  election whose score is at least $k$. Thus, $A''$ must contain
  exactly $k$ candidates from $A$. Deleting these candidates increases
  $p$'s score to be $k$. To ensure that the scores of the candidates
  in $B$ do not exceed $k$, we must ensure that $A''$ corresponds to
  an exact cover of $B$ by sets from $\calS$. This completes the
  proof.
\end{proof}

By a simple extension of the above proof, we can also show that
allowing the number of mavericks to be some root of the input size
cannot be handled either.

\begin{theorem4x6}
For each $\epsilon > 0$,
CCAC and CCDC for plurality
elections over $I^\epsilon$-maverick-SP societies
are $\np$-complete, where $I$ denotes the input size.
\end{theorem4x6}
\begin{proof}
  If $\epsilon \geq 1$ then all voters can be mavericks and the
  theorem certainly holds (because CCAC and CCDC are $\np$-complete
  for plurality with unrestricted votes).  Let us consider the case
  when $\epsilon$ is strictly between $0$ and $1$. In this case we can
  adapt the proof of Theorem~\ref{t:plurality-swoon} by including an
  appropriate number of padding voters.

  Let us first handle the case of CCAC. Let $(B,\calS)$ be an instance
  of X3C where $B = \{b_1, \ldots, b_{3k}\}$ and $\calS = \{S_1,
  \ldots, S_n\}$, and let $(C,A,V,p,K)$ be the instance of CCAC for
  plurality produced by the reduction in the proof of
  Theorem~\ref{t:plurality-swoon}. Let $\sord$ be the societal axis
  used in the proof of Theorem~\ref{t:plurality-swoon}. By definition,
  $(B,\calS)$ is a yes instance of X3C if and only if
  $(C,A,V,p,K)$ is a yes instance of CCAC for
  plurality. However, of course, we have no guarantee that $V$
  contains at most $I^\epsilon$ mavericks (with respect to $\sord$),
  where $I$ denotes the input size of $(C,A,V,p,K)$. Yet it is easy
  to verify that for each positive integer $t$, $(C,A,V,p,K)$ is a
  yes instance of CCAC for plurality if and only if $(C,A,V \cup
  V'_t,p,K)$ is a yes instance of the same problem, where $V'_t$ is a
  collection of $t$ blocks of votes that each contain the following
  $3k+2$ votes:
  \begin{enumerate}
  \item For each $i$, $1 \leq i \leq 3k$, there is a single vote that
    is single-peaked with respect to $\sord$ and ranks $b_i$ first and
    $p$ last\footnote{For CCAC it is not even necessary to require
      that $p$ is ranked last. However, we will also use $V'_t$ for
      the CCDC case, where it \emph{is} necessary.} (note that, by our
    choice of $\sord$ in the proof of Theorem~\ref{t:plurality-swoon},
    such a vote exists).
  \item There is a single vote that is single-peaked with respect to
    $\sord$ and ranks $p$ first.
  \item There is a single vote that is single-peaked with respect to
    $\sord$ and ranks $d$ first and $p$ last.
  \end{enumerate}
  It is easy to see that by choosing a large enough value of $t$ (but
  polynomially bounded in $I^{\frac{1}{\epsilon}}$), it is possible to
  form an instance $(C,A,V \cup V'_t,p,K)$, whose encoding size is $I'$,
  that is a yes instance of CCAC for plurality if and only if
  $(B,\calS)$ is a yes instance of plurality, and which
  contains at most $I'^\epsilon$ mavericks with respect to the
  societal axis $\sord$ (namely, the voters in $V$). This proves our
  theorem for the CCAC case.

  Essentially the same proof approach works for the CCDC case. The
  crucial observation here is that the proof of the CCDC case of
  Theorem~\ref{t:plurality-swoon} ensures that deleting candidates
  outside of the set $A$ is never a successful strategy. Adding the
  voters $V'_t$ does not affect this observation because in all votes
  in $V'_t$ candidate $p$ is either ranked first or ranked last
  (however, of course, for the case of CCDC each of the $t$ blocks of
  votes in $V'_t$ contains only $3k+1$ votes; the vote that ranks $d$
  first and $p$ last is not included, and the candidate $d$ does not
  occur in any of the other votes).
\end{proof}

\begin{theorem4x8}
CCAC and CCDC for plurality
elections are $\np$-complete for 
Dodgson$_{m-2}$-SP societies and for 
PerceptionFlip$_{m-2}$-SP societies, where
$m$ is the total number of candidates involved
in the election.
\end{theorem4x8}

It is easy to see that Theorem~\ref{t:dodgson-C-2} is a simple
corollary to the proof of Theorem~\ref{t:plurality-swoon}. If $m$ is
the total number of candidates involved in the election then both in
Dodgson$_{m-2}$-SP societies and in PerceptionFlip$_{m-2}$-SP
societies the voters can legally rank any two candidates on top of
their votes (see lemma below). Further, the societal axis in the CCDC
part of the proof of Theorem~\ref{t:plurality-swoon} is such that the
voters can easily rank $p$ last if need be. This is all that we need
for the proof of Theorem~\ref{t:plurality-swoon} to work for the case
of Theorem~\ref{t:dodgson-C-2}.

\begin{lemma}
  Let $C = \{c_1, \ldots, c_m\}$ be a set of candidates, $m \geq 2$,
  and let $\sord$ be a societal axis over $C$ such that $c_1 \sord c_2
  \sord \cdots \sord c_m$. For each two candidates $c_i, c_j \in C$,
  there exist two preference orders of the form $c_i > c_j > \cdots$
  such that the first is nearly single-peaked in the sense of
  Dodgson$_{m-2}$-SP societies and the second one is nearly
  single-peaked in the sense of PerceptionFlip$_{m-2}$-SP
  societies. Further, if $i \neq 1$ and $j \neq 1$, it is possible to
  ensure that these preference orders rank $c_1$ last.
\end{lemma}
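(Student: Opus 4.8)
The plan is to handle the two claims separately, each time using the standard characterization of single-peakedness: a preference order $v$ is single-peaked with respect to a linear order $\ell$ if and only if, for every $t$, the set of $v$'s top $t$ candidates is a contiguous block of $\ell$; equivalently, $v$ is obtained by choosing a ``peak'' and then, reading the order from top to bottom, repeatedly removing the leftmost or the rightmost candidate of the block of not-yet-listed candidates. Two elementary facts I would record up front: (i) for any peak $c_p$ with $p \geq 2$, the order $c_p > c_{p+1} > \cdots > c_m > c_{p-1} > c_{p-2} > \cdots > c_1$ is single-peaked with respect to $\sord$ and ranks $c_1$ last; and (ii) raising a candidate from position $r$ to the second position of a preference order costs exactly $r-2$ adjacent swaps, does not affect the candidate in first position, and does not affect any candidate below the one being raised.

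\textbf{The Dodgson$_{m-2}$-SP claim.} Given $c_i,c_j$ with $i\neq j$, I would name an explicit single-peaked (with respect to $\sord$) vote $v'$ and then apply at most $m-2$ adjacent swaps. If $1\in\{i,j\}$ (so $c_1$ need not be last): when $i=1$ take $v'=c_1>c_2>\cdots>c_m$ and raise $c_j$ to second place, costing $j-2\le m-2$ swaps; when $j=1$ (hence $i\geq 2$) take $v'=c_i>c_{i-1}>\cdots>c_1>c_{i+1}>\cdots>c_m$ and raise $c_1$ to second place, costing $i-2\le m-2$ swaps. If $i,j\geq 2$ (so $c_1$ must be last), take $v'=c_i>c_{i+1}>\cdots>c_m>c_{i-1}>\cdots>c_1$ from fact (i); here $c_j$ sits in position $j-i+1$ if $j>i$ and in position $m-j+1$ if $j<i$, so by fact (ii) raising it to second place costs at most $m-3$ swaps and keeps $c_i$ first and $c_1$ last. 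In every case the cost is at most $m-2$ (and the bound is tight, e.g.\ for $\{i,j\}=\{1,m\}$).

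\textbf{The PerceptionFlip$_{m-2}$-SP claim.} Here I would build a modified axis $\sord'$ obtained from $\sord$ by at most $m-2$ adjacent swaps and then exhibit a vote single-peaked with respect to $\sord'$. The crucial observation is: once $c_i$ and $c_j$ are adjacent in $\sord'$, there is a single-peaked-with-respect-to-$\sord'$ vote with peak $c_i$ and second choice $c_j$ (form the two-element block $\{c_i,c_j\}$ at the second step and then extend it outward in either direction); and if $c_1$ is moreover at an extreme end of $\sord'$, one may choose to extend toward the opposite end first, so that $c_1$ is listed last. It therefore suffices to make $c_i$ and $c_j$ adjacent in $\sord'$, keeping $c_1$ pinned at the left end whenever $i,j\geq 2$. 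Making indices $i$ and $j$ adjacent by sliding one candidate over the candidates strictly between them costs $|i-j|-1\le m-2$ adjacent swaps; when $i,j\geq 2$ this move stays entirely among indices $\geq 2$, so it costs at most $m-3$ swaps and does not move $c_1$.

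\textbf{Expected main obstacle.} All the swap counting is routine; the one genuinely delicate point is the ``$c_1$ last'' requirement, because the cheapest single-peaked vote (Dodgson case) or the cheapest near-axis (PerceptionFlip case) does not by itself place $c_1$ at the bottom, respectively at an end, and one must check that imposing this extra property still keeps the cost within $m-2$ --- which works out precisely because, once positions $1$ and $2$ are reserved for $c_i$ and $c_j$, only $m-2$ slots remain to be rearranged. I would also dispatch the trivial edge case $m=2$ (there $m-2=0$, every two-candidate vote is single-peaked, and the ``$c_1$ last'' clause is vacuous).
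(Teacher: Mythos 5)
Your proposal is correct and follows essentially the same route as the paper's proof: for the Dodgson case, start from a single-peaked vote with peak $c_i$ (ranking $c_1$ last when $i,j\neq 1$) and lift $c_j$ to second place with at most $m-2$ adjacent swaps; for the PerceptionFlip case, make $c_i$ and $c_j$ adjacent in the axis with at most $m-2$ swaps while keeping $c_1$ at the left end, and then read off a single-peaked vote. You merely make explicit the swap counts and the choice of starting order that the paper leaves as ``easy to see.''
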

\begin{proof}
  Let $c_i$ and $c_j$ be two arbitrary, distinct candidates in $C$.
  We first consider the case of Dodgson$_{m-2}$-SP societies.  Let
  $>'$ be an arbitrary preference order that is single-peaked with
  respect to $\sord$ and that ranks $c_i$ first (and $c_1$ last, if $i
  \neq 1$ and $j \neq 1$). We obtain $>$ from $>'$ by shifting $c_j$
  forward in $>'$ to the second position (that is, just below
  $c_i$). It is easy to see that this requires at most $m-2$
  swaps.

  For the case of PerceptionFlip$_{m-2}$-SP societies, note that
  by using at most $m-2$ swaps of adjacent candidates, it is
  possible to obtain a societal axis $\sord'$ from $\sord$ where
  candidates $c_i$ and $c_j$ are adjacent (and where, if $i \neq 1$
  and $j \neq 1$, it still holds that $c_1 \sord c_k$ for each $k$, $2
  \leq k \leq m$). Clearly, there is a preference order $>$ that is
  single-peaked with respect to $\sord'$ and that ranks $c_i$ first
  and $c_j$ second (and $c_1$ last, if $i \neq 1$ and $j \neq 1$).
\end{proof}

\subsection{Proof of Theorem~\ref{t:klocal}}

\begin{theoremklocal}
For each constant $k$,
CCAC and CCDC for plurality
elections are in $\p$ for $k$-local elections. 
\end{theoremklocal}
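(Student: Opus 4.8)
The plan is to lift the polynomial-time algorithms of \citet{fal-hem-hem-rot:j:single-peaked-preferences} for the single-peaked (i.e., $1$-local) case to arbitrary constant $k$, exploiting the fact that in a $k$-local election the plurality score of a candidate, in \emph{any} surviving candidate set, is determined by a window of at most $2k+1$ surviving candidates around it on the axis. I will spell out CCDC; CCAC is handled by the same scheme with ``deleting a candidate of $C$'' replaced by ``adding a candidate of $A$,'' and I note the differences at the end.

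\textbf{Step 1 (guess $p$'s surviving neighborhood).} Let $(C,V,p,K)$ be a CCDC instance with $C=\{c_1,\dots,c_m\}$, $c_1 \sord c_2 \sord \cdots \sord c_m$, and $(C,V)$ $k$-local with respect to $\sord$. Enumerate all sets $N = S_L \cup \{p\} \cup S_R$, where $S_L$ (resp.\ $S_R$) is a set of at most $k$ candidates lying to the left (resp.\ right) of $p$ on $\sord$, with the intent that in the final surviving set $C'$ the elements of $S_L$ and $S_R$ are \emph{exactly} $p$'s up-to-$k$ nearest surviving left- and right-neighbors. Since $k$ is constant there are only $\bigo(m^{2k})$ such guesses. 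For a fixed guess the forced part of the solution is immediate: every candidate strictly between two consecutive elements of $N$ on $\sord$ must be deleted (call this set $F$), and, if $\|S_L\| < k$, then every candidate to the left of $S_L$ must be deleted as well (symmetrically for $S_R$); discard the guess if $p \in F$ or $\|F\| > K$. By $k$-locality, once the neighborhood of $p$ is realized as $N$, the final score of $p$ is pinned to $s := \score_{(N,V)}(p)$, so ``$p$ is a winner'' becomes exactly ``every surviving candidate has score at most $s$.''

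\textbf{Step 2 (decompose and run a DP on each side).} With $N$ fixed, the residual freedom splits into two \emph{independent} subproblems: the ``left segment'' $C_{\mathrm{left}}$ consisting of the candidates strictly left of $S_L$ (nonempty only when $\|S_L\|=k$), and the symmetric ``right segment'' $C_{\mathrm{right}}$. By $k$-locality, the score of every candidate in $C_{\mathrm{left}} \cup S_L$ is determined by the deletions chosen inside $C_{\mathrm{left}}$ together with the already-fixed block $S_L \cup \{p\} \cup S_R$, and never depends on $C_{\mathrm{right}}$; symmetrically for the right. So we may optimize the two sides separately. For one side I would solve, by a dynamic program running along $\sord$, the one-dimensional problem ``delete as few candidates of the segment as possible so that every surviving candidate of the segment \emph{and} of the adjacent fixed block $S_L$ (resp.\ $S_R$) has score at most $s$'': process the segment's candidates in axis order (followed by $S_L,p,S_R$, present only as context), with DP state equal to (current position, the positions of the at most $k$ most recent surviving candidates, number of deletions used so far); a transition keeps or deletes the current candidate, and when a candidate becomes ``finalized''---once $k$ later survivors have appeared, or the fixed block is reached---its score is computed from $V$ and checked against $s$. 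For constant $k$ this has polynomially many states and returns the minimum number $d_L$ (resp.\ $d_R$) of deletions on that side. The guess succeeds iff $\|F\| + d_L + d_R \le K$, and the algorithm accepts iff some guess succeeds; one direction is clear (an accepting guess yields an explicit successful deletion set), and conversely any successful deletion set induces a unique neighborhood $N$ of $p$, for which the induced left/right deletions witness small enough $d_L,d_R$.

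\textbf{CCAC and the main obstacle.} For CCAC the same skeleton applies with $N = S_L \cup \{p\} \cup S_R$ and $S_L,S_R \subseteq C \cup A$, except: a guess is \emph{infeasible} if any candidate of $C$ lies strictly between consecutive elements of $N$ (such a candidate cannot be removed); the forced moves are ``add the $A$-candidates of $N$'' (costing $\|N \cap A\|$) and ``do not add the $A$-candidates strictly between elements of $N$'' (free); $s := \score_{(N,V)}(p)$ as before; and on each outer segment the DP instead decides which $A$-candidates of that segment to add, minimizing additions subject to every surviving candidate there (and in the adjacent fixed block) having score at most $s$. I expect the main obstacle to be making Steps~1--2 airtight rather than finding the algorithm: because deleting (or adding) a candidate can \emph{raise} a neighbor's plurality score, one cannot simply greedily repair violations, so the per-side DP's correctness---in particular that the $\bigo(k)$-bounded state suffices to determine every candidate's score, that a candidate's ``finalized'' score never changes afterward, and that the two sides genuinely do not interact once $N$ is fixed---must be argued by a careful appeal to the $k$-locality hypothesis (and, for the base window computations, to Lemma~\ref{lem:fhhr}'s style of reasoning about which candidates a vote's peak can ``reach'').
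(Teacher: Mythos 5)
Your overall scheme---enumerate $p$'s intended $k$-radius surviving neighborhood $N$ (polynomially many choices for constant $k$), which by $k$-locality pins $p$'s score to $s=\score_{(N,V)}(p)$ and forces the deletions/non-additions strictly between consecutive elements of $N$, and then independently minimize the remaining work on the left and right segments---is exactly the paper's algorithm for CCAC, and your per-side dynamic program is the paper's Lemma~\ref{l:k-local-dynamic} in different clothing (their state $f(d_i,D')$ carries the full intended $k$-radius neighborhood $D'$ of the current candidate, with a compatibility condition stitching consecutive neighborhoods together). There are two points of divergence worth flagging. First, for CCDC the paper needs no second DP at all: once $p$'s neighborhood (hence $s$) is fixed by the forced deletions, it simply loops, deleting any surviving candidate whose current score exceeds $s$. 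This greedy is correct precisely because of the monotonicity you list as an obstacle: deleting candidates can only \emph{raise} the scores of the remaining ones, so a candidate already beating $p$ beats $p$ in every extension and its deletion is forced; the greedy therefore computes the unique minimal completion. Your worry that one ``cannot simply greedily repair violations'' is thus backwards for deletion (and for addition the concern points the other way, since adding a candidate never raises anyone else's plurality score---which is why the DP is needed there, where the choice of \emph{which} candidates to add to lower scores is genuinely non-forced). Your CCDC DP is not wrong, just heavier than necessary. Second, a quantitative slip you must repair: a state recording only the $k$ most recent survivors cannot finalize a candidate $c$'s score, since by $k$-locality that score is $\score_{(N(\sord,C',c,k),V)}(c)$ and depends on the $k$ nearest survivors on \emph{both} sides of $c$; the sliding window must retain the last $2k+1$ survivors (as the paper's state effectively does). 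With that enlargement the state space is still polynomial for constant $k$, and your left/right independence claim once $N$ is fixed is justified exactly as you say and exactly as in the paper.
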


We first give a polynomial-time algorithm for CCAC for plurality
$k$-local elections.  The main idea of our algorithm is the following.
Let $p$ be the candidate whose victory we want to ensure in our input
$k$-local instance of plurality CCAC.  We first add up to $2k$
candidates so that the score of $p$ is fixed, and then we run a
dynamic programming algorithm that ensures that no candidate has score
higher than this fixed score of $p$.  Of course, we do not know which
candidates to add in the first part of the algorithm, so we perform an
exhaustive search (since $k$ is a constant, it is possible to perform
such a search in polynomial time). We will first describe the dynamic
programming algorithm in Lemma~\ref{l:k-local-dynamic} and then we
will provide the main algorithm.  Before we proceed with this plan, we
need to provide some additional notation.

Let $E = (C \cup A, V)$ be a $k$-local plurality election, where we
interpret $C$ as the registered candidates and $A$ as the spoiler
candidates. Let $\sord$ be the societal axis for $E$. We rename the
candidates so that $D = C \cup A = \{d_1, \ldots, d_m\}$ and $d_1
\sord d_2 \sord \cdots \sord d_m$. For each set $B \subseteq D$, we
define $\LEFT(B)$ to be the minimal (leftmost) element of $B$ with
respect to $\sord$ and $\RIGHT(B)$ to be the maximal (rightmost)
element of $B$ with respect to $\sord$. For each $d_i \in D$ we define
$\calS(d_i)$ to be the family of sets $\{ N(\sord,C \cup A', d_i, k) \mid A'
\subseteq A, d_i \in C \cup A'\}$. The reader can verify that each
$\calS(d_i)$ contains a number of sets that is at most polynomial in
$\|C \cup A\|^k$ and that each $\calS(d_i)$ is easily computable
(to compute $\calS(d_i)$ it suffices to consider sets $A'$ of
cardinality at most $2k+1$).

\begin{lemma}\label{l:k-local-dynamic}
  Let $E = (C \cup A, V)$ be a plurality election, where $C = \{c_1, \ldots,
  c_{m'}\}$ and $A = \{a_1, \ldots, a_{m''}\}$, such that $E$ is
  $k$-local for some positive integer $k$. There exists an algorithm
  that given election $E$, integer $k$, societal axis $\sord$ with
  respect to which $E$ is $k$-local, and a nonnegative integer $t$
  outputs the cardinality of a smallest (in terms of cardinality) set
  $A' \subseteq A$ such that the plurality scores of all candidates in
  election $(C \cup A',V)$ are at most $t$, or indicates that no such
  set $A'$ exists.  This algorithm runs in time polynomial with
  respect to $(\|C \cup A\|+\|V\|)^k$.
\end{lemma}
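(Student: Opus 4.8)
The plan is to give a left-to-right sliding-window dynamic program along the societal axis, exploiting the fact that in a $k$-local election the plurality score of a present candidate depends only on the identities of the (at most) $k$ present candidates immediately to its left and the (at most) $k$ immediately to its right in the chosen candidate set. Working with $D = C \cup A = \{d_1,\ldots,d_m\}$ ordered so that $d_1 \sord d_2 \sord \cdots \sord d_m$, a choice of $A' \subseteq A$ amounts to deciding, for each $i$, whether $d_i$ is ``in'' $C \cup A'$: every $d_i \in C$ is forced in, and every $d_i \in A$ is in (at cost $1$) or out (free). We want to minimize the number of in-choices among $A$-candidates subject to: every in-candidate $d$ has plurality score at most $t$ in $(C \cup A', V)$, which by $k$-locality equals $\score_{(N(\sord, C \cup A', d, k), V)}(d)$, a quantity determined by $d$ together with the (up to) $k$ in-candidates on each side of it.

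First I would set up the DP state: after processing $d_1,\ldots,d_i$, the state is the $\sord$-ordered list of the last $\min(2k,\cdot)$ in-candidates chosen so far, and the stored value is the minimum number of in-choices among $A$-candidates consistent with that state and with all \emph{already-determined} in-candidate scores being $\le t$. Remembering $2k$ (rather than $k$) in-candidates is what makes the recurrence local: when we add the $r$-th in-candidate $e_r$, the earlier in-candidate $e_{r-k}$ finally has its full right-neighborhood pinned down, so we can compute its score from $e_{r-2k},\ldots,e_r$ (all present in the current window together with $e_r$), confirm it is $\le t$ or else kill this transition, and then drop $e_{r-2k}$. The transitions are then routine: at step $i$, if $d_i \in C$ we must append $d_i$ (running the finalization check if the window was already of length $2k$); if $d_i \in A$ we take the better of leaving the window unchanged or appending $d_i$ and adding $1$ to the cost. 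After processing $d_m$, for each surviving state we finalize the last $\min(k,\text{length})$ entries of its window (their right-neighborhoods are truncated by the end of $D$ and their left-neighborhoods lie inside the window), keeping the state only if all those scores are $\le t$; the output is the minimum stored cost over surviving states, or ``no such $A'$'' if there are none. Each score check is carried out by scanning $V$ once and counting the voters who top the relevant at-most-$(2k+1)$-element neighborhood, so it costs $O(k\|V\|)$.

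For the time bound, a state is just an $\sord$-ordered subset of $D$ of size at most $2k$, so there are $O(m^{2k})$ of them; there are $m$ stages, $O(1)$ transitions per state, and an $O(k\|V\|)$ check per transition, giving running time polynomial in $(\|C \cup A\| + \|V\|)^{k}$, as required. I expect the main obstacle to be the correctness argument rather than the bound: one must verify that the window of the last $2k$ in-candidates really does carry exactly enough information both to recover the left-neighborhood of every future in-candidate and to finalize precisely the in-candidates whose $k$-neighborhood has become fully determined, handling the ``off-by-$k$'' timing so that no candidate's score is checked twice or left unchecked, and establishing that feasible choices of $A'$ are in bijection with accepting runs of the DP. I would also flag and dispatch explicitly the two boundary regimes --- fewer than $k$ in-candidates to the left, and fewer than $k$ to the right because we have reached the end of $D$ --- since those are exactly where the truncated neighborhoods in the definition of $N$ matter; these cases are routine but should be stated rather than glossed over.
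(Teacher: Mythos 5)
Your proposal is correct and is essentially the paper's own argument in a slightly reparameterized form: the paper's DP state is a present candidate together with its full $k$-radius neighborhood $D'\in\calS(d_i)$ (with consecutive neighborhoods glued by the condition $D'\cup\{\LEFT(D'')\}=D''\cup\{\RIGHT(D')\}$ and a $\chi_A(\RIGHT(D'))$ cost term), which is exactly your sliding window of $\Theta(k)$ consecutive present candidates with the score of each candidate checked once its neighborhood is fully determined. Both yield $m^{O(k)}$ states and the claimed time bound, so no further comparison is needed.
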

\begin{proof}
  The proof of this lemma is a much extended version of the proof of
  Lemma~3.7 of~\citep{fal-hem-hem-rot:j:single-peaked-preferences}.
  Let the notation be as in the statement of the lemma. We assume that
  $C$ is nonempty.

  We let $D = C \cup A$ and, without loss of generality, we rename the
  candidates so that $D = \{d_1, \ldots, d_m\}$, where $m = m'+m''$,
  and $d_1 \sord d_2 \sord \cdots \sord d_m$. Without loss of
  generality, we assume that $d_1, d_m \in C$ (if this was not the
  case, we could extend $C$ to include two additional candidates,
  ranked last by all voters, without destroying $k$-locality of the
  election).
  
  For each $d_i \in D$ and each $D' \in \calS(d_i)$ we define $f(d_i,
  D')$ to be the cardinality of a smallest (with respect to
  cardinality) set $A' \subseteq A$ such that:
  \begin{enumerate}
  \item For each candidate $d_j \in C \cup A'$ such that $j \leq i$ it
    holds that $\score_{(C \cup A',V)}(d_j) \leq t$.
  \item If $d_{j'} = \LEFT(D')$ and $d_{j''} = \RIGHT(D')$ then $D' =
    (C \cup A') \cap \{d_{j'}, d_{j'+1}, \ldots, d_{j''}\}$.
(Since $d_1,d_m \in C$, this is equivalent to
$D' = N(\sord,C\cup A',d_i,k)$.)
  \end{enumerate}
  If such an $A'$ does not exist, then we set $f(d_i, D') = \infty$.  

  Let $d_i$ be a candidate in $D$ and let $D'$ be a member of
  $\calS(d_i)$. Intuitively, $D'$ describes the intended $k$-radius
  neighborhood of $d_i$. Function $f(d_i, D')$ tells us how many
  candidates from $A$ we need to add to election $(C,V)$ so that in
  the resulting election the $k$-radius neighborhood of $d_i$ is
  exactly $D'$ (which fixes the score of $d_i$), the score of $d_i$ is
  at most $t$, and the scores of candidates preceding $d_i$ (in terms
  of $\sord$) also are at most $t$.

Since $d_m \in C$, it is easy to verify that our algorithm
  should output $\min\{ f(d_m, D') \mid  D' \in \calS(d_m) \}.$
  Thus, in the rest of the proof we describe how to compute $f$ using
  dynamic programming.

  It is easy to see that for each $D' \in \calS(d_1)$ it is possible
  to directly compute the value $f(d_1, D')$. To compute $f(d_i, D')$
  for arbitrary $d_i \in D$, $D' \in \calS(d_i)$, we use the
  following, natural to derive, recursive relation. Let us fix some
  $d_i \in D$, $i > 1$, and $D' \in \calS(d_i)$.
Let $j = \max\{j' \mid j' < i \mbox{ and } d_{j'} \in D'\}$. 
We observe that:
  \begin{align*}
    f(d_i,& D') = \min\{f(d_{j}, D'') \mid
 D'' \in \calS(d_{j}) \mbox{ and}\\
    &\score_{(D',V)}(d_i) \leq t \mbox{ and } D' \cup \{\LEFT(D'')\} = D'' \cup \{\RIGHT(D')\} \} +\chi_{A}(\RIGHT(D')),
  \end{align*}
  where $\chi_{A}$ is the characteristic function of $A$.
(Note that if $\RIGHT(D') \in D''$ for $D'' \in \calS(d_{j})$,
then $\RIGHT(D') = d_m$ and so $\chi_{A}(\RIGHT(D')) = 0$.)
 Using
  standard dynamic programming techniques we can thus compute $f$ in
  time polynomial in $(\|C \cup A\|+\|V\|)^k$.
\end{proof}

We now prove that CCAC for plurality $k$-local elections is in $\p$.

\begin{lemma}
  For each fixed $k$, CCAC for $k$-local plurality elections, where
  the societal axis $\sord$ is given, is in $\p$.
\end{lemma}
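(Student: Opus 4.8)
The plan is to follow the outline given just above Lemma~\ref{l:k-local-dynamic}: \emph{guess} the $k$-radius neighborhood of $p$ in the final election, use it to pin down $p$'s plurality score, and then invoke Lemma~\ref{l:k-local-dynamic} to ask whether the remaining spoilers can be added so that every other candidate is driven down to at most that score. Fix an input instance $(C,A,V,p,K)$ together with the given societal axis $\sord$, and rename the candidates of $D = C\cup A$ so that $d_1 \sord \cdots \sord d_m$. For each $N \in \calS(p)$ (recall $\calS(p)$ has size polynomial in $\|C\cup A\|^k$ and is easily computable, since it suffices to consider at most $2k+1$ added spoilers), set $t_N = \score_{(N,V)}(p)$; by $k$-locality this is exactly the score $p$ would receive in any election $(C\cup A',V)$ with $N(\sord, C\cup A', p, k) = N$.

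For a fixed $N$, the demand that $p$'s neighborhood be exactly $N$ constrains the set $A'$ of added spoilers in the natural way: the spoilers in $N\cap A$ must be added; no spoiler strictly $\sord$-between $\LEFT(N)$ and $\RIGHT(N)$ that is not already in $N$ may be added; and, on any side of $p$ where $N$ has fewer than $k$ members, no spoiler beyond $\LEFT(N)$ (resp.\ $\RIGHT(N)$) on that side may be added. (If $N$ is not even compatible with the fixed registered set $C$, we discard it.) Write $\mathit{Forced}_N = N\cap A$ for the first group, $\mathit{Forbidden}_N$ for the spoilers ruled out by the last two conditions, and $\mathit{Free}_N = A\setminus\mathit{Forced}_N\setminus\mathit{Forbidden}_N$ for the rest. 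The key point is that every member of $\mathit{Free}_N$ lies $\sord$-outside $[\LEFT(N),\RIGHT(N)]$, on a side of $p$ on which $N$ already contains all $k$ of its members; hence adding any subcollection of $\mathit{Free}_N$ leaves $N(\sord,\cdot,p,k)$ equal to $N$, and so leaves $p$'s score equal to $t_N$.

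Now let $C_N = C\cup\mathit{Forced}_N$, let $V_N$ be $V$ with the candidates of $\mathit{Forbidden}_N$ deleted from every vote, and consider the election $E_N = (C_N\cup\mathit{Free}_N,\,V_N)$. Deleting candidates preserves $k$-locality with respect to (the restriction of) $\sord$, since neighborhoods inside any subset are unchanged and plurality scores depend only on who is topmost among the present candidates; so Lemma~\ref{l:k-local-dynamic} applies to $E_N$ and returns the minimum number $r_N$ of spoilers from $\mathit{Free}_N$ whose addition makes all scores at most $t_N$ (or $\infty$). The algorithm accepts iff $\|\mathit{Forced}_N\| + r_N \le K$ for some $N\in\calS(p)$. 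If it accepts via $N$, then adding $\mathit{Forced}_N$ together with a witnessing $A''\subseteq\mathit{Free}_N$ to $C$ gives an election in which $p$ scores exactly $t_N$ and every other candidate at most $t_N$, so $p$ is a winner. Conversely, if $A^\star$ witnesses a yes-instance and $N^\star = N(\sord,C\cup A^\star,p,k)$, then $\mathit{Forced}_{N^\star}\subseteq A^\star$, $A^\star\cap\mathit{Forbidden}_{N^\star}=\emptyset$, and $A^\star\setminus\mathit{Forced}_{N^\star}$ is a legal choice for $E_{N^\star}$ bringing all scores down to $t_{N^\star}$, so $r_{N^\star}\le\|A^\star\|-\|\mathit{Forced}_{N^\star}\|$ and the test succeeds. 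Since $k$ is constant, $\|\calS(p)\|$ is polynomial and each iteration (the Lemma~\ref{l:k-local-dynamic} call included) runs in polynomial time, so the whole procedure runs in polynomial time.

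The step I expect to be the main obstacle is precisely the bookkeeping around $\mathit{Forced}_N$, $\mathit{Forbidden}_N$, and $\mathit{Free}_N$: one must verify both that $p$'s score is genuinely locked at $t_N$ regardless of which free spoilers the dynamic program chooses, and that nothing a genuine solution does is accidentally excluded --- in particular that a genuine solution's spoiler set, minus its forced part, really lands inside $\mathit{Free}_{N^\star}$, and that passing from $V$ to $V_N$ alters no plurality score that matters. The monotone behaviour of plurality scores under candidate additions, together with $k$-locality, is what makes these verifications routine rather than delicate.
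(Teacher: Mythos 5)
Your proof is correct and follows essentially the same route as the paper's: enumerate the candidate $k$-radius neighborhoods of $p$ from $\calS(p)$, use $k$-locality to pin $p$'s score at $t$, and delegate the task of driving every other candidate down to at most $t$ to the dynamic program of Lemma~\ref{l:k-local-dynamic}. The only divergence is organizational --- the paper makes two calls to that lemma (one on the candidates to the left of $\LEFT(D')$ together with $D'$, one on those to the right of $\RIGHT(D')$ together with $D'$) and sums the resulting counts, whereas you make a single call after explicitly partitioning the spoilers into forced, forbidden, and free; your explicit ``forbidden'' set, in particular the clause covering sides of $p$ on which the guessed neighborhood is not full, makes the claim that $p$'s score stays locked at $t$ somewhat more transparent than the paper's left/right split does.
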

\begin{proof}
  Our input instance contains the following elements: (a) an election
  $E = (C \cup A,V)$, where $C = \{c_1, \ldots, c_{m'}\}$ are the
  registered candidates and $A = \{a_1, \ldots, a_{m''}\}$ are the
  spoiler candidates; (b) a candidate $p$ in $C$ (of course, $p$ is
  one of the $c_i$'s but we assign him or her also this special name);
  (c) a nonnegative integer $K$; (d) a societal axis $\sord$ over $C
  \cup A$. We ask if there exists a set $A' \subseteq A$ such that:
  \begin{enumerate}
  \item $\|A'\| \leq K$, and
  \item $p$ is a winner of plurality election $(C \cup A',V)$.
  \end{enumerate}

  We rename the candidates so that $D = C \cup A = \{d_1, \ldots,
  d_m\}$, where $m = m'+m''$ and $d_1 \sord d_2 \sord \cdots \sord
  d_m$. Let $w$ be an integer such that $p = d_w$.  

  The idea of our algorithm is the following: Fix the $k$-radius
  neighborhood of $p$ (so that we fix $p$'s score) and ensure---using
  the algorithm from Lemma~\ref{l:k-local-dynamic}---that the
  remaining candidates have no more points than $p$ has. 

  Our algorithm works as follows.  For each possible $k$-radius
  neighborhood $D'$ of $p$ (i.e., for each $D' \in \calS(p)$) we
  execute the following steps.
  \begin{enumerate}
  \item Set $K_p = \|D' \cap A\|$. ($K_p$ is the number of candidates we
    need to add to ensure that $p$ has exactly $k$-radius neighborhood
    $D'$.)
  \item Set $t = \score_{(D',V)}(p)$ (by $E$'s $k$-locality, $t$ is the
    score of $p$ in any election where the $k$-radius neighborhood of $p$
    is $D'$).
  \item Check how many candidates are needed to ensure that candidates
    ``to the left'' of $p$ do not beat $p$):
    \begin{enumerate}
    \item Set $j'$ to be such that $d_{j'} = \LEFT(D')$.
    \item Set $C_{\mathit{left}} = (\{d_1, \ldots, d_{j'-1}\} \cap C)
      \cup D'$ and set $A_{\mathit{left}} = \{d_1, \ldots, d_{j'-1}\} \cap
      A$.  ($C_{\mathit{left}}$ is the set of all registered
      candidates ``to the left'' of $D'$, union $D'$ (we treat $D'$ as
      already fixed); $A_{\mathit{left}}$ is the set of spoiler
      candidates to the left of $D'$.)
    \item Using Lemma~\ref{l:k-local-dynamic} compute the minimal
      number of candidates from $A_{\mathit{left}}$ that need to be
      added to election $(C_{\mathit{left}},V)$ so that each candidate
      in the resulting election has score at most $t$. Call this number
      $K_{\mathit{left}}$. If it is impossible to achieve the desired
      effect, drop this $D'$.
    \end{enumerate}    

  \item Check how many candidates are needed to ensure that candidates
    ``to the right'' of $p$ do not beat $p$):
    \begin{enumerate}
    \item Set $j''$ to be such that $d_{j''} = \RIGHT(D')$.
    \item Set $C_{\mathit{right}} = (\{d_{j''+1}, \ldots, d_{m}\} \cap
      C) \cup D'$ and set $A_{\mathit{right}} = \{d_{j''+1}, \ldots,
      d_{m}\} \cap A$. ($C_{\mathit{right}}$ is the set of all
      registered candidates ``to the right'' of $D'$, union $D'$ (we
      treat $D'$ as already fixed); $A_{\mathit{right}}$ is the set of
      spoiler candidates ``to the right'' of $D'$.)
    \item Using Lemma~\ref{l:k-local-dynamic} compute the minimal
      number of candidates from $A_{\mathit{right}}$ that need to be
      added to election $(C_{\mathit{right}},V)$ so that each
      candidate in the resulting election has score at most $t$. Call
      this number $K_{\mathit{right}}$. If it is impossible to achieve
      the desired effect, drop this $D'$.
    \end{enumerate}
   \item If $K_p + K_{\mathit{left}} + K_{\mathit{right}} \leq K$ then accept.
  \end{enumerate}
  If the above procedure does not accept for any $D'$ then reject.

  By Lemma~\ref{l:k-local-dynamic} and the fact that $k$ is a fixed
  constant, it is easy to see that this algorithm works in polynomial
  time. The correctness is easy to observe as well.
\end{proof}

We now move on the the case of CCDC for plurality $k$-local elections.

\begin{lemma}
  For each fixed $k$, CCDC for $k$-local plurality elections, where
  the societal axis $\sord$ is given, is in $\p$.
\end{lemma}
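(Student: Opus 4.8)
The plan is to reduce CCDC for $k$-local plurality elections to a pure dynamic-programming problem along the societal axis, in exact parallel with the CCAC algorithm just given. First I would establish the deletion analogue of Lemma~\ref{l:k-local-dynamic}: given a $k$-local plurality election $(C,V)$ with respect to a known $\sord$, a distinguished set $U\subseteq C$ of \emph{undeletable} candidates, and a nonnegative integer $t$, one can compute in time polynomial in $(\|C\|+\|V\|)^k$ the cardinality of a smallest $C'\subseteq C-U$ such that every candidate of $(C-C',V)$ has plurality score at most $t$ (or detect that no such $C'$ exists). Here one uses that $k$-locality with respect to a fixed $\sord$ passes to every sub-election (the definition already quantifies over all sub-candidate-sets), so $(C-C',V)$ is again $k$-local and the score of a surviving candidate $c$ equals $\score_{(N(\sord,C-C',c,k),V)}(c)$, i.e., depends only on the at most $k$ nearest survivors on each side of $c$. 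After padding $C$ on both ends with $k$ always-last-ranked dummy candidates (as in Lemma~\ref{l:k-local-dynamic}), the DP walks $d_1\sord\cdots\sord d_m$ left to right with states $(d_i,D')$, where $d_i$ is a survivor and $D'$ is the intended value of its surviving $k$-radius neighborhood (over which, $k$ being constant, there are polynomially many choices); the recursion is the deletion counterpart of the $f(d_i,D')$ recursion, with the transition from the previous survivor $d_j$ to $d_i$ costing the $i-j-1$ intervening (hence deleted) candidates, legal only when none of those lies in $U$ and when $\score_{(D',V)}(d_i)\le t$, and with $D'$ and the predecessor's neighborhood required to match by a one-survivor shift.

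Given this lemma, the main algorithm is the dual of the CCAC one. Rename $C=\{d_1,\dots,d_m\}$ with $d_1\sord\cdots\sord d_m$ and let $p=d_w$. For each candidate set $D'$ that can serve as $p$'s surviving $k$-radius neighborhood after some deletions (polynomially many, since $k$ is constant), put $t=\score_{(D',V)}(p)$; by $k$-locality this is exactly $p$'s score in every surviving election in which $p$'s $k$-radius neighborhood is $D'$. Delete as forced moves all candidates strictly between $\LEFT(D')$ and $\RIGHT(D')$ along $\sord$ that are not in $D'$ (their number $K_p$ is forced, because ``$D'$ is $p$'s neighborhood'' means exactly the members of $D'$ survive in that stretch); then run the deletion DP on the left block $\{d_1,\dots\}$ up to and including $D'$, with $D'$ marked undeletable and threshold $t$, to get $K_{\mathit{left}}$, and symmetrically on the right block to get $K_{\mathit{right}}$ (dropping this $D'$ if either is infeasible). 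Accept if $K_p+K_{\mathit{left}}+K_{\mathit{right}}\le K$ for some $D'$, and reject otherwise. For ``$\Leftarrow$'', combine the forced deletions and the two DP outputs into a size-$\le K$ deletion set avoiding $p$ for which every survivor has score $\le t=\score(p)$, so $p$ wins (each survivor's neighborhood lies inside one region together with the $D'$ boundary, so the three subproblems are genuinely independent once $D'$ is fixed). For ``$\Rightarrow$'', take a successful deletion $C^*$, set $D'=N(\sord,C-C^*,p,k)$, and check that the forced deletions lie inside $C^*$ while the restrictions of $C^*$ to the two blocks are feasible for the corresponding DP instances at a cost no larger than their share of $\|C^*\|$. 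Polynomial running time is immediate from the polynomially many choices of $D'$ and the polynomial-time DP.

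The delicate point, exactly as in the CCAC proof, is the bookkeeping at the two interfaces between the three regions (strictly left of $p$, $p$'s neighborhood, strictly right of $p$): one must verify that fixing $D'$ genuinely pins down $\score(p)$ and that the left and right DP runs evaluate the scores of the candidates of $D'$ using neighborhoods consistent with their true global neighborhoods — which is why $D'$ is \emph{included in} (not merely placed adjacent to) each block, the padding and the one-survivor-shift compatibility relation doing the rest of the work. Carrying out the deletion analogue of Lemma~\ref{l:k-local-dynamic} in full — getting its recursion, and the interface accounting in the main algorithm, exactly right — is where essentially all of the effort goes; it is a routine but somewhat intricate adaptation of the CCAC development, and I expect no genuinely new idea to be needed beyond what appears there.
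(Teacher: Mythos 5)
Your plan would very likely go through, but it takes a genuinely different and much heavier route than the paper, and the part you defer as ``where essentially all of the effort goes'' is effort the paper shows can be avoided entirely. The paper's proof keeps only your outer loop (enumerate the candidates that will form $p$'s surviving $k$-radius neighborhood, which fixes $p$'s score $t$, and force-delete the candidates lying between them on $\sord$); after that it uses no dynamic programming at all. The key observation you missed is a monotonicity asymmetry between adding and deleting candidates in plurality: deleting a candidate can never \emph{decrease} a surviving candidate's score. Hence if, after the forced deletions, some candidate $c\neq p$ has score exceeding $p$'s, then $c$ must be deleted in \emph{every} successful solution extending those forced deletions (no further deletion can ever bring $c$'s score down), and the same applies recursively to any candidate whose score rises above $p$'s as a result. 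The required deletion set is therefore unique and is computed by a trivial greedy loop---``while some survivor beats $p$, delete it''---after which one just compares its size to $K$. This is why CCAC genuinely needs the DP of Lemma~\ref{l:k-local-dynamic} (there one has real combinatorial freedom in which spoilers to add in order to depress scores) while CCDC does not: for deletion there is nothing to optimize. Your deletion-analogue DP, your undeletable set $U$, and your left/right interface bookkeeping would all compute the cardinality of this same forced set by a far more roundabout path, and since you have not actually carried out the recursion or its correctness argument, your write-up as it stands leaves its main lemma unproved; the simpler observation above closes the argument without it.
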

\begin{proof}
  Let $E = (C,V)$ be our input election, $p$ be the preferred
  candidate, and $K$ be a nonnegative integer. Our goal is to determine
  if it is possible to ensure that $p$ is a winner by deleting at most $K$
  candidates. Let $\sord$ be the input societal axis with respect to
  which $E$ is $k$-local.

  We rename the candidates in $C$ so that $C = \{\ell_{m'}, \ldots,
  \ell_1, p, r_1, \ldots, r_{m''}\}$ and $\ell_{m'} \sord \cdots \sord
  \ell_1 \sord p \sord r_1 \sord \cdots \sord r_{m''}$. Recall that by
  definition of $k$-local plurality elections, the score of $p$
  depends only on the presence of $k$ candidates ``to the left of
  $p$'' and $k$ candidates ``to the right of $p$'' (with respect to
  $\sord$). Our algorithm works as follows:
  \begin{enumerate}
  \item For each size-$\min(k,m')$ subset $L$ of $\{\ell_1, \ldots, \ell_{m'}\}$
    and each size-$\min(k,m'')$ subset $R$ of $\{r_1, \ldots, r_{m''}\}$ execute
    the following steps:
    \begin{enumerate}
    \item Let $i$ be the largest integer such that $\ell_i \in L$ and
      let $j$ be the largest integer such that $r_j \in R$.
    \item Let $D = \{\ell_i, \ldots, \ell_1, r_1, \ldots r_j\}
      - (L \cup R)$ (at this point, intuitively, $D$ is the
      unique smallest set of candidates that one has to delete
      from $C$ to ensure that the $k$-radius neighborhood of $p$ is
      exactly $L \cup R$).
    \item Execute the following loop: If there is a candidate $c \in C
      - D$, $c \neq p$, such that the score of $c$ in $(C
      - D,V)$ is higher than that of $p$, then add $c$ to $D$.
    \item If $\|D\| \leq K$ then accept.
    \end{enumerate}
  \item Reject.
  \end{enumerate}
  Since $k$ is a constant, there are only polynomially many pairs of
  sets $L$ and $R$ to try. Thus, it is easy to see that the algorithm
  runs in polynomial time. To see the correctness, it suffices to note
  the following two facts. First, the score of $p$ depends only on the
  $k$-radius neighborhood of $p$. Second, it is impossible to decrease
  a score of a candidate by deleting (other) candidates, so if for a
  given $k$-radius neighborhood of $p$ some candidates still have
  score higher than $p$, the only way to ensure that they do not preclude
  $p$ from winning is by deleting them.\footnote{Note that in the process of
    doing so we might change the $k$-radius neighborhood of $p$, but
    that does not affect the correctness of the algorithm.}
\end{proof}

\subsection{Proof of Theorem~\ref{t:plurality-caved}}
\newcommand{\calD}{{{\mathcal{D}}}}

The following lemma will be very useful in proving
Theorem~\ref{t:plurality-caved}.

\begin{lemma}\label{l:s-c}
  Let $E = (C,V)$ be an election where $C = \{c_1, \ldots, c_m\}$ and
  where voters in $V$ are single-caved with respect to societal axis
  $c_1 \sord c_2 \sord \cdots \sord c_m$. Then each voter in $V$ ranks
  first either $c_1$ or $c_m$.
\end{lemma}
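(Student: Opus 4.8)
The plan is to argue by contradiction, directly unwinding the definition of single-caved preferences. Recall that $V$ being single-caved with respect to $\sord$ means that for every triple of candidates $x,y,z$ with $x \sord y \sord z$ (or the reverse), every voter $v$ satisfies $y\,P_v\,x \implies z\,P_v\,y$. Suppose, for contradiction, that some voter $v \in V$ ranks first a candidate $c_t$ with $1 < t < m$. Since $2 \le t \le m-1$, both $c_{t-1}$ and $c_{t+1}$ exist, and because $v$ ranks $c_t$ first we have $c_t\,P_v\,c_{t-1}$ and $c_t\,P_v\,c_{t+1}$.

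The key step is to instantiate the single-caved condition on the triple $(c_{t-1}, c_t, c_{t+1})$, where $c_{t-1} \sord c_t \sord c_{t+1}$ and $c_t$ is the middle candidate. This yields the implication $c_t\,P_v\,c_{t-1} \implies c_{t+1}\,P_v\,c_t$. The antecedent $c_t\,P_v\,c_{t-1}$ holds (as noted above), so we obtain $c_{t+1}\,P_v\,c_t$. But we also have $c_t\,P_v\,c_{t+1}$, and since each vote is a linear (tie-free total) order, $c_{t+1}\,P_v\,c_t$ and $c_t\,P_v\,c_{t+1}$ cannot both hold. This contradiction shows that no voter ranks an ``interior'' candidate first, so every voter must rank either $c_1$ or $c_m$ first. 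The cases $m \le 2$ are trivially handled, since then there is no interior candidate at all.

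I do not anticipate a genuine obstacle here; the only point requiring care is matching the direction of the single-caved implication to the orientation of $\sord$ (one wants the ranked-first candidate in the \emph{middle} slot of the triple) and confirming that the neighboring candidates used to form the triple actually exist, which holds precisely because $c_t$ is assumed interior. As an alternative to using the adjacent candidates $c_{t-1}$ and $c_{t+1}$, one could equally well use the extreme candidates $c_1$ and $c_m$ to form the triple $(c_1, c_t, c_m)$ and run the identical argument; both work, and I would present whichever reads more cleanly in context.
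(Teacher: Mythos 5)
Your proof is correct and takes essentially the same approach as the paper: assume an interior candidate is ranked first and instantiate the single-caved implication on a triple with that candidate in the middle to derive a contradiction. The paper happens to use the triple $(c_1,c_t,c_m)$, which is exactly the alternative you note at the end; the choice of triple is immaterial.
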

\begin{proof}
  Assume for the sake of contradiction that there is $c_i \in C$, $i
  \notin \{1,m\}$, such that some voter $v$ in $V$ ranks $c_i$ first.
  However, it holds that $c_1 \sord c_i \sord c_m$. Since, by
  assumption, $v$ prefers $c_i$ to $c_1$, by definition of
  single-cavedness it holds that $v$ prefers $c_m$ to $c_i$. This is a
  contradiction.
\end{proof}

With Lemma~\ref{l:s-c} in hand, we can now prove
Theorem~\ref{t:plurality-caved}.

\begin{theorem4x9}
CCAC and CCDC for plurality
elections are in $\p$ for 
single-caved societies.
\end{theorem4x9}
\begin{proof}%
  Let us consider the CCAC case first. Let $(C,A,V,p,K)$ be our input
  instance of CCAC for plurality, where votes in $V$ are single-caved
  with respect to a given societal axis $\sord$.
  We assume that
  $\|C\| \geq 2$ (otherwise $p$, the only candidate, is already a
 winner).

  Let $a$ be some candidate in $A$. We claim that if $p$ is not a
  winner of election $E = (C,V)$ then $p$ is not a winner of election
  $E' = (C \cup \{a\}, V)$.  First, adding $a$ cannot increase $p$'s
  plurality score. Thus, if $p$'s plurality score in $E$ is $0$ then
  it is $0$ in $E'$ as well and $p$ is not a winner in either of them.
  Thus, let us assume that there is at least one voter that prefers
  $p$ to all other candidates in
$C$. This means that we can assume, without
  loss of generality, that there is a candidate $d \in C$ such that
  for each candidate $c \in C - \{p,d\}$ it holds that $p
  \sord c \sord d$. By Lemma~\ref{l:s-c}, $p$ and $d$ are the only
  candidates in election $E$ whose plurality score is nonzero.
  We assume that $p$ does not win in $E$, so the score of $p$ is
  smaller than the score of $d$. We now consider three possible
  cases, depending on $a$'s position on the societal axis.
  \begin{enumerate}
  \item If $a \sord p$ then it is easy to note that in every vote in
    which $p$ was ranked first prior to adding $a$, now $a$ is ranked
    first, and so $p$ is not a winner of the election.
  \item If $d \sord a$ then $a$ is ranked first in each vote in which
    $d$ was ranked first prior to adding $a$, and so now $p$ loses to
    $a$.
  \item If $p \sord a \sord d$ then adding $a$ to the election does
    not change plurality scores of $p$ and $d$ and thus $p$ still
    loses to $d$.
  \end{enumerate}
  Thus, by induction on the number of added candidates, it is
  impossible to move $p$ from losing an election to winning it by
  adding candidates. Our CCAC algorithm simply checks if $p$ is a
  winner already, accepts if so and rejects otherwise.

  Let us now consider the case of CCDC for plurality and single-caved
  societies. Let $(C,V,p,K)$ be our input instance where votes in $V$
  are single-caved with respect to given societal axis $\sord$.  Let
  us rename the candidates so that $C = \{c_1, \ldots, c_m\}$, $c_1
  \sord \cdots \sord c_m$, and let us fix $i$ such that $p = c_i$.
  Let $\calD = \{ \{c_1, c_2 \ldots, c_{i-1}, c_j, \ldots, c_m\} \mid
  j > i\} \cup \{ \{c_1, \ldots, c_k,c_{i+1},c_{i+2}, \ldots, c_m\}
  \mid k < i\}$. By Lemma~\ref{l:s-c} and the definition of
  single-cavedness, it is easy to see that $p$ can become a winner of
  election $(C,V)$ by deleting at most $K$ candidates if and only if
  $V = \emptyset$ or
  there is a set $D$ in $\calD$ such that $\|D\| \leq K$ and $p$ is a
  winner of election $(C - D,V)$.
\end{proof}

\section{Proof from Section~\ref{sec:bribery}}

We provide the proof of 
Section~\ref{sec:bribery}'s theorem.

\newtheorem*{theorem5x1}{Theorem~\ref{t:approval-bribery}}

\begin{theorem5x1}
\bodyoffivepointone
\end{theorem5x1}

\begin{proof}
  We first prove in detail the ``bribery'' case (the first of the three
  types of bribery that the theorem covers), in both its marked-model
  and standard-model cases.  

  Let us look first at the marked model.  In this case, 
  much as in the proof of Theorem~\ref{t:stcp}, we will note that an
  earlier paper is implicitly obtaining a stronger result than what its
  theorem states, and then we will use that observation to build a
  disjunctive truth-table reduction from our problem to that problem.
In this case, the earlier paper is not
  \citet{fal-hem-hem-rot:j:single-peaked-preferences} as it was in
  Theorem~\ref{t:stcp}, but rather is
  \citet{bra-bri-hem-hem:c:sp2}.  The result of theirs that we focus
  on is their theorem stating that approval bribery is in polynomial time
  for single-peaked societies.  This is ``Theorem 4'' of
  \citet{bra-bri-hem-hem:c:sp2}, but for its proof/algorithm, one
  needs to refer to Appendix A.2 of that paper's 
  technical report version~\citep{bra-bri-hem-hem:t:sp2}.
  Now, by inspection of that proof, one can see that the algorithm
  given there does not need all the voters it operates on to respect the
  societal ordering.  Rather, it can handle perfectly well the case
  where each voter has an ``open to bribes'' flag, and every voter
  whose open to bribes flag is set respects the single-peaked
  ordering.  (In that algorithm, as modified to handle this, the
  surpluses 
  are computed with respect to all the voters---both those with
  the flag set and those with the flag unset.
  But then the pool of voters that the
  algorithm looks at to try to find a good bribe is limited to just
  those with the open to bribes flag set, although the surplus 
  recomputations throughout the algorithm are always with respect
  to the entire set of voters.  This is a slight extension of 
  the algorithm, but is clearly correct, for the same reasons
  the original algorithm is.)
  Note that the
  algorithm can bribe voters whose open to bribes flag is set, but
  (by the nature of the algorithm) will only bribe them to values
  consistent with the societal order.  Call the language problem 
  defined by this FlagBribe.

  Having made the previous paragraph's observation, we can now
  disjunctive truth-table reduce to FlagBribe---which is 
  put into polynomial time by the above algorithm (due to
  \citet{bra-bri-hem-hem:c:sp2}, except slightly adapted as 
  just mentioned).  We do so as follows.  Suppose our
  logarithmic bound is given.  Given an input to our problem, we check
  that the number of voters with the maverick-enabled flag (not to be
  confused with the open to bribes flag mentioned above) set
  does not exceed the logarithmic bound; if it does, reject
  immediately.  Otherwise, for each of member $A$ of the powerset of
  the set of maverick-enabled voters (i.e., for each choice of which
  of the maverick-enabled voters we will bribe), we will generate at
  most one instance of FlagBribe 
  as follows.  If $\|A\| > K$, generate no instance.
  (The number of voters being bribed would exceed the problem's bound.)
  Otherwise, generate an instance of FlagBribe that is the same set of voters as
  our instance, except with the members of $A$ modified to each
  approve of $p$ and only $p$.  The voters who in our original problem
  were not maverick-enabled will all have their open to bribes flag
  set.  All others will have their open to bribes flag unset.
  Set $K$ to now be $K - \|A\|$.

  So, since there are a logarithmic number of maverick-enabled voters,
  the powerset above is polynomial in size, and we generate a
  polynomial number of (polynomial-sized) instances of FlagBribe.  
  It is clear that our original problem has a successful bribe exactly
  if at least one of those instances has a successful bribe (i.e.,
  belongs to FlagBribe).  This is so, due to the properties of the
  algorithm underlying FlagBribe, and the fact that if there is a
  bribe of $K$ voters that makes $p$ a winner, then
the same bribe action except with any subset of them instead bribed to
approve only of $p$
  will also make $p$ a
  winner.  Changing a voter to approve of $p$ and only $p$ is a best
  possible bribe of that voter, if the voter will be bribed at all.

  That concludes the marked model case for bribery.  We turn now
  to the standard model case.  In this case, each voter can 
  potentially turn into a maverick.  And so with an
  $\bigo(\log(\mathrm{ProblemInputSize}))$ bound on the number of 
  mavericks, as long as the bribery problem itself has a 
  generously large $K$, to even decide which voters to make into
  mavericks would seem to involve 
  $ \|C\| \choose \bigo(\log(\mathrm{ProblemInputSize}))$
  options---superpolynomially many, which potentially is a worry
  if they can be turned into complex mavericks.  But we are 
  again saved here by the fact that any good bribe of a voter is 
  at least as good if one just bribes that voter to approve only $p$
  (and clearly that 
  vote is also inherently consistent with the societal ordering).
  That means that if there is a good set of bribes, then there is 
  a good set of bribes that never bribes people to vote in ways 
  that are inconsistent with the societal order.  But given that,
  we can turn this case into our marked-model case.
  (The calls to FlagBribe that will underlie the handling of that 
  case indeed also may present problems with
  superpolynomial numbers of options as to 
  which voters to bribe.  But due to the single-peakedness-respecting
  nature of all voters who are open to bribes, that can be handled
  easily---that is the real power of FlagBribe's underlying 
  algorithm: it uses single-peakedness
  to tame combinatorial explosion.)

  In particular, we can proceed here as follows.  Take our input.
  Reject if the number of voters who are inconsistent with 
  the societal ordering conflicts with our logarithmic bound.
  Otherwise, have the maverick-enabled flag be set for each 
  voter who violates the societal ordering and have the 
  maverick-enabled flag be unset for all other voters.  Keep the $K$
  parameter the same as it originally was.  And then solve
  this marked-model case as described above.  This works, due to the
  comments of the previous paragraph.

  That covers in detail the case of bribery.  The remaining two cases,
  negative-bribery and strongnegative-bribery, are similarly proven by
  noting that one can alter the algorithms from
  \citet{bra-bri-hem-hem:c:sp2} for those two cases, and by noting that if
  there is a good bribe in these models, then there is a good bribe
  where no bribed voter will have any approval set other than either
  ``just $p$'' or the empty set.
\end{proof}

\end{document}